\theoremstyle{plain}
\newtheorem*{acknowledgement*}{Acknowledgement}
\newtheorem{definition}{Definition}
\numberwithin{equation}{section}
\newtheorem{theorem}{Theorem}[section]
\newtheorem{lemma}[theorem]{Lemma}
\newtheorem{proposition}[theorem]{Proposition}
\newtheorem{corollary}[theorem]{Corollary}
\theoremstyle{remark}
\newtheorem{remark}[theorem]{Remark}
\newtheorem{example}[theorem]{Example}
\numberwithin{equation}{section}
\newcommand{\Dsxy}{\Delta^{\mathfrak{S} \times \mathfrak{X} \times \mathfrak{Y}}}
\newcommand{\E}{\mathbb{E}}
\newcommand{\Prob}{\mathbb{P}}
\title{How do correlations shape the landscape of information?}
\author{Huang Ching-Peng}
\date{ }
\begin{document}

\maketitle

\begin{abstract}
    We explore a few common models on how correlations affect information. The main model considered is the Shannon mutual information $I(S:R_1,\cdots, R_i)$ over distributions with marginals $P_{S,R_i}$ fixed for each $i$, with the analogy in which $S$ is the stimulus and $R_i$'s are neurons. We work out basic models in details, using algebro-geometric tools to write down discriminants that separate distributions with distinct qualitative behaviours in the probability simplex into toric chambers and evaluate the volumes of them algebraically.  Some algebro-geometric structure suitable for the framework is explored. 
    
    We hope this paper serves for communication between communities especially mathematics and theoretical neuroscience on the topic.
    \vspace{0.8 cm}

    KEYWORDS: information theory, algebraic statistics, mathematical neuroscience, partial information decomposition
    
\end{abstract}

\noindent\textbf{Acknowledgements} The authour would like to thank Prof Stefano Panzeri for suggesting initial questions and discussions on the relationship between the two information decompositions that lead to this project, as well as generous support for the position, Marco Celotto for several sessions of in-depth meetings, and Simone Blanco Malerba for suggesting related articles.

\tableofcontents

\section{Introduction: battles between synergy and redundancy} 

Given three random variables $S, X, Y$, independence of the pairs 
\begin{equation} \label{marg indep}
    X\perp S \text{ and } Y \perp S,
\end{equation} as we know, does not imply the joint $(X,Y)$ is independent of $S$. If we consider Shannon's mutual information, how does the ``interaction'' between $X$ and $Y$ exert different amount of $I(S:X,Y)$?  In Fig. \ref{diag 0}, we see the landscape of this mutual information on the domain of all possible distributions satisfying the condition \ref{marg indep}. Conceptually, since $I(S:X)$ and $I(S:Y)$ are zero, we want to think of $I(S:X,Y)$ the information purely coming from the \textit{second order interaction of $X,Y$ about $S$}. 

\begin{figure}\label{diag 0}
    \centering
    \includegraphics[width=0.5\linewidth]{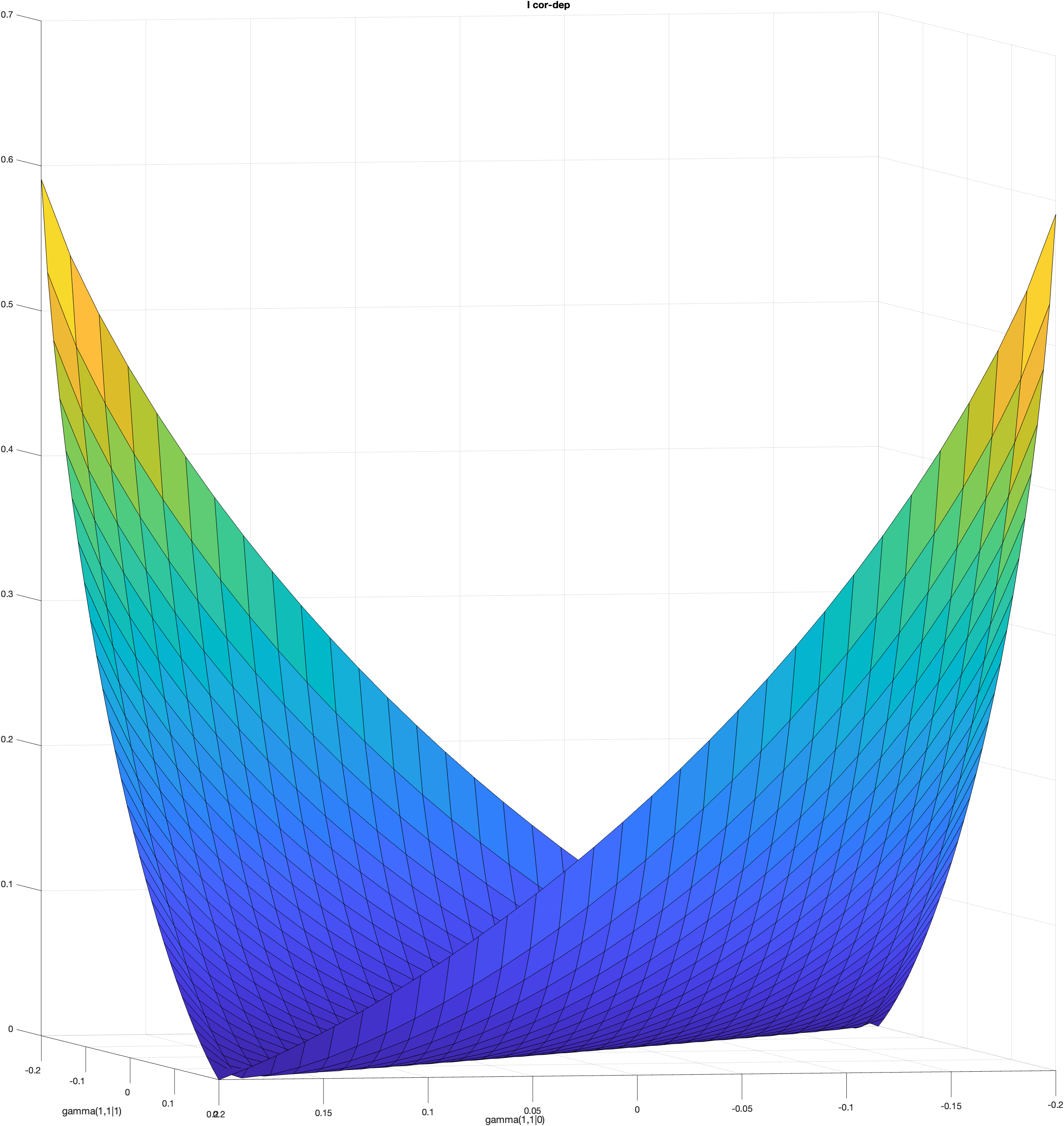}
    \caption{Landscape of mutual information when the marginals are independent.}
    \label{fig:enter-label}
\end{figure}

The analogy we have in mind is of neuroscientific nature. If an experimental neuroscientist observes neurons $X$ and $Y$ separately, they would never have information about the stimulus $S$. This effect, that there is information that would otherwise be unknown without observing both, is what one want to call the \textit{synergy}. 

Among theoretical neuroscientists, there have been (heated) debates on how to quantify synergy between two random variables (e.g. neurons) about a third one (e.g. stimulus) under the framework of Shannon's information theory, given the physical interpretation that information can be quantified with bits and has set-function-like properties. Several initial guesses (add citations) all turn out to have counter-intuitive properties. (See \cite{schneidman2003synergy} for a history review.) This problem eventually evolves into the \textit{Partial Information Decomposition (PID)} problem asking to write down the mutual information in question, $I(S:X,Y)$ as a sum (considered as the analogue of disjoint union of sets) of \textit{synergy}, \textit{redundancy} between $X$ and $Y$ (analogous to set intersection), and \textit{unique information} from $X$ and $Y$ respectively about $S$ (analogous to set symmetric difference) that satisfy expected properties to be taken as axioms. In the literature, $X$ and $Y$ are referred to as the \textit{sources} and $S$ the \textit{target}. The initially proposed axioms in \cite{williams2010nonnegative} leaves one freedom for the choice of the decomposition and thus initiated the design of several candidates. 

After years more of trial and error by several groups, a now-popular choice of PID was proposed in \cite{bertschinger2014quantifying}, commonly known as the \textit{BROJA} PID by the initials of the authors, which satisfies an operational property about the unique information and has advantage such as nonnegativity, which preserves the mental image that bits of information have set-like behaviours, and has gained been used in many practical studies, albeit still unsatisfactory such as the lack of direct generalisation beyond bivariate cases. To compute the decomposition, one needs to solve a convex optimisation problem over an affine subspace, $\Delta_P$, of the probability simplex for $(S,X,Y)$. In this paper, we see how studying this optimisation domain leads (back) to the question: how does interaction between neurons or neural populations affect the amount of information they give?

In neuroscience, interactions are often considered from two angles \cite{latham2005synergy}: the \textit{unconditional} dependence $P(X,Y)$ and the \textit{conditional} dependence $P(X,Y|S)$. The former is also known as the \textit{signal correlation}, where as the latter is called the \textit{noise correlation}. The names might not meet modern understanding anymore, but they remain widely accepted. However, most theoretical studies from the neuroscience community stay in the community and consider only small sets of restricted scenarios or local approximations, sometimes even lacking exact quantitative formulations. One main purpose of this paper is present and examine existing models. In particular, in Section \ref{section landscape}, we present observations of interplay between aforementioned correlations on Shannon's mutual information by working through basic cases based on \cite{rauh2019properties}, and in Section \ref{section GMM}, we see, in a manner of soft discussions, how these results draw parallel with Gaussian mixture models. 

In section \ref{section AG}, we explain the algebro-geometric nature of the model and how classical theorems such as the B\'ezout theorem are useful for the model. Furthermore, in Section \ref{section discriminant}, for the toy model, we write down the exact algebraic \textit{discriminants} that separate distributions of distinct behaviours, computing the exact percentage of them. To be more specific, we need to find hyperplanes that intersect \textit{some} Segre variety in a line. By doing so, we find descriptions of probability simplices more suitable for product structures (i.e. taking joint distributions) and mixture models. 

% Alg stat on info. Although implicitly in PoUI. See also Malago' Pistone, Algebraic and Geometric Methods in Statistics

% % we are only a bit more than PoUI but have more observation

% From a slightly different angle, we are looking at mixture models and implicitly classification problems. 
As a by-product, in section \ref{section PID}, we see the decomposition of mutual information proposed in \cite{panzeri1999correlations}\cite{pola2003exact} in fact coincides with the BROJA PID in certain principals, how they are different, and provide a ``translation'' between the two. 

We hope this paper serves as a communication between communities by introducing considerations from theoretical neuroscience and examining them from the mathematical point of views.

\section{Landscape of mutual information}\label{section landscape}
    Keep the neurons and stimulus analogy in mind. We ask how the unconditional distribution $P_{X,Y}$ and the conditional distribution $P_{X,Y|S}$ of the two neurons affect the observer (e.g. downstream neurons) to discern which stimulus $S = i$ was presented. Trusting information theory, the mutual information $I(S:X,Y)$ is the ``quantity of information'' that is conveyed through the channel. 

    As mentioned in the introduction, correlations (more concretely, under some measurement) from $P_{X,Y}$ and $P_{X,Y|S}$ are often referred to as the signal and noise correlations respectively in neuroscience. For experimentalists, it is instinctive to consider the Pearson correlation or other similar measurements for \textit{real} random variables. However the Pearson correlation does not always distinguish statistic/stochastic dependence from independence. Information theory instead deals with abstract distributions, and realisations of distributions as real random variables are extra structures not intrinsic to information theory. For example, two joint distributions on a product of state spaces of real numbers can have Pearson correlations of opposite signs but the same mutual information, because as abstract distributions, they are identical up to permutation. What parametrisation is useful is a central topic in information geometry established by Amari. (C.f. \cite{amari2000methods}.) Disregarding this principal could result in extra assumptions. %  paragraphs oriented for audience not familiar with info theory. 
    
   When finishing this manuscript, the paper \cite{hu2014sign} was recommended to the authour. The stories are very similar yet complementary. A reading of it is highly recommended to the reader who finds interests in this paper as well. 
    \vspace{0.5cm}

    We start with elementary examples that however contain some of the core ideas we shall encounter over and over again.
    \begin{example}\label{ex I(X:Y)}
        Let $X$ and $Y$ be two binary r.v.s. with state spaces  $\mathfrak{X} = \{x_1,x_2\}$ and $\mathfrak{Y} =  \{y_1, y_2\}$. The mutual information is a function(al) on the space of probability distributions over the state space $\mathfrak{X}  \times \mathfrak{Y}$. Instead of appointing a ``true'' distribution, we consider a family of joint probability parametrised by a parameter $t\in \mathbb{R}$, written in a matrix form that represents the ``table'' of the joint distribution: 
        \begin{equation}
            [P^t(x,y)]_{x\in \mathfrak{X},\; y \in \mathfrak{Y}}   = \begin{bmatrix}
                a & b \\
                c & d
            \end{bmatrix} + t\begin{bmatrix*}[r]
                1 & -1 \\
                -1 & 1
            \end{bmatrix*},
        \end{equation} where $a,b,c,d > 0$ and sum to $1$. \begin{equation}
            -a, -d \leq t \leq c, b
        \end{equation} so that within the bounds, $P^t$ remains nonnegative and a valid probability distribution. We denote this interval 
        \begin{equation}
            D = [-\min(a,d), \min(b,c)] 
        \end{equation} and also \begin{equation}
            D_+ = (-\min(a,d), \min(b,c))
        \end{equation} for the inteior. In the following, we identify $P^t$ with the matrix form whenever it is clear in the context to ease the notation.

        Note that the marginals $P^t(x)$ and $P^t(y)$ are constant for all $t$. Hence the mutual information $I_{P^t}(X:Y)$, considered as a function $I(P^t)$ on $D$, is convex in the linear parameter $t$. By convexity and the fact that $I$ is analytic in the interior of $D$, either there exists a unique minimum on $D$ or $I$ is constant. 

        By direct calculations, \begin{equation}
            \frac{dI}{dt} = \log \frac{(a+t)(d+t)}{(b-t)(c-t)} \text{ for } t\in D_+,
        \end{equation}  and
        \begin{equation}
            \frac{dI}{dt} = 0 \text{ iff } \det P^t = \det P^0 + t = 0.
        \end{equation} For the joint distribution $P^t$, the determinant is zero if and only if $X\perp_{P^t} Y$. Therefore, $I$ has a unique minimum in the interior if and only if the family $P^t$, contains such distribution within the constraints determined by $P^0$ (or equivalently, any $P^t$), i.e. $\det P^0 \in \overset{\circ}{D}$. Otherwise, the unique minimum occurs at the boundary of $D$. 
        
        One can easily see both situations are possible. E.g. taking $P^0$ to be rank one with nonzero entries, we have that $P^0$ is the unique interior minimum of $I$. On the other hand, if \begin{equation}
            [P^0] = \begin{bmatrix}
                \frac{3}{4} & \frac{1}{16} \\
                \frac{1}{16} & \frac{1}{8}
            \end{bmatrix},
        \end{equation} then $\det P^0 = \frac{23}{256} \notin D = [-\frac{1}{8}, \frac{1}{16}]$. 
    \end{example}

    % Justify why we restrict to $\Delta_P$ first.
    Back to the example \ref{diag 0} in the introduction, we are interested to see how the mutual information $I(S:X,Y)$ can change according to the ``interaction'' between $X$ and $Y$ beyond the marginal distributions $P_{S,X}$ and $P_{S,Y}$. More specifically, if we fix the marginal distributions $P_{S,X}$ and $P_{S,Y}$, what can we say about $I(S:X,Y)$ as a function of the joint distribution $P_{S,X,Y}$ satisfying the marginal constraints? To do so, we need to parametrise the domain of such distributions, analogous to $D$ in the example above. Note that in the example, the set $\Delta := \{P^t|t \in D\}$ remains the same with any choice $P \in \Delta$ as $P^0$. (The space for the paramestrisation $D$ is not invariant.) 

    % Before we dip into the main model, let's see another example where computations are more straightforward.
    \begin{example}
        Let $S, X, Y$ be (multinomial) Gaussians with covariance matrix 
        \begin{equation}
            \Sigma = \begin{bmatrix}
                \Sigma_S & C_{SX}^T & C_{SY}^T \\
                C_{SX} & \Sigma_X & C_{XY}^T \\
                C_{SY} & C_{XY} & \Sigma_Y
            \end{bmatrix}
        \end{equation} with $\Sigma_{\bullet}$ denoting the covariance matrix of a r.v. and $C_{\bullet\bullet}$ the covariance of each pair. $\Sigma_{XY}$ denotes the covariance of $(X,Y)$, i.e. the bottom right of $\Sigma$.
        
        Assume $C_{SX}$ and $C_{SY}$ are fixed. We want to see how the mutual information 
        \begin{equation}
            I(S:X,Y) = \log\frac{\det\Sigma_S\det\Sigma_{XY}}{\det\Sigma}.
        \end{equation} varies with $C_{XY}$.

        Consider the $1$-dimensional case for example. To ease the notations, write 
        \begin{equation}
            \Sigma = \begin{bmatrix}
                a & d & e\\
                d & b &t\\
                e & t & c
            \end{bmatrix}
        \end{equation} so 
        \begin{equation}
            I = \log\frac{a(bc-t^2)}{abc + 2det - be^2 - at^2 -cd^2}.
        \end{equation} Hence 
        \begin{equation}
            \frac{dI}{dt} = \frac{2t}{t^2-bc} + \frac{2(at - de)}{abc + 2det - be^2 - at^2 -cd^2}.
        \end{equation} Direct computation gives that the derivative is zero if 
        \begin{equation}
            g(t) := (de)t^2 + (be^2 + cd^2)t + bcde = 0,
        \end{equation} which has two roots \begin{equation}
            t = \frac{be}{d} \text{ or } \frac{cd}{e}.
        \end{equation}

        By Sylvester's criterion, we may assume $a>0$ and $\det\Sigma_{SX} = ab - d^2 > 0$, and the constraint for $f$ is then a quadric
        \begin{equation}
            \det\Sigma = (abc - be^2 -cd^2) + 2(de)t  - at^2  > 0,
        \end{equation} giving a bounded open interval $D$. 
        
        For example, taking 
        \begin{equation}
            \Sigma = \begin{bmatrix}
                1 & 0.5 & 0.5\\
                0.5 & 1 &t\\
                0.5 &t& 1
            \end{bmatrix},
        \end{equation} one can check that the roots of $g(t)$ does not lie in $D$, and so $I$ can only attain minimum at the boundary. Meanwhile, taking $S\perp X$ and $S\perp Y$, or $d = e = 0$, $I = 0$ at $t = 0$, which lies in the interior of the domain $f \in D = (-1,1)$. The exact conditions for each case to happen is straightforward to work out, but we only give examples for the sake of illustration.    
    \end{example}

\subsection{Over the domain of correlations}
    \textbf{Set up and convention}: we fix the finite state spaces $\mathfrak{S}$, $\mathfrak{X}$, and $\mathfrak{Y}$ for r.v.s $S,X,Y$ resp. We write $I_Q$ or $I(Q)$ for a distribution $Q$ if it is clear from the context. Moreover, since we do not appoint a ``true'' distribution, $P$ only refers to the marginals $P_{S,X}$ and $P_{S,Y}$, and any joint distribution in $\Delta^{\mathfrak{S} \times \mathfrak{X} \times \mathfrak{Y}}$ will be denoted $Q$. Some examples in neuroscience are those when $S$ is the stimulus, $X$ and $Y$ two neurons, or two neural ensembles presented as neural codes or averaged firing rates.
    
    \begin{definition}
        Given $P$, we consider the \textbf{correlation domain}, \begin{equation}
        \Delta_P = \{Q \in \Delta^{\mathfrak{S} \times \mathfrak{X} \times \mathfrak{Y}}| Q_{S,X} = P_{S,X} \text{ and } Q_{S,Y} = P_{S,Y}\},
        \end{equation} which (under the usual embedding of the probability simplex in Euclidean spaces) is defined by linear constraints. In fact it is a convex domain. (C.f. \cite{bertschinger2014quantifying}\cite{rauh2019properties}) Its relative interior is denoted $\Delta_{P+}$. 
    \end{definition}

    \begin{definition}
        Given $P\in \Delta^{\mathfrak{S} \times \mathfrak{X} \times \mathfrak{Y}}$, the \textbf{shuffle distribution} is defined for all states of $(S,X,Y)$ such that \[
            Q^0(P)(s,x,y) = P(s)P(x|s)P(y|s).
        \]
    \end{definition} 

    We can immediately see some properties of $Q^0$.
    \begin{lemma} Let $P$ be a marginal distribution as the above.
    
        \begin{enumerate}
            \item $Q^0(P) \in \Delta_P \neq \emptyset$.
            \item $Q^0(Q)\in\Delta_P \forall Q \in \Delta_P$.
            \item $X\perp Y | S$ at $Q^0$.
        \end{enumerate}
    \end{lemma}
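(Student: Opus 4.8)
The plan is to dispatch the three claims in order, noting that parts (2) and (3) both reduce to the short computation behind part (1). First I would make explicit the standing consistency hypothesis hidden in the phrase ``marginal distribution as the above'': $P_{S,X}$ and $P_{S,Y}$ are required to have a common $S$-marginal, which I denote $P_S$, and I adopt the usual convention that $P(x\mid s)$ and $P(y\mid s)$ are assigned an arbitrary fixed stochastic completion whenever $P_S(s)=0$, so that the product $P(s)P(x\mid s)P(y\mid s)$ is unambiguous and simply vanishes on such $s$. With these conventions in place the statement of the lemma is well posed.

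For part (1), I would check three things about $Q^0(P)$. Nonnegativity is immediate since each of the three factors is nonnegative. Normalisation follows by summing over $y$ first, then $x$, then $s$, using $\sum_{x}P(x\mid s)=\sum_{y}P(y\mid s)=1$ and $\sum_{s}P(s)=1$. The marginal conditions are the key point: $\sum_{y}Q^0(P)(s,x,y)=P(s)P(x\mid s)=P_{S,X}(s,x)$, and symmetrically $\sum_{x}Q^0(P)(s,x,y)=P_{S,Y}(s,y)$. Hence $Q^0(P)\in\Delta_P$, and in particular $\Delta_P\neq\emptyset$.

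For part (2), the observation is that $Q^0$ depends on its argument only through the $S$-marginal and the two conditionals $P(x\mid s)$, $P(y\mid s)$. If $Q\in\Delta_P$, then $Q_{S,X}=P_{S,X}$ and $Q_{S,Y}=P_{S,Y}$ force $Q_S=P_S$ and $Q(x\mid s)=P(x\mid s)$, $Q(y\mid s)=P(y\mid s)$ wherever $P_S(s)>0$ (the completion convention covers the remaining $s$). Therefore $Q^0(Q)=Q^0(P)$, so part (1) applies; in fact $Q^0$ is constant on $\Delta_P$. For part (3), I would compute the conditional law at $Q^0=Q^0(P)$: for $s$ with $P_S(s)>0$ we get $Q^0(x,y\mid s)=P(x\mid s)P(y\mid s)$, and marginalising yields $Q^0(x\mid s)=P(x\mid s)$ and $Q^0(y\mid s)=P(y\mid s)$, whence $Q^0(x,y\mid s)=Q^0(x\mid s)Q^0(y\mid s)$, which is exactly $X\perp Y\mid S$ at $Q^0$.

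I do not expect a genuine obstacle here; the only place that rewards care is the bookkeeping around conditioning events of probability zero together with the implicit compatibility of the two prescribed bivariate marginals, which is why I would fix both conventions at the outset. Everything else is a one-line Fubini-type rearrangement of finite sums.
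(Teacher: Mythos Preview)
Your proof is correct and is exactly the computation the paper has in mind; the paper itself simply asserts the lemma as immediate (``We can immediately see some properties of $Q^0$'') and gives no argument. Your extra care about the $P_S(s)=0$ convention and the implicit compatibility of the two prescribed bivariate marginals is appropriate bookkeeping that the paper leaves tacit.
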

    In short, shuffle distributions are in 1-1 correspondence with correlation domains. We call $Q^0(P)$ the \textit{shuffle distribution of $P$ or of $\Delta_P$}. We say $\Delta_P$ has \textit{full support} if $Q^0$ does, i.e. nonzero on all states.
    
    There are at lease two occasions in which the shuffle distribution shows up in practice: when it is not possible to observe both (or multiple) r.v.s simultaneously, or deliberate shuffling of data as a reference point with no conditional dependence. 

    The question we focus on next is the following: where is the minimiser of information $Q^*$? Does it lie in the interior or the boundary? Since $Q(s) = P(s)$ is fixed, $I$ is convex on $\Delta_P$, so the location gives an initial description about the landscape of the function $I$. Moreover, a minimum always exists but is not necessarily unique and can occur in the (relative) interior of $\Delta_P$ or its boundary. In the following, we observe and characterise different cases of the location of the minimum. Characterisation of the cases certainly may facilitate computation time. Moreover, we shall see what these characterisation imply and their physical interpretations, which provides a more complete picture for some models widely considered by neuroscience community about neural interactions and information. 
    
    \subsubsection{Parametrising and orienting the correlation domain}
    % Adapting \cite{hocsten2002grobner}, the $\mathbb{Z}$-basis in the paper corresponding to the marginal mapping $\pi$ (denoted $\Delta_\pi$ in the paper) for the simplicial complex $\mathcal{K} = \mathcal{K}^{(0)}\mathcal{K}^{(1)}$ with vertex set $\mathcal{K}^{(0)} = \{S, X, Y\}$ and $1$-skeleton $\mathcal{K}^{(1)} = \{ \{S,X\}, \{S,Y\} \}$ is in fact an $\mathbb{R}$-basis for $\Delta_P$. 

    Here we mention the results in \cite{bertschinger2014quantifying} regarding how we parametrise $\Delta_P$ by adapting and specialising \cite{hocsten2002grobner}.

    \begin{proposition}\label{parametrisation}
    Given the marginal $P$ and $\Delta_P$ as above.
        \begin{enumerate}
            \item $\Delta_P$ is the intersection of an affine space, $Q^0 + \ker \pi_P$ for the linear map $\pi_P$ that computes the (formal) marginals for a vector in $\mathbb{R}^{\mathfrak{S} \times \mathfrak{X} \times \mathfrak{Y}}$, and the probability simplex $\Dsxy$ and hence a polytope.
            
            \item $\Delta_P$ splits into a product space whose components are obtained by conditioning on the event $\{S=s\}$ for each $s$:
            \[
            \Delta_P \cong \prod_{s \in \mathfrak{S}} \Delta_{P,s}
            \] with 
            \[
            \Delta_{P,s} = \{Q \in \Delta^{\mathfrak{X}\times \mathfrak{Y}}|Q(x) = P(x|S=s), \; Q(y) = P(y|S=s)\}
            \] by the mapping 
            \[
            Q \mapsto (Q(\cdot|s_1), \cdots, Q(\cdot|s_n))
            \]

            \item The linear space $\ker \pi_P$ is spanned over $\mathbb{R}$ by all vectors of the form \[
                V_{(s,x,y);(s,x',y')} := (\delta_{(s,x,y)} + \delta_{(s,x',y')}) - (\delta_{(s,x',y)} + \delta_{(s,x,y')})
            \] for $x\neq x'$ and $y \neq y'$, where $\delta_m$ denotes the characteristic function at the state $m$.

            \item A choice of basis for $\ker \pi_P$ is \[
                \mathcal{B}_{x_0,y_0} = \{V_{(s,x_0,y_0);(s,x,y)}|x\neq x_0, \; y \neq y_0 \}.
            \] for fixed $(x_0, y_0)$.
        \end{enumerate}
    \end{proposition}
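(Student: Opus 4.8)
The plan is to dispatch (1) and (2) by unwinding the definitions of $\Delta_P$ and of formal marginalisation, and to treat (3)–(4) as a single piece of linear algebra, proving the basis statement (4) and reading (3) off from it. For (1), I would introduce the linear map $\pi=\pi_P\colon \mathbb{R}^{\mathfrak{S}\times\mathfrak{X}\times\mathfrak{Y}}\to\mathbb{R}^{\mathfrak{S}\times\mathfrak{X}}\oplus\mathbb{R}^{\mathfrak{S}\times\mathfrak{Y}}$ recording the two formal marginals (note $\pi$ itself does not depend on $P$; only the prescribed value $(P_{S,X},P_{S,Y})$ does), so that by definition $\Delta_P=\Dsxy\cap\pi^{-1}(P_{S,X},P_{S,Y})$. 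The previous lemma gives $Q^0(P)\in\Delta_P$, hence $\pi(Q^0)=(P_{S,X},P_{S,Y})$ and $\pi^{-1}(P_{S,X},P_{S,Y})=Q^0+\ker\pi$, an affine subspace. Intersecting with $\Dsxy$ is the same as intersecting with the nonnegative orthant, since $\sum_{s,x,y}Q(s,x,y)=\sum_{s,x}P_{S,X}(s,x)=1$ is automatic; and each coordinate satisfies $0\le Q(s,x,y)\le P_{S,X}(s,x)$, so the set is bounded. A bounded set cut out by finitely many affine equalities and inequalities is a polytope, which is (1).

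For (2), given $Q\in\Delta_P$ one marginalises $Q_{S,X}=P_{S,X}$ further to $Q_S=P_S$; then for each $s$ with $P(s)>0$ the vector $Q(\cdot,\cdot\mid s):=Q(s,\cdot,\cdot)/P(s)$ is a distribution on $\mathfrak{X}\times\mathfrak{Y}$, and dividing the constraints $Q_{S,X}=P_{S,X}$, $Q_{S,Y}=P_{S,Y}$ by $P(s)$ says precisely that its marginals are $P(\cdot\mid s)$, i.e. $Q(\cdot\mid s)\in\Delta_{P,s}$. Conversely a tuple $(Q^{(s)})_s\in\prod_s\Delta_{P,s}$ reassembles via $Q(s,x,y):=P(s)\,Q^{(s)}(x,y)$ into an element of $\Delta_P$, and these two affine maps are mutually inverse, yielding the asserted isomorphism. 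The only wrinkle is a state with $P(s)=0$, for which the whole $S=s$ slice of any $Q\in\Delta_P$ is forced to vanish and $\Delta_{P,s}$ must be declared trivial (or such $s$ simply dropped).

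For (3) and (4), I would use that $\pi$ is block diagonal over the slices $\{S=s\}$, so it suffices to analyse one marginalisation map $m\colon\mathbb{R}^{\mathfrak{X}\times\mathfrak{Y}}\to\mathbb{R}^{\mathfrak{X}}\oplus\mathbb{R}^{\mathfrak{Y}}$ and the ``squares'' $V_{(x,y);(x',y')}=\delta_{(x,y)}+\delta_{(x',y')}-\delta_{(x',y)}-\delta_{(x,y')}$. Each square has vanishing row and column sums, so it lies in $\ker m$, hence every $V_{(s,x,y);(s,x',y')}$ lies in $\ker\pi$, giving one inclusion in both (3) and (4). For the reverse inclusion, fix $(x_0,y_0)$; given $v\in\ker m$, set
\[
w := v - \sum_{x\neq x_0,\ y\neq y_0} v_{x,y}\,V_{(x_0,y_0);(x,y)},
\]
which still lies in $\ker m$ but now vanishes at every $(x,y)$ with $x\neq x_0$ and $y\neq y_0$. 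A vector of $\ker m$ supported on the ``cross'' formed by row $x_0$ and column $y_0$ is forced to be $0$: the row sums over rows $x\neq x_0$ kill the column-$y_0$ entries, the column sums over columns $y\neq y_0$ kill the row-$x_0$ entries, and the lone remaining entry at $(x_0,y_0)$ dies by any row or column sum. Hence $v$ lies in the span of $\mathcal{B}_{x_0,y_0}$, which proves the spanning halves of (3) and (4) simultaneously. Linear independence of $\mathcal{B}_{x_0,y_0}$ is then immediate: restricting $V_{(x_0,y_0);(x,y)}$ to the coordinate block $\{(x',y'):x'\neq x_0,\ y'\neq y_0\}$ returns exactly the standard basis vector $\delta_{(x,y)}$. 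So $\mathcal{B}_{x_0,y_0}$ is a basis, and in particular $\dim\ker\pi_P = |\mathfrak{S}|\,(|\mathfrak{X}|-1)(|\mathfrak{Y}|-1)$.

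I do not expect a serious obstacle: the proposition is an adaptation of standard facts about the marginal polytope and the ``inner squares'' basis for matrices with prescribed margins. The two places that demand care are the bookkeeping around degenerate states $s$ with $P(s)=0$ in part (2) (where conditioning is not literally a division), and phrasing the ``subtract off the interior block'' step in (3)–(4) so that the single computation yields both spanning and — via the restriction-to-the-interior-block observation — linear independence, avoiding a separate dimension count.
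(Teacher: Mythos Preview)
Your argument is correct and self-contained. The paper does not actually supply its own proof of this proposition: it simply quotes the result, attributing it to \cite{bertschinger2014quantifying} and the Gr\"obner-basis framework of \cite{hocsten2002grobner}. Your approach is the elementary direct one --- unwinding the marginal map for (1)--(2) and the ``subtract off the interior block'' trick for (3)--(4) --- rather than invoking the toric/Markov-basis machinery of Ho\c{s}ten--Sturmfels that the paper points to; the payoff is that your proof needs no outside theory, while the cited references place the statement in a more general combinatorial setting.
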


    To find out critical values of $I$, one computes the directional derivative. 
    \begin{lemma}\label{deriv I}
        The directional derivative each $V_{(s,x,y);(s,x',y')}$ is 
        \begin{equation}
        (D_{V_{(s,x,y);(s,x',y')}} I)(Q)  = \log\frac{Q(s,x,y)Q(s,x',y')}{Q(s,x,y')Q(s,x',y)}\frac{Q(x,y')Q(x',y)}{Q(x,y)Q(x',y')} 
        \end{equation} in the interior of $\Delta_P$. 
        % boundary if exist can take limit of this. (CHECK)
    
        % Note $$\frac{\partial }{\partial t_s} H(Q_{X,Y,S}) = \frac{\partial }{\partial t_s} H(Q_{X,Y|S})$$
    
        In particular, the derivative at $Q^0$ is 
        \begin{equation}
            \log\frac{Q^0(x,y')Q^0(x',y)}{Q^0(x,y)Q^0(x',y')} 
        \end{equation} in any nondegenerate direction if $Q^0$ has full support. 
    \end{lemma}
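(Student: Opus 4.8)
The plan is to expand the mutual information into Shannon entropies, $I(S:X,Y) = H(S) + H(X,Y) - H(S,X,Y)$, and differentiate term by term. Since every $Q \in \Delta_P$ shares the same $S$-marginal $Q(s) = P(s)$, the term $H(S)$ is constant on $\Delta_P$, so it suffices to compute the directional derivative of $H(X,Y) - H(S,X,Y)$ along $V := V_{(s,x,y);(s,x',y')}$, where throughout $x \neq x'$ and $y \neq y'$ (the nondegeneracy built into Proposition \ref{parametrisation}).

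First I would record exactly how the flow $Q_t = Q + tV$ perturbs the relevant probabilities. Reading off $V = (\delta_{(s,x,y)} + \delta_{(s,x',y')}) - (\delta_{(s,x',y)} + \delta_{(s,x,y')})$, only four cells of the joint table move,
\begin{gather*}
Q_t(s,x,y) = Q(s,x,y)+t, \quad Q_t(s,x',y') = Q(s,x',y')+t, \\
Q_t(s,x,y') = Q(s,x,y')-t, \quad Q_t(s,x',y) = Q(s,x',y)-t,
\end{gather*}
and all other entries are fixed; summing over $S$, the induced change in the $(X,Y)$-marginal touches only the four cells $(x,y), (x',y'), (x,y'), (x',y)$ with the same signs. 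In the relative interior of $\Delta_P$ all these numbers stay positive, so $t \mapsto H(X,Y) - H(S,X,Y)$ is real-analytic near $t=0$ and the derivative is well-defined.

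Then I would differentiate, using $\frac{d}{dt}\bigl[-(q+ct)\log(q+ct)\bigr]\big|_{t=0} = -c(\log q + 1)$ on each affected cell with $c \in \{+1,-1\}$. The constant ``$+1$'' contributions cancel — precisely because the coefficients of $V$ sum to zero, i.e. $V \in \ker \pi_P$ — leaving
\begin{gather*}
\frac{d}{dt}H(S,X,Y)\Big|_{t=0} = \log\frac{Q(s,x,y')Q(s,x',y)}{Q(s,x,y)Q(s,x',y')}, \\
\frac{d}{dt}H(X,Y)\Big|_{t=0} = \log\frac{Q(x,y')Q(x',y)}{Q(x,y)Q(x',y')}.
\end{gather*}
Subtracting the first from the second and combining logarithms recovers
\[
(D_V I)(Q) = \log\frac{Q(s,x,y)Q(s,x',y')}{Q(s,x,y')Q(s,x',y)}\cdot\frac{Q(x,y')Q(x',y)}{Q(x,y)Q(x',y')}.
\]
For the value at $Q^0$, I would invoke the preceding lemma: $X \perp Y \mid S$ at $Q^0$, so $Q^0(s,x,y) = Q^0(s)Q^0(x|s)Q^0(y|s)$, whence the conditional factor $\tfrac{Q^0(s,x,y)Q^0(s,x',y')}{Q^0(s,x,y')Q^0(s,x',y)}$ collapses to $1$ as soon as $Q^0$ has full support, leaving $\log\tfrac{Q^0(x,y')Q^0(x',y)}{Q^0(x,y)Q^0(x',y')}$ in every nondegenerate direction.

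The computation is essentially routine; the only points demanding attention are the bookkeeping of which cells land in the numerator versus the denominator and with which sign, the remark that the affine ``$+1$'' terms cancel exactly because $V$ has zero coordinate sum, and keeping to the interior (equivalently, full support) so that every logarithm is finite and $I$ is smooth there.
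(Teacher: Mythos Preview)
Your proposal is correct and follows essentially the same route as the paper: the paper defers the proof to the appendix (Section~\ref{deriv entropy}), where it records the general formulae $D_v H(Q) = -\log\prod_i Q(x_i)^{v_i}$ and $D_v I(A:B) = \log\frac{\prod_{kl} Q(a_k,b_l)^{w_{kl}}}{\prod_k Q(a_k)^{\sum_l w_{kl}}\prod_l Q(b_l)^{\sum_k w_{kl}}}$, proved ``by direct calculations'', and the lemma is the specialisation of these to $V_{(s,x,y);(s,x',y')}$. Your write-up is in fact more explicit than the paper's about the cancellation of the affine ``$+1$'' terms and the collapse of the conditional factor at $Q^0$.
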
 We also state some slightly more general formulae for derivatives of entropy in the appendix \ref{deriv entropy}.

    The derivatives of $I$ on the boundary do not necessarily exist but can be obtained as the limit of the above formula. 
  
    \begin{example}
        Let's consider a case with degenerate $\Delta_P$. Let $\mathfrak{S} = \{s,s'\}$ and \begin{equation}
            [Q^0(x,y|s')] = \begin{bmatrix}
                p& q  \\ 
                0 & 0
            \end{bmatrix} ,
        \end{equation} which has no room to move. So we may write \begin{equation}
            Q^t = Q^0 + tP(s)V_{s}
        \end{equation} with bounds $t \in [t^{min}, t^{Max}]$. The interval is identified with $\Delta_P$.  

        For a distribution in the relative interior $Q \in \Delta_{P+}$,
        \begin{align}
            \frac{dI(Q^t)}{dt} & = \log\frac{Q(s,x_1,y_1)Q(s,x_2,y_2)}{Q(s,x_1,y_2)Q(s,x_2,y_1)}\frac{Q(x_1,y_2)Q(x_1,y_2)}{Q(x_1,y_1)Q(x_2,y_2)} \\
            & = \log\frac{(Q(s,x_2,y_1) + q)Q(s,x_2,y_2)}{Q(s,x_2,y_1)(Q(s,x_2,y_2)+p))} \\
            & = \log\frac{1 + \frac{q}{Q(s,x_2,y_1)}}{1+\frac{p}{Q(s,x_2,y_2)}}, 
        \end{align} assume $Q^0(s,\cdot, \cdot)$ has full support, so the last equality holds. The last expression is zero if and only if \begin{equation}
           \frac{q}{Q(s,x_2,y_1)} =\frac{p}{Q(s,x_2,y_2)}
        \end{equation} That is to say, the conditional distribution $Q_{X,Y|S= s'}$ determines the condition an interior minimum can occur, but whether the domain contains such point is determined by the conditional distribution $Q_{X,Y|S}$.

        Again one can construct examples of $P$ such that either case holds. 
    \end{example}
     
    \subsubsection{Binomial correlations}\label{Amari corr}

    Looking at \ref{deriv I}, the expressions \begin{equation*}
    	 \log\frac{Q(s,x,y)Q(s,x',y')}{Q(s,x,y')Q(s,x',y)}, \; \log\frac{Q(x,y')Q(x',y)}{Q(x,y)Q(x',y')} 
    \end{equation*} measures how much probability is concentrated on the states corresponding to $+1$ or $-1$ in the parametrisation \ref{bivariate bin Delta P} and hence can be considered as a measurement for correlations. Some motivations for the definitions that follows would  probabily be more apparent in Sec. \ref{section AG}. However these are wheels reinvented. In \cite{amari2006correlation}, Amari had already proposed the same measurements as natural information theoretic correlations since in terms of information geometry, they are \textit{orthogonal} to the model and therefore can be considered independently without implicitly moving in the model. Indeed, one can check the space $\pi_P$ as in Prop. \ref{parametrisation} is orthogonal to the independence model $\Sigma$ that we shall introduce in Sec. \ref{section AG}.
    
    \begin{definition} \label{bin corr def}
        Let $S,X,Y$ be as above. Given a binomial $b$ in the commutative ring $R= \mathbb{R}\left[\mathfrak{X}\times\mathfrak{Y}\right]$ of (formal) polynomials of states of $(X,Y)$ over $\mathbb{R}$ of the form \[
            b = r_1 r_2\cdot \cdots r_k - r_{k + 1}\cdot\cdots r_{l-1}r_{l},
        \] for $r_1, \cdots, r_l \in R$ the \textbf{log binomial noise correlation} of $b$ given $s \in \mathfrak{S}$ at the distribution $Q\in\Dsxy$ is \[
            \beta_Q(b|s) = \beta(Q)(b|s) = \log Q(r_1|s)Q(r_2|s)\cdots Q(r_k|s) - \log Q(r_{k+1}|s)\cdots Q(r_l|s)
        \] provided none of the probabilities is zero. 
        
        If $b$ is a homogeneous polynomial, i.e. both terms have the same degree, then the conditioning $Q(r|s)$ can be changed to joint $Q(r,s)$. 

        The \textbf{log binomial signal correlation} of $b$ at the distribution $Q\in\Dsxy$ is \[
            \alpha_Q(b) = \alpha(Q)(b) = \log Q(r_1)Q(r_2)\cdots Q(r_k) - \log Q(r_{k+1})\cdots Q(r_l)
        \] provided none of the probabilities is zero. 

        At the shuffle distribution, $\alpha_0 := \alpha(Q^0)(b) $ is called the \textbf{shuffled signal correlation}. 
    \end{definition}
    
    The use of binomials is consistent with \cite{hocsten2002grobner}. Derivatives like in \ref{deriv entropy} give some ideas why we need such generality. In fact, we often consider ones that define critical points of mutual information so the zero loci of them coincide with the set of critical points. The direction $V_{(s,x,y);(s,x',y')}$ corresponds to the binomial $b = (x,y)(x',y') - (x',y)(x,y')$. By choosing a basis for $\Delta_P$ as in the previous section, we implicitly choose an ``orientation'' by considering the binomial correlations for $b$. Note, on boundaries of the probability simplex, if the derivative exists, it is obtained by dropping the zero terms. In this case the binomial correlation is still convenient to use as discriminant that separates distributions having different behaviours, as we shall later in the paper. 
    
    The definition for the signal correlation is somewhat less precise across the literature. Indeed we directly use the one from \cite{latham2005synergy}. Broadly, a measurement of it should reflect if the responses of neurons are similar across stimuli. For real valued r.v.s $X,Y$, we expect to see the mean of $X,Y$ given each $S = s$ correlated.  For binary $X,Y$, this is to say , for example $X,Y$ has positive signal correlation if, after identifying the states as $0,1$, the joint probability $P_{X,Y}$ is concentrated on the ``diagonal'', i.e. the states $(1,1)$ and $(0,0)$.   
    
     The following lemma corroborates this definition. 
    
    \begin{lemma}\label{n = 2 sig corr}
    	For $n = 2$, in the binary source model, $\alpha(Q^0)$ has the same sign of  \[
    	(P(y|s_1) - P(y|s_2))(P(x|s_1) - P(x|s_2)),
    	\] meaning $X$ and $Y$ have similar or dissimilar ``preference'' for $S$.
    \end{lemma}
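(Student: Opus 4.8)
The plan is to reduce the statement to the sign of a single $2\times 2$ determinant. Since $\mathfrak{X}=\{x_1,x_2\}$ and $\mathfrak{Y}=\{y_1,y_2\}$, the constraint $x\neq x'$, $y\neq y'$ leaves, up to relabelling, exactly one nondegenerate direction, attached to the binomial $b=(x_1,y_1)(x_2,y_2)-(x_1,y_2)(x_2,y_1)$; relabelling $x_1\leftrightarrow x_2$ (or $y_1\leftrightarrow y_2$) replaces $b$ by $-b$ and simultaneously flips exactly one factor in the claimed product, so the asserted sign comparison is relabelling-invariant and it suffices to treat this $b$. By Definition~\ref{bin corr def} applied to the $(X,Y)$-marginal of the shuffle distribution,
\[
\alpha(Q^0)(b)=\log\frac{Q^0(x_1,y_1)\,Q^0(x_2,y_2)}{Q^0(x_1,y_2)\,Q^0(x_2,y_1)},
\]
and under the full-support hypothesis all four probabilities are positive, so $\operatorname{sign}\alpha(Q^0)(b)=\operatorname{sign}\det M$ with $M:=\big[\,Q^0(x_i,y_j)\,\big]_{i,j}$. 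Thus the whole lemma amounts to computing the sign of $\det M$.

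Next I would insert the definition of $Q^0$. With $\mathfrak{S}=\{s_1,s_2\}$,
\[
Q^0(x_i,y_j)=\sum_{k=1}^{2}P(s_k)\,P(x_i\mid s_k)\,P(y_j\mid s_k),
\]
which exhibits $M$ as a weighted sum of two rank-one matrices, $M=P(s_1)\,u_1v_1^{\mathsf T}+P(s_2)\,u_2v_2^{\mathsf T}$ with $u_k=\big(P(x_1\mid s_k),P(x_2\mid s_k)\big)^{\mathsf T}$ and $v_k=\big(P(y_1\mid s_k),P(y_2\mid s_k)\big)^{\mathsf T}$. Writing $U=[u_1\mid u_2]$ and $V=[v_1\mid v_2]$ this is $M=U\operatorname{diag}\!\big(P(s_1),P(s_2)\big)V^{\mathsf T}$, hence
\[
\det M=P(s_1)\,P(s_2)\,\det U\,\det V.
\]
Since $u_k$ and $v_k$ are probability vectors on a two-point set, $P(x_2\mid s_k)=1-P(x_1\mid s_k)$ gives the one-line evaluations $\det U=P(x_1\mid s_1)-P(x_1\mid s_2)$ and $\det V=P(y_1\mid s_1)-P(y_1\mid s_2)$. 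As $0<P(s_1)<1$ forces $P(s_1)P(s_2)>0$, we get that $\det M$, and therefore $\alpha(Q^0)(b)$, has the same sign as $\big(P(x_1\mid s_1)-P(x_1\mid s_2)\big)\big(P(y_1\mid s_1)-P(y_1\mid s_2)\big)$, which is the claim with the orientation convention $P(x\mid s)=P(x_1\mid s)$, $P(y\mid s)=P(y_1\mid s)$.

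There is no serious obstacle here; the work is bookkeeping. The points that need attention are: (i) keeping the orientation — which state is singled out as ``$x_1$'' and as ``$y_1$'' — consistent between the binomial used for $\alpha(Q^0)$ and the differences appearing in the product, the relevant observation being that the comparison is insensitive to this choice; (ii) the full-support hypothesis, which is used only so that $\alpha(Q^0)$ is literally defined, the determinant identity itself being unconditional and recording the limiting sign on the boundary; and (iii) noticing that the essential use of $n=2$ is that $M$ is a sum of exactly \emph{two} rank-one terms — for $|\mathfrak{S}|\ge 3$ the determinant expands into a signed sum over pairs of stimuli and no such clean sign rule survives, consistent with the lemma being stated only for $n=2$. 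If one wishes to bypass the rank-one/Cauchy--Binet shortcut, the same identity $\det M=P(s_1)P(s_2)\det U\,\det V$ also drops out of a direct expansion of the four entries of $M$.
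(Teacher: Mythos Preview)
Your proof is correct and follows essentially the same route as the paper: both reduce the claim to computing the sign of $\det\big[Q^0(x_i,y_j)\big]$ and exploit that this matrix is a sum of two rank-one pieces, one for each stimulus. The paper does this via the bilinear expansion $\det(A+B)=\det A+\det B+\text{(cross terms)}$ with $\det A=\det B=0$, whereas you package the same computation as the factorisation $M=U\operatorname{diag}(P(s_1),P(s_2))V^{\mathsf T}$ and multiplicativity of $2\times2$ determinants; your version is a bit tidier and makes the factor $P(s_1)P(s_2)$ explicit, but the content is identical.
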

    
    \begin{proof}
    	See appendix \ref{n = 2 sig corr proof}.
    \end{proof}
    
    We quote the following lemma to show that this is particularly consistent with the concept for the binary source model. Some objects will be defined in the next section.
    
    \begin{lemma} (\cite{amari2010conditional})
    	The secant variety of the independence model $\sigma$ in $\Delta^3$ contains the whole simplex. That is, any distribution $Q_{X,Y}$ for the binary source model can be the marginal of  some shuffle distribution $Q^0_{X,Y}$.
    	
    \end{lemma}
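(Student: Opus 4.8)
The plan is to prove the statement by an explicit construction which, for an arbitrary $Q_{X,Y}\in\Delta^3$, produces a shuffle distribution with \emph{binary} $S$ whose $(X,Y)$-marginal is $Q_{X,Y}$; unwinding the construction exhibits $Q_{X,Y}$ as a point on a chord of $\sigma$, which is exactly the secant-variety assertion. Throughout, identify a distribution on $\mathfrak{X}\times\mathfrak{Y}$ (with $|\mathfrak{X}|=|\mathfrak{Y}|=2$) with the $2\times 2$ nonnegative matrix $[Q(x_i,y_j)]$ whose entries sum to $1$, so that $\sigma$ is the Segre quadric $\{Q(x_1,y_1)Q(x_2,y_2)=Q(x_1,y_2)Q(x_2,y_1)\}$, i.e. the rank-$\le 1$ probability matrices, a two-dimensional surface inside the three-dimensional simplex.

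First dispose of the degenerate case: if $\Prob_Q(X=x_1)=0$ or $\Prob_Q(X=x_2)=0$, the matrix has a zero row, hence has rank $\le 1$, hence already lies on $\sigma\subseteq\mathrm{Sec}(\sigma)$, and is the $(X,Y)$-marginal of the shuffle distribution with $S$ trivial (or $S$ independent of everything). Otherwise set $\lambda:=\Prob_Q(X=x_1)\in(0,1)$, take $S$ binary with $P(s_1)=\lambda$, $P(s_2)=1-\lambda$, let $S$ determine $X$ via $P(\,\cdot\mid s_i)=\delta_{x_i}$ on $\mathfrak{X}$, and put $P(y\mid s_i):=\Prob_Q(Y=y\mid X=x_i)$. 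Then the shuffle distribution $Q^0(P)(s,x,y)=P(s)P(x\mid s)P(y\mid s)$ has $(X,Y)$-marginal $\sum_s P(s)P(x\mid s)P(y\mid s)=\lambda\,(\delta_{x_1}\otimes Q_{Y\mid X=x_1})+(1-\lambda)\,(\delta_{x_2}\otimes Q_{Y\mid X=x_2})$, and a one-line check of the four entries shows this equals $Q_{X,Y}$. The summands $R_1:=\delta_{x_1}\otimes Q_{Y\mid X=x_1}$ and $R_2:=\delta_{x_2}\otimes Q_{Y\mid X=x_2}$ are product distributions, hence lie on $\sigma$ (on two opposite edges of the tetrahedron $\Delta^3$), so $Q_{X,Y}=\lambda R_1+(1-\lambda)R_2$ sits on the secant $\overline{R_1R_2}$. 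This settles both phrasings of the statement simultaneously.

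For the conceptual picture consistent with Section \ref{section AG}: $\sigma\cong\mathbb{P}^1\times\mathbb{P}^1$ sits in $\mathbb{P}^3$ as a smooth irreducible quadric not contained in any hyperplane, and over $\mathbb{C}$ a generic line through any point of $\mathbb{P}^3$ meets this quadric in two points (B\'ezout), so its Zariski-closed secant variety is all of $\mathbb{P}^3$. I do not anticipate a genuine obstacle; the only points needing care are (i) the vanishing-marginal cases, handled above, and (ii) pinning down the intended reading of ``secant variety'', since over $\mathbb{R}$ the union of honest secant \emph{lines} of a signature-$(2,2)$ quadric is a proper subset of $\mathbb{P}^3$ (a line can lie entirely in one component of the complement) — this is precisely why the constructive argument is doing the real work: the region at issue, the simplex $\Delta^3$, is small (its four vertices are deterministic, hence on $\sigma$), so each of its points is even a \emph{convex} combination of two points of $\sigma$, which is what the construction delivers.
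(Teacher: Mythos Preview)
The paper does not give its own proof of this lemma; it is simply quoted as a known result from \cite{amari2010conditional}. Your construction is correct and self-contained: taking $S$ to be a copy of $X$ (so $X$ is deterministic given $S$) and $P(y\mid s_i)=Q(y\mid x_i)$ immediately yields $Q^0_{X,Y}=Q_{X,Y}$, and the resulting decomposition $Q=\lambda R_1+(1-\lambda)R_2$ with $R_i\in\sigma$ exhibits the secant. The care you take with the real-vs-complex secant issue is apt and worth keeping, since the paper's ambient discussion freely passes between $\Delta^3$ and $\mathbb{P}^3_{\mathbb{C}}$; your point that the simplex is ``small'' enough for honest convex chords to suffice is exactly the right justification.
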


    \subsection{Bivariate binary source model}\label{two binary}
    % (1) The model of $n$-ary $S$ and binary $X$ $Y$.
    In this section, we consider the case $|\mathfrak{X}| = |\mathfrak{Y}| = 2$ and $\mathfrak{S} = \{1, \cdots, n\}$. Each $\Delta_{P,s}$ is one dimensional.  
    \begin{equation}\label{bivariate bin Delta P}
        \Delta_P = \{Q^0(P) + \sum_s t_sV_s \;| \; t_s \in \left[t_s^{min}, t_s^{Max}\right]\},
    \end{equation} is an $n$-dimensional box, where $  V_s := V_{(s,x_1,y_1);(s,x_2,y_2)} $  corresponding the binomial $$b = (x_1,y_1)(x_2,y_2) - (x_1,y_2)(x_2,y_1).$$ 
    We fix the notations 
    $$\alpha(Q) := \alpha(Q)(b), \; \beta_s(Q) = \beta(Q)(b|s),$$ 
    $$ t^{min} = (t^{min}_1, \cdots, t^{min}_n),  \text{ and } t^{Max} = (t^{Max}_1, \cdots, t^{Max}_n).$$

    % justify generality of this model
    With example in neuroscience again, when we consider that in a short window of time, neurons either fire once or not at all. We may also consider it as an approximation for general $X, Y$ with finite states by partitioning each state spaces into two parts. (For instance, ``significant'' response and ``ignorable'' response.)

    We prove results extending those from \cite{rauh2019properties} with the assumptions from the previous section, showing how the location of the minimiser $Q^*$ is affected by properties of the marginals $P_{S,X}$ and $P_{S,Y}$. 

    \begin{lemma}\label{alpha equal beta}
        The derivative \ref{deriv I} is zero if and only if $\beta_{Q^*}(b|s) = \alpha_{Q^*}(b)$ for all $s$.  
    \end{lemma}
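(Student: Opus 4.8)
The plan is essentially to unwind Lemma~\ref{deriv I} and Definition~\ref{bin corr def} for the particular binomial attached to the basis direction $V_s = V_{(s,x_1,y_1);(s,x_2,y_2)}$, namely $b = (x_1,y_1)(x_2,y_2) - (x_1,y_2)(x_2,y_1)$, and then to read off the two logarithmic factors of the directional derivative as, respectively, the noise correlation and minus the signal correlation. Concretely, first I would specialise Lemma~\ref{deriv I}: for $Q$ in the relative interior $\Delta_{P+}$ with full support, so that all entries below are positive,
\[
(D_{V_s} I)(Q) = \log\frac{Q(s,x_1,y_1)Q(s,x_2,y_2)}{Q(s,x_1,y_2)Q(s,x_2,y_1)} + \log\frac{Q(x_1,y_2)Q(x_2,y_1)}{Q(x_1,y_1)Q(x_2,y_2)}.
\]

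Next I would identify the two summands. Since $b = r_1 r_2 - r_3 r_4$ with $r_1=(x_1,y_1),\,r_2=(x_2,y_2),\,r_3=(x_1,y_2),\,r_4=(x_2,y_1)$ is homogeneous of degree two, Definition~\ref{bin corr def} permits replacing the conditioned probabilities $Q(\cdot\,|\,s)$ by the joint probabilities $Q(s,\cdot)$ (the two factors $Q(s)$ cancel), so the first summand is exactly $\beta_s(Q)=\beta_Q(b|s)$; and the second summand is, directly from the definition of the log binomial signal correlation, $-\alpha(Q)=-\alpha_Q(b)$. Hence $(D_{V_s}I)(Q) = \beta_s(Q) - \alpha(Q)$, which vanishes precisely when $\beta_s(Q)=\alpha(Q)$. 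Finally, by Proposition~\ref{parametrisation}(2)--(4), in the bivariate binary model the tangent space of $\Delta_P$ at an interior point is spanned by $\{V_s : s\in\mathfrak S\}$ (with nothing to check along any $s$ for which $\Delta_{P,s}$ is degenerate), so the full directional derivative of $I$ at $Q^*$ vanishes if and only if $(D_{V_s}I)(Q^*)=0$ for every $s$, i.e. $\beta_{Q^*}(b|s)=\alpha_{Q^*}(b)$ for all $s$, as claimed.

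The computation is routine; the points that need care are the sign bookkeeping in matching the two log-ratios to $+\beta_s$ and $-\alpha$, the appeal to homogeneity of $b$ to pass between the conditional and joint forms of $\beta_s$, and the support hypothesis. If the minimiser $Q^*$ lies on the boundary of $\Delta_P$ rather than in $\Delta_{P+}$, then strictly speaking the ``derivative is zero'' condition should be read either as the limiting version of the formula above (dropping vanishing terms, as flagged after Lemma~\ref{deriv I}) or as the appropriate one-sided stationarity/KKT inequalities; I would therefore state the lemma for $Q^*\in\Delta_{P+}$ with full support and append a short remark on the boundary case, where the binomial correlations still serve as the relevant discriminant.
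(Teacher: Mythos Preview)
Your proposal is correct and is exactly the intended argument: the paper's own proof is the single line ``This is apparent from \ref{deriv I},'' and what you have written is precisely the unpacking of that remark, identifying the two logarithmic factors in the derivative formula with $\beta_s(Q)$ and $-\alpha(Q)$ via Definition~\ref{bin corr def}. Your added care about the span of the $V_s$, the homogeneity of $b$, and the boundary caveat is sound and only makes explicit what the paper leaves implicit.
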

    \begin{proof}
        This is apparent from \ref{deriv I}.
    \end{proof}

    With the definitions of binomial correlations and the derivative formula, we can say: \textit{when noise and signal correlations are equal, information is minimised}. 

    \begin{theorem}\label{binary main theorem} Assume $\Delta_P$ has full support. % need to take care of degenerate cases
    \begin{enumerate}
        \item   If $Q^*$ is an optimiser with coordinates $t^* := (t_1^*, \cdots, t_n^*)$ on $\Delta_P$, then the signs of all $t_i^*$ ($+, -,$ or $ 0$) are equal to that of $\alpha(Q^0)$.

        \item If $Q^*$ is on the boundary, then either $t^*_s  = t^{Max}_s$ for all $s$ or $t^{min}_s$ for all $s$. In this case, for some (choice of notations) $x\neq x'$ $y\neq y'$, $P(x|s)P(y|s) \geq P(x'|s)P(y'|s)$ for all $s$. 

        \item $Q^*$ is not unique if and only if $X\perp S$ and $Y\perp S$ on $\Delta_P$. In this case, $I(Q)$ is zero on the diagonal $\overline{t^{min}t^{Max}}$ containing $Q^0$. 
    \end{enumerate}
    \end{theorem}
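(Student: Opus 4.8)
The plan is to push everything through the box parametrisation together with the derivative formula of Lemma~\ref{deriv I}. Write $Q^t = Q^0 + \sum_s t_s V_s$ and $\tau = \sum_s t_s$; the key observation is that $\partial_{t_s} I(Q^t) = \beta_s(Q^t) - \alpha(Q^t)$ splits as $\phi_s(t_s) - \psi(\tau)$, where $\phi_s$ depends only on $t_s$ — it is the log cross-ratio of the $2\times2$ table $Q^t(s,\cdot,\cdot)$ — and $\psi$ only on $\tau$, being the log cross-ratio of the $(X,Y)$-marginal $Q^t(x,y) = Q^0(x,y)\mp\tau$. Each $\phi_s$ is smooth, strictly increasing on $(t_s^{min},t_s^{Max})$, with $\phi_s(0) = \beta_s(Q^0) = 0$ (the shuffle has $X\perp Y\mid S$) and $\phi_s\to\pm\infty$ at the two endpoints; $\psi$ is strictly increasing with $\psi(0) = \alpha(Q^0)$. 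Integrating, $I(Q^t) = I(Q^0) + \sum_s\Phi_s(t_s) - \Psi(\tau)$ with $\Phi_s,\Psi$ the antiderivatives vanishing at $0$, $\Phi_s$ strictly convex with minimum at $0$ and $\Psi$ strictly convex. I will use freely that $I$ is convex on $\Delta_P$ and attains its minimum.

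For part (1) I would reduce to one variable: for fixed $\tau$ set $m(\tau) = \min\{\sum_s\Phi_s(t_s) : \sum_s t_s = \tau,\ t_s\in[t_s^{min},t_s^{Max}]\}$, so minimising $I$ becomes minimising the convex $G(\tau) = m(\tau) - \Psi(\tau)$ and recovering $t^*$ from the optimal $\tau^*$ by the box-KKT formula $t_s^* = \mathrm{med}(t_s^{min},\phi_s^{-1}(\lambda^*),t_s^{Max})$, $\lambda^* = \psi(\tau^*)$, where $\phi_s$ is a bijection onto $\mathbb{R}$. Since at $\tau=0$ the slice-optimiser is $t=0$ (here I use that $\Delta_P$ has full support), $G'(0) = -\psi(0) = -\alpha(Q^0)$; if $\alpha(Q^0)>0$ then $G'(0)<0$, so every minimiser has $\tau^*>0$, hence $\lambda^* = \psi(\tau^*) > \psi(0) > 0$, hence each $\phi_s^{-1}(\lambda^*) > 0$ and each $t_s^* > 0$; the case $\alpha(Q^0)<0$ is symmetric, and for $\alpha(Q^0)=0$ one gets $G'(0)=0$, so $t=0$ minimises and is the only minimiser with $\tau=0$ (a sum of nonnegative $\Phi_s$ vanishes only there) — when $\alpha(Q^0)=0$ with non-unique minimiser, part (3) will show the minimisers fill a diagonal along which the coordinates still change sign together. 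For part (2): if a minimiser $t^*$ has $t_j^* = t_j^{Max}$ but some $t_k^*$ strictly interior, perturbing along $e_k - e_j$ keeps $\tau$ and the $(X,Y)$-marginal fixed (because $V_j$ and $V_k$ carry the same $2\times2$ sign pattern) while the summand $-Q(j,\cdot)\log Q(j,\cdot)$ of $H(S,X,Y)$ at the state where $Q^{t^*}(j,\cdot)$ vanishes acquires right-derivative $+\infty$, so $I$ strictly decreases — impossible; hence all coordinates of a boundary minimiser sit at a bound, and a mixed vertex is excluded the same way (along $e_k - e_j$ both vanishing entries now move off zero), so a boundary minimiser is $t^{Max}$ for all $s$ or $t^{min}$ for all $s$. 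Say it is $t^{Max}$: moving along $-\mathbf{1}$ into the interior, $\partial_{t_s}I(t^{Max}-h\mathbf{1}) = \phi_s(t_s^{Max}-h) - \psi(\tau^{Max}-nh)$ with each $\phi_s(t_s^{Max}-h)\to+\infty$ as $h\downarrow0$, so if $\psi(\tau^{Max})$ were finite the directional derivative would tend to $-\infty$, contradicting optimality; thus $\psi(\tau^{Max})=+\infty$, i.e. $\tau^{Max} = \sum_s t_s^{Max}$ equals the domain endpoint $\min(\sum_s Q^0(s,x_1,y_2),\sum_s Q^0(s,x_2,y_1))$ of $\psi$, and since $\sum_s\min(u_s,v_s) = \min(\sum_s u_s,\sum_s v_s)$ forces $u_s\le v_s$ for all $s$ (or the reverse for all $s$), and $Q^0(s,x,y) = P(s)P(x\mid s)P(y\mid s)$, this reads $P(x_2\mid s)P(y_1\mid s)\ge P(x_1\mid s)P(y_2\mid s)$ for all $s$ (or the reverse), the stated inequality after relabelling; $t^{min}$ is symmetric.

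For part (3), the easy direction: if $P(x\mid s) = P(x)$ and $P(y\mid s) = P(y)$ for all $s$, then $Q^0(s,x,y) = P(s)P(x)P(y)$ and $t_s^{min} = P(s)m^{min}$, $t_s^{Max} = P(s)m^{Max}$ with $m^{min}<0<m^{Max}$ independent of $s$, so $\overline{t^{min}t^{Max}}$ passes through $t=0$ (through $Q^0$) and consists of distributions $Q(s,x,y) = P(s)(P(x)P(y)+\theta V_{xy})$ — all with $S\perp(X,Y)$ — so $I\equiv0$ on the segment and $Q^*$ is non-unique. For the converse, if $Q^*$ is non-unique the minimiser set is a nondegenerate convex set which, by part (2), contains a segment lying in the relative interior of $\Delta_P$ (a nontrivial segment from a vertex minimiser to any other minimiser, or from $t^{Max}$ to $t^{min}$, meets that interior), along some direction $w\ne0$; $I$ is affine there, and the vanishing second derivative gives $\sum_s w_s^2\phi_s'(t_s) = (\sum_s w_s)^2\psi'(\tau)$, forcing $\sum_s w_s\ne0$. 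Differentiating the identity $G'\equiv0$ over the $\tau$-interval of the minimiser set (all coordinates interior) yields $(\sum_s 1/\phi_s'(t_s))^{-1} = \psi'(\tau)$; since $\phi_s'(t_s) = \tfrac1{P(s)}\sum_{x,y}1/Q(x,y\mid s)$ and $\psi'(\tau) = \sum_{x,y}1/Q(x,y)$, writing $\Lambda(r) = (\sum_i 1/r_i)^{-1}$ for the (normalised) harmonic mean of a four-state distribution turns this into $\sum_s P(s)\Lambda(Q(\cdot\mid s)) = \Lambda(\sum_s P(s)Q(\cdot\mid s))$; because $\Lambda$ is strictly concave on the probability simplex (its restriction to each hyperplane $\{\sum_i r_i = \mathrm{const}\}$ has negative-definite Hessian, by Cauchy--Schwarz), equality in this Jensen inequality forces $Q(\cdot\mid s)$ independent of $s$, i.e. $(X,Y)\perp S$, hence $P(x\mid s) = P(x)$ and $P(y\mid s) = P(y)$; the easy direction then gives $I\equiv0$ on $\overline{t^{min}t^{Max}}$.

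I expect the main obstacles to be the following. Part (1) is essentially bookkeeping once the split $\phi_s(t_s)-\psi(\tau)$ and the convex one-variable reduction are in place. The delicate points are: in part (2), handling one-sided directional derivatives at the non-smooth boundary points — where the logarithmic blow-up of $\partial_{t_s}I$ must decide the sign — and then extracting the marginal inequality from $\psi(\tau^{Max})=+\infty$ through the equality case of subadditivity of $\min$; and in part (3), the ``only if'' direction, whose real content is recognising the vanishing-Hessian condition as equality in the strict concavity of the harmonic mean. The harmonic-mean step in (3) is, I think, the crux.
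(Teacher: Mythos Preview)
Your proof is correct and follows a genuinely different route from the paper's. The paper argues tersely and leans on outside results: part~(2) invokes Lemma~3.3 of \cite{rauh2019properties} (a boundary minimiser must have $Q^*(x,y)=0$ for some state $(x,y)$), and part~(3) invokes Lemma~3.1 of the same reference (the optimiser set contains a full line) together with the later Proposition~\ref{rank condition} on the vanishing locus of $I_{cd}$; part~(1) is reduced to the one-line Lemma~\ref{alpha equal beta} plus an appeal to convexity whose link back to $\alpha(Q^0)$ (rather than $\alpha(Q^*)$) is left implicit. Your separation $\partial_{t_s}I=\phi_s(t_s)-\psi(\tau)$, with $\tau=\sum_s t_s$, is the real engine and is not made explicit in the paper: it turns (1) into a clean one-variable convex reduction via $G(\tau)=m(\tau)-\Psi(\tau)$; it lets you recover the content of the cited Lemma~3.3 in (2) directly from the logarithmic blow-up of $\phi_s$ at the box faces, with the equality case of $\sum_s\min(u_s,v_s)=\min(\sum u_s,\sum v_s)$ a neat way to extract the marginal inequality; and --- most strikingly --- it replaces the rank/$I_{cd}$ machinery of (3) by recognising $G''\equiv 0$ as the equality case of Jensen for the harmonic mean $\Lambda(r)=(\sum_i 1/r_i)^{-1}$, whose strict concavity on the simplex is precisely the Cauchy--Schwarz identity you name. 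The paper's approach buys brevity and ties the theorem into the structural lemmas already developed in \cite{rauh2019properties} and in Section~\ref{section PID}; yours buys self-containment and an explanatory mechanism (the $(t_s,\tau)$ separability) that makes all three parts transparent rather than assembled from prior facts. One small caution: the box-KKT formula $t_s^*=\mathrm{med}(t_s^{\min},\phi_s^{-1}(\lambda^*),t_s^{\max})$ with $\lambda^*=\psi(\tau^*)$ is literally valid only at an interior $\tau^*$ where $G'(\tau^*)=0$; at the endpoint $\tau^*=\sum_s t_s^{\max}$ the slice collapses to the single vertex $t^{\max}$ and $t_s^*=t_s^{\max}>0$ is immediate, so the conclusion of (1) survives unchanged.
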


    This generalises \cite[Sec. 5]{rauh2019properties} on binary $S$. 

    \begin{proof}
        % By derivative. Given this, by convexity can just look at Q^0 
        \begin{enumerate}
            \item By \ref{alpha equal beta}, since $\alpha$ is independent of $s$, all $\beta(Q^*)(b|s)$ must have the same sign as $\alpha(Q^*)$, and hence the same is true for $t^*_s$. By convexity, it also determines the derivative at $Q^0$.

            \item By Lemma 3.3 in \cite{rauh2019properties}, if $Q^*$ is on the boundary, $Q^*(x,y) = 0$ for some state $(x,y)$. This means the $t_s$ coordinate for $Q^*$ is $t_s^{min}$ or $t_s^{Max}$ for all $s$. This can only happen when, for some choice of notation $x\neq x'$ and $y\neq y'$, $Q^0(x,y|s) = P(x|s)P(y|s) \geq Q^0(x',y'|s) = P(x'|s)P(y'|s)$ for all $s$. 

            \item We use results \ref{rank condition} and notations involved that we introduce later in the paper since the computation is the same. 

            If the optimum is not unique, then as in Lemma 3.1 of \cite{rauh2019properties}, the set of optimisers contains a line. By (2), the line segment can only be $L:=\overline{t^{min}t^{Max}}$. By (1),  $L$ can not intersect points whose coordinates are of different signs. 

            Now $I_{ci} = 0$, $I_{cd}$ vanishes along the diagonal $L$, implying $X\perp S$ and $Y\perp S$. 
        \end{enumerate}
  
    \end{proof}

    Note that the three cases are not mutually exclusive. Case 3. is included in Case 1 and 2. Later we shall see in Lemma \ref{no boundary} that a boundary minimiser can not have derivative zero.

 \subsection{Algebraic information geometry}\label{section AG}
    The conditions considered in \ref{two binary} are all polynomial--in fact binomial conditions. (The conditions become linear if we take logarithms. We may consider either structure depending on the situation.) Algebraic geometry is the subject that studies solutions of polynomial systems. It is used implicitly in \cite{rauh2019properties}. In this section, we exploit further this point of view to see not only how geometry can help us have a clear mental picture but how classical results in algebraic geometry can be employed to prove results otherwise requiring more computations. 
    
    For an introduction for algebraic geometry in statistical problems, see for example \cite{huh2014likelihood}. A classic introduction to algebraic geometry is \cite{fulton2008algebraic}. 
    \vspace{0.3cm}
    
    For now we consider probability simplices under the standard Euclidean embedding with coordinates in matrix forms. 

    \subsubsection{Linear conditions of fixed marginals}
        Consider two r.v.s $X$ $Y$ with $\Delta := \Delta^{\mathfrak{X}\times\mathfrak{Y}}$. The distributions $Q\in \Delta$ such that $X$ has some fixed marginal $P(X)$ satisfies a linear equation.
        \begin{equation}
            [Q(x,y)]\begin{bmatrix}
                1 \\
                1\\
                \vdots \\
                1
            \end{bmatrix} = 
            \begin{bmatrix}
                P(X = x_1) \\
                P(X = x_2) \\
                \vdots \\
            \end{bmatrix}
        \end{equation}
        Hence it is equivalent to say $Q$ lies on a hyperplane intersecting $\Delta$. If $Q(Y) = P(Y)$ is also fixed, the constraint space is obtained by intersecting further with another hyperplane. 

        For the binary models where $|\mathfrak{X}| = |\mathfrak{Y}| = 2$, this gives the line segment bounded by $\Delta$ given in \ref{ex I(X:Y)}. Since the hyperplanes for fixing the marginals of $X$ and $Y$ are parallel families respectively. There is an ($1$-dimensional) direction in which the distribution $Q(X,Y)$ can move with marginals fixed.

    \subsubsection{Segre variety of independence distributions}
        Distributions  $\{Q|X\perp_Q Y\}\subset \Delta := \Delta^{\mathfrak{X}\times\mathfrak{Y}}$ are defined by the equation $\det Q = 0$. This is called the \textit{Segre variety} $\Sigma$ (strictly speaking, its real nonnegative part). It is also the image of the mapping 
        \begin{equation}
            \begin{split}
               \sigma: \Delta^{\mathfrak{X}}\times\Delta^{\mathfrak{Y}} & \hookrightarrow  \Delta^{\mathfrak{X}\times\mathfrak{Y}} \\
                    (Q_X, Q_Y) & \mapsto  Q_XQ_Y^\top,
            \end{split}
        \end{equation} where $Q_X$ and $Q_Y$ are in form of column vectors.

        For binary $X,Y$, the mapping is 
        \begin{equation}
            (\begin{bmatrix}
                p \\ 1-p
            \end{bmatrix}, 
            \begin{bmatrix}
                q \\ 1-q
            \end{bmatrix}) \mapsto \begin{bmatrix}
                pq & p(1-q) \\
                (1-p)q & (1-p)(1-q)
            \end{bmatrix}.
        \end{equation} $\Sigma$ is a \textit{doubly ruled surface} that through each point it contains two lines corresponding to fixing marginals of $X$ and $Y$, i.e. constant $p$ or $q$, respectively. 

    \subsubsection{Mixture models as point configurations}
        We have always had mixture models in mind. 
        \begin{equation}
            \Dsxy \cong \Delta^{\mathfrak{S}} \times \prod_s \Delta^{\mathfrak{X}\times\mathfrak{Y}},
        \end{equation} via the mapping 
        \begin{equation}
            Q \mapsto (Q_S, Q_{X,Y|S= s_1}, \cdots, Q_{X,Y|S= s_n})    
        \end{equation}
        Therefore it is convenient to identify the components of conditional distributions, and a distribution $Q$ in $\Dsxy$ consists of data $Q(s)\in \Delta^{\mathfrak{S}}$ and a configuration of $n= |\mathfrak{S}|$ points of $\Delta^{\mathfrak{X}\times\mathfrak{Y}}$.  We call this a \textit{configuration of conditionals} for now. We write $q_{ij} = Q(x_i,y_j)$ for the coordinates of this simplex.  % (a zero-dimensional scheme of length $n$)???
        % This picture might not be suitable when switching the conditioning. E.g. (S,X)|Y. What is the mapping transformation?

        A configuration of conditionals determines canonically an affine (degree $1$) mapping $ r$ of $\Delta^{\mathfrak{S}}$ by mapping each vertex $\delta_s \in \Delta^{\mathfrak{S}}$ to $Q_{X,Y|S= s}\in \Delta^{\mathfrak{X}\times\mathfrak{Y}}$. The distribution $Q_{X,Y}$ is therefore identified with the image of $P_S\in \Delta^{\mathfrak{S}}$ in this mapping. In other words, giving the marginal $P$ is the same as giving a point $P_S\in \Delta^{\mathfrak{S}}$ and $n$ points $Q_1, \cdots, Q_n \in \Sigma$. 
        
        For $X,Y$ binary, $\Delta_P$ is the product of the $n$ line segments $L_j$ through each $q_j$ in the direction the fixed the marginals. Moreover, the Segre variety $\Sigma$ separates $\Delta^{\mathfrak{X}\times\mathfrak{Y}}$ into two \textit{toric chambers} corresponding to two parities of binomial correlations. The term comes from that $\Sigma$ is a toric variety, i.e. defined by binomials. We only use the term to be consistent with the literature, e.g. \cite{allman2015tensors} for now.) The necessary condition in \ref{binary main theorem} translates to the requirement that in one of the chambers the end of all line segments through $Q_1, \cdots, Q_n \in \Sigma$ hit the same wall of $\Delta^{\mathfrak{X}\times\mathfrak{Y}}$.
        
        By B\'ezout's theorem, in $\Delta^{\mathfrak{X}\times\mathfrak{Y}}$ a line, e.g. the (\textit{Zariski closure} of the) image of$ \Delta^{\mathfrak{S}}$, either intersects a degree $2$ surface, e.g. $\Sigma$, in at most two points or is contained in the surface. This gives a first application of classical algebraic geometry below.
        
       \begin{lemma}
           For $n=2$, $Q^0 = Q^*$ iff $X\perp S$ or $Y \perp S$. 
           
           Moreover, the parity of $\alpha(Q^0)$ is  for any $P_S \in \Delta^{\mathfrak{S}}_+$ with fixed $P_{X,Y|S}$. 
       \end{lemma}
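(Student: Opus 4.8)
The plan is to reduce both assertions to one $2\times2$ determinant. First I would use that, with $\Delta_P$ of full support, $Q^0$ lies in the relative interior $\Delta_{P+}$ and $I$ is convex on $\Delta_P$ (the marginal $Q(s)=P(s)$ being fixed), so $Q^0$ is a minimiser precisely when its directional derivatives along the basis $\{V_{s_1},V_{s_2}\}$ of $\ker\pi_P$ both vanish. By Lemma~\ref{deriv I} each such derivative at $Q^0$ equals $\beta_{s}(Q^0)-\alpha(Q^0)$, and $X\perp Y\mid S$ at $Q^0$ forces $\beta_{s_1}(Q^0)=\beta_{s_2}(Q^0)=0$, so both collapse to $-\alpha(Q^0)$. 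Hence $Q^0=Q^*$ iff $\alpha(Q^0)=0$, that is iff $\det[Q^0(x,y)]=0$, i.e.\ iff the $(X,Y)$-marginal of $Q^0$ lies on the Segre quadric $\Sigma$; and when it does, convexity still makes $Q^0$ a global minimiser --- merely, by Theorem~\ref{binary main theorem}(3), not the unique one in the doubly-degenerate case.

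The second step is to evaluate that determinant in the configuration-of-conditionals picture. Writing $p_i=P(s_i)$ and taking the conditional marginals $u_i=P(X\mid s_i)$ and $v_i=P(Y\mid s_i)$ as column vectors, one has $Q^0_{X,Y}=p_1u_1v_1^{\top}+p_2u_2v_2^{\top}=AB^{\top}$ with $A=[\,\sqrt{p_1}\,u_1\mid\sqrt{p_2}\,u_2\,]$ and $B=[\,\sqrt{p_1}\,v_1\mid\sqrt{p_2}\,v_2\,]$, so by multiplicativity of the determinant
\[
\det[Q^0(x,y)]=\det A\cdot\det B=p_1p_2\,\bigl(P(x_1\mid s_1)-P(x_1\mid s_2)\bigr)\bigl(P(y_1\mid s_1)-P(y_1\mid s_2)\bigr).
\]
Since full support forces $p_1,p_2>0$, this vanishes iff $P(X\mid s_1)=P(X\mid s_2)$ or $P(Y\mid s_1)=P(Y\mid s_2)$, which for binary sources says exactly $X\perp S$ or $Y\perp S$; combined with the first step this gives the stated equivalence. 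The same formula shows $\operatorname{sign}\alpha(Q^0)$ is the product of $\operatorname{sign}\bigl(P(x_1\mid s_1)-P(x_1\mid s_2)\bigr)$ and $\operatorname{sign}\bigl(P(y_1\mid s_1)-P(y_1\mid s_2)\bigr)$, which depends only on $P_{X,Y\mid S}$ and not on $P_S$ --- the second assertion --- and this recovers Lemma~\ref{n = 2 sig corr}.

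Finally I would recast this as the B\'ezout application promised above: the affine map $r$ carries the segment $\Delta^{\mathfrak{S}}$ onto the chord through the two rank-one matrices $M_i=u_iv_i^{\top}\in\Sigma$, so $Q^0_{X,Y}=r(P_S)$ is an interior point of that chord for every $P_S\in\Delta^{\mathfrak{S}}_+$; a line meeting the degree-two surface $\Sigma$ in the two distinct points $M_1,M_2$ either is contained in $\Sigma$ or meets it nowhere else, and two rank-one $2\times2$ matrices span a line of matrices of rank $\le1$ iff they share a column space ($u_1=u_2$) or a row space ($v_1=v_2$) --- which makes the independence from $P_S$ transparent. I do not expect a genuine obstacle here. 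The only points needing care are fixing the orientation of the binomial $b$ so that ``gradient $=0\iff\alpha(Q^0)=0$'' carries the right sign, and noting (by convexity) that a relative-interior stationary point of $I$ is the global minimum even when $X\perp S$ and $Y\perp S$ both hold; the one piece of actual computation is the bookkeeping identifying $\alpha(Q^0)$, $\det Q^0_{X,Y}$, and the minors $P(x_1\mid s_1)-P(x_1\mid s_2)$, $P(y_1\mid s_1)-P(y_1\mid s_2)$.
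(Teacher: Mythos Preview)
Your proposal is correct and in fact subsumes the paper's argument. The paper's proof is four lines of pure B\'ezout: $Q^0$ is an interior critical point iff $\alpha(Q^0)=0$ iff $Q^0_{X,Y}\in\Sigma$; but the chord $r(\Delta^{\mathfrak S})$ already meets the quadric $\Sigma$ at the two conditional rank-one matrices, so by B\'ezout a third intersection forces the chord into a ruling of $\Sigma$, which happens precisely when $X\perp S$ or $Y\perp S$. You reach the same criterion $\det Q^0_{X,Y}=0$ and then, instead of invoking B\'ezout, compute the determinant outright via the factorisation $Q^0_{X,Y}=AB^{\top}$, obtaining
\[
\det Q^0_{X,Y}=p_1p_2\bigl(P(x_1\mid s_1)-P(x_1\mid s_2)\bigr)\bigl(P(y_1\mid s_1)-P(y_1\mid s_2)\bigr),
\]
which settles both the ``iff'' and the sign claim in one stroke; only afterwards do you recast this as the B\'ezout picture. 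The paper instead defers the sign statement to the separate Lemma~\ref{n = 2 sig corr}, proved in the appendix by the mixed-determinant identity $\det(A+B)=\det A+\det B+\cdots$; your $AB^{\top}$ trick is a cleaner route to the same formula and has the advantage of unifying the two lemmas. The one thing the paper mentions that you only implicitly assume is the degenerate case $\Delta_P=\{Q^0\}$, handled there as ``trivial''; under your standing full-support hypothesis this does not arise.
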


       \begin{proof}
            If $\Delta_P = \{Q^0\}$, this is trivial. Otherwise $Q^0$ is in the relative interior and is a minimiser iff $\alpha(Q^0) = 0$, or $Q^0 \in \Sigma$, so $r(\Delta^{\mathfrak{S}})\subset \Sigma$. The mapping $\sigma$ shows that this holds when $X\perp S$ or $Y \perp S$.
        \end{proof}   

       \begin{remark}
           For $n\geq3$, in general $\Delta^{\mathfrak{S}}$ is mapped to a polytope of dimension $\geq 2$. For $n = 3$, the interior of the image of $\Delta^{\mathfrak{S}}$ either intersects $\Sigma$ in one degree $2$ curve or two lines (again by B\'ezout theorem). For example, consider the configuration 
           \begin{equation}
               \{\frac{1}{4}\begin{bmatrix}
                   1 & 1 \\
                   1 & 1
               \end{bmatrix},
               \frac{1}{16}\begin{bmatrix}
                   9 & 3 \\
                   3 & 1
               \end{bmatrix},
               \frac{1}{16}\begin{bmatrix}
                   3 & 1 \\
                   9 & 3
               \end{bmatrix}.\}
           \end{equation} One can check the image of $\Delta^{\mathfrak{S}}$ intersects both chambers. (E.g. it suffices by B\'ezout theorem by checking that midpoints of each sides have determinants of different signs. ) In this case, the parity of the noiseless binomial signal correlation depends on $P_S$ and in particular $Q^0$ is a minimiser only on a measure zero set.
           
           One the other hand, for the configuration  
           \begin{equation}
               \{\frac{1}{4}\begin{bmatrix}
                   1 & 1 \\
                   1 & 1
               \end{bmatrix},
               \frac{1}{16}\begin{bmatrix}
                   9 & 3 \\
                   3 & 1
               \end{bmatrix},
               \frac{1}{16}\begin{bmatrix}
                   1 & 3 \\
                   3 & 9
               \end{bmatrix}\}
           \end{equation} the image of $\Delta^{\mathfrak{S}}$ lies in the chamber of positive determinant: the interior of  $\Delta^{\mathfrak{S}}$ can only intersect in one irreducible curve of degree $2.$
       \end{remark} 
 
    \subsubsection{Level set of binomial correlation}\label{discriminant introduction}
        The condition for $\Delta_P$ to have an interior minimiser for $I$ is similar. We consider binary $X,Y$ and strictly positive distributions for simplicty. We have seen that this condition is the existence of some constant $c$ such that
       \begin{equation}
           \alpha(Q) = c = \beta_s(Q)
       \end{equation} for all $s$. This means all $Q_{X,Y|S= s}$ and $Q_{X,Y}$ lie on the same \textit{correlation level set}, i.e. the quadric surface defined by 
       \begin{equation}\label{sigma eq}
           q_{11}q_{22} - r q_{12}q_{21} = 0, 
       \end{equation} where $r = e^{c}$. Varying $c\in (-\infty, \infty)$, we have a family $\Sigma_r$ of surfaces with $\Sigma_1 = \Sigma$. Using B\'ezout theorem again, this means given $P$, this is equivalent to say the plane $H$ containing two lines $L_1, L_2$ corresponding to fixed marginals intersects with some $\Sigma_a$ in a line. We explain this in detail in the next section.

 \subsection{Discriminants on shuffle distributions}\label{section discriminant}
 
    In this section, we work out the baby model of the scheme described in Sec. \ref{discriminant introduction} in detail, finding out the exact algebraic condition for a distribution to contain a minimum in the interior of the correlation domain $\Delta_P$ for the base model where $S,X,Y$ are all binary. Note that the close form solutions in \cite{rauh2019properties} split into two cases and is only valid when $I(S:X,Y)$ has a critical point. Otherwise the minimiser lies at boundaries. It is certainly possible to derive the same results using the conditions directly--indeed we obtain four linear boundaries, conceptually equivalent to the linear condition that the solution lies within the rectangular domain. However the methods used here might provide a more systematical recipe for general cases. 
    
    In \cite{rauh2019properties}, the percentages of distributions attaining unique minima for several sizes of state spaces for $S,X,Y$ are estimated with the Monte Carlo method. (Note that for the all-binary model, as discussed, the minimum is unique except for the \textit{diagonal} $\{(P,P)\}\subset\Delta^1\times\Delta^1$.) With the discriminant we derive below in this section, it is possible to calculate analytically the percentage of distributions attaining interior or boundary minimal mutual information. Such result is perhaps of more interests as a first step to study the statistics in nature. For example, are pairs of neurons more inclined to have interior or boundary minimal information? Do they stay near or away from the minimum? If they could have an interior minimum, are they inclined to have higher or lower correlations relative to the minimum? And what are good ways to quantify these effects? 
    
    \vspace{0.5cm}
    \noindent\textbf{Assumptions and conventions}:
    \begin{enumerate}
        \item The probability simplex $\Delta^{\mathfrak{X}\times\mathfrak{Y}} \cong \Delta^n$ as the real nonnegative part of the projective space $\mathbb{P}^n_{\mathbb{C}} =: \mathbb{P}^n$ instead of the real part defined by the equality that coordinates sum to one for the convenience of homogeneous coordinates. We refer to the Segre variety $\Sigma$ ambiguously as the whole variety or the semi-algebraic set in the simplex when it is clear from the context. 

        \item We may even only consider the interior of $\Delta^n$, denoted $\Delta^n_+$, consisting of all strictly positive distributions. On the boundary, each $\Sigma_r$ consisits of two planes (or empty). In fact we shall see soon in Lemma \ref{no boundary} that we lose nothing.  
        
        \item We write the homogeneous coordinates for $\mathbb{P}^3$ in the form of $2\times 2 $ matrix \[ \begin{bmatrix}
            x & y \\
            z& w
        \end{bmatrix}. \] For the dual projective space $\mathbb{P}^{3*}$ that parametrises hyperplanes in $\mathbb{P}^3$, we use homogeneous coordinates \[\begin{bmatrix}
            a & b \\
            c & d
        \end{bmatrix}.\] 

        We use $A\cdot B := Tr AB^T$ for the usual inner product.

        \item Note that as discussed, the distribution $P_S$ does not affect whether the minimum is in the interior and is not relevant for the all-binary case. 

    \end{enumerate}

    Polynomial computations are partially carried out with the Matlab function \texttt{gbasis}. % add to ref 

    \vspace{0.5cm}
    First, we fix a point $P$ on the Segre surface $\Sigma$. What are the points on $\Sigma$ such that the distribution determined has an interior minimiser? This is equivalent to ask that for $P\in\Sigma$, if for some $r \in (0, \infty)$, there is some hyperplane $H$ that (i) contains $P$ and $Q$, (ii) contains the direction (or line) \[ V := \begin{bmatrix*}[r]
        1 & -1 \\
        -1 & 1
    \end{bmatrix*}, \] and (iii) the intersection $H\cap\Sigma_r$ is linear. We may relax the first condition to (i') $H\ni P$ to have the scheme below:
    \begin{enumerate}
        \item Fix $P\in \Sigma$ and $r\in (0, \infty)$. Find the hyperplane $H_r$ satisfying (i') (ii) and (iii). (It is easy to see $r$ is unbounded on $\Delta$.)
        \item Find the intersection $C(P,r) : = H_r\cap\Sigma$.
        \item Now for fixed $P$, we have the locus of $Q$, $A(P): = \bigcup_r C(P,r)$. The points configurations with interior minimiser is thus \begin{equation}
            \mathcal{A} := \bigsqcup_{P \in \Sigma} A(P) \subset \Sigma\times\Sigma.
        \end{equation}
    \end{enumerate} Note that, by Thm. \ref{binary main theorem}, an interior minimiser is unique except for the degenerate case where the two points in the configuration on $\Sigma$ are the same point, so we do not make distinction between cases with unique minimisers and the degenerate case. Moreover, Thm. \ref{binary main theorem} also indicates a boundary minimiser is not contained in any of $\Sigma_r$. In fact:
    
    \begin{lemma}\label{no boundary}
        If $Q^*$ is a unique minimiser of $I$ at the boundary, then derivatives of $I$ at $Q^*$ are nonzero. 
    \end{lemma}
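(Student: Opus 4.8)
The plan is to pin $Q^*$ down to a vertex of the box $\Delta_P$, read off each inward one-sided derivative of $I$ at $Q^*$ from the formula of Lemma~\ref{deriv I} (taken as a limit from $\Delta_{P+}$, as noted above for boundary points), and then rule out that all of them vanish.

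\emph{Reduction to a vertex.} Assume, as throughout this section, that $\Delta_P$ has full support. By Theorem~\ref{binary main theorem}(2) a boundary minimiser has coordinates $t^*=t^{Max}$ or $t^*=t^{min}$ on $\Delta_P$; since the relabelling $x_1\leftrightarrow x_2$ sends every $V_s$ to $-V_s$ and hence interchanges these two cases, I may assume $Q^*=Q^{t^{Max}}$. Full support forces $t^{Max}_s>0$ for every $s$, so the two ``$+1$''-corners of each $V_s$ satisfy $Q^*(s,x_1,y_1),Q^*(s,x_2,y_2)>0$, and summing over $s$ the two diagonal entries of the marginal $Q^*_{X,Y}$ are positive as well. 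By Lemma~3.3 of \cite{rauh2019properties} (invoked in the proof of Theorem~\ref{binary main theorem}) some entry of $Q^*_{X,Y}$ vanishes, hence an off-diagonal one; applying the further relabelling $(x_1,x_2,y_1,y_2)\mapsto(x_2,x_1,y_2,y_1)$, which preserves every $V_s$, I may take the vanishing entry to be $(x_1,y_2)$, so that $Q^*(s,x_1,y_2)=0$ for all $s$.

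\emph{The inward derivative.} For small $h>0$ let $Q^{(s,h)}\in\Delta_{P+}$ be obtained from $Q^*$ by decreasing only the $s$-th coordinate by $h$. Along this path $Q^{(s,h)}(s,x_1,y_2)=h=Q^{(s,h)}(x_1,y_2)$ — the remaining conditionals are frozen and already vanish on this state — while every other probability occurring in Lemma~\ref{deriv I} converges to its value at $Q^*$; the two vanishing factors therefore cancel in the ratio defining $D_{V_s}I$, and I obtain
\[
D_{-V_s}I(Q^*)\;=\;\log\frac{Q^*(s,x_2,y_1)\,Q^*(x_1,y_1)\,Q^*(x_2,y_2)}{Q^*(s,x_1,y_1)\,Q^*(s,x_2,y_2)\,Q^*(x_2,y_1)},
\]
valid when $Q^*(s,x_2,y_1)>0$, with the evident modifications otherwise: if $Q^*(s,x_2,y_1)=0$ while $Q^*(x_2,y_1)>0$ the limit is $-\infty$ (so for that $s$ there is nothing left to prove), and if $Q^*(x_2,y_1)=0$ — whence $Q^*(s,x_2,y_1)=0$ for all $s$ — the right-hand side reads $\log\!\big(Q^*(x_1,y_1)Q^*(x_2,y_2)\,/\,Q^*(s,x_1,y_1)Q^*(s,x_2,y_2)\big)$.

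\emph{The contradiction, and the crux.} Suppose, for contradiction, $D_{-V_s}I(Q^*)=0$ for every $s$; the $-\infty$ case is then excluded, and the vanishing of each derivative rearranges to
\[
\frac{Q^*(s,x_2,y_1)}{Q^*(x_2,y_1)}\;=\;\frac{Q^*(s,x_1,y_1)\,Q^*(s,x_2,y_2)}{Q^*(x_1,y_1)\,Q^*(x_2,y_2)}\qquad(\text{all }s),
\]
to be read as $Q^*(s,x_1,y_1)Q^*(s,x_2,y_2)=Q^*(x_1,y_1)Q^*(x_2,y_2)$ in the degenerate case. Summing over $s$ and using $Q^*(x,y)=\sum_s Q^*(s,x,y)$ yields $\sum_s Q^*(s,x_1,y_1)Q^*(s,x_2,y_2)=\big(\sum_s Q^*(s,x_1,y_1)\big)\big(\sum_s Q^*(s,x_2,y_2)\big)$, i.e.\ $\sum_{s\ne s'}Q^*(s,x_1,y_1)Q^*(s',x_2,y_2)=0$, which is absurd since all factors are strictly positive and $|\mathfrak{S}|\ge2$ (when $|\mathfrak{S}|=1$ the statement is vacuous, $I(S:X,Y)$ being identically $0$). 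Hence at least one inward derivative of $I$ at $Q^*$ is nonzero. I expect the genuinely fiddly part to be the bookkeeping of which coordinates of $Q^*$ collapse on the face $\{Q(x_1,y_2)=0\}$ — both off-diagonal corners of some conditional, or an entire ``column'' of the marginal — which is what forces the case split in the derivative computation; once the closed form is in hand, the finish is just the elementary failure of $\sum_s a_sb_s=(\sum_s a_s)(\sum_s b_s)$ for strictly positive $(a_s),(b_s)$ with more than one index.
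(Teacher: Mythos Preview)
Your argument is correct and takes a genuinely different route from the paper's. The paper works in the all-binary setting ($|\mathfrak S|=2$), treating the two conditionals $Q^*(\cdot|s_1), Q^*(\cdot|s_2)$ as a point configuration $\{P,Q\}$ on a proper face $F$ of $\Delta^{\mathfrak X\times\mathfrak Y}$ and arguing geometrically: on a $1$-face the relevant binomial correlation is monotone, and on a $2$-face a B\'ezout count forces $P=Q$ if the derivative vanishes. Your approach instead computes the inward one-sided derivatives at the corner $t^{Max}$ explicitly and finishes with the elementary impossibility $\sum_s a_s b_s = (\sum_s a_s)(\sum_s b_s)$ for strictly positive sequences with $|\mathfrak S|\ge 2$. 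This avoids the algebro-geometric machinery entirely and works for arbitrary $|\mathfrak S|$, not just the binary target.

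Two small points to tighten. First, your path $Q^{(s,h)}$ is not actually in $\Delta_{P+}$, since the coordinates $Q^*(s',x_1,y_2)$ for $s'\neq s$ stay zero; this is harmless, because the expression in Lemma~\ref{deriv I} is well-defined once the eight probabilities it mentions are positive, which they are along your path, so the limit computation stands. Second, in the degenerate branch $Q^*(x_2,y_1)=0$ the per-$s$ condition reads $a_s b_s = AB$, and summing over $s$ gives $\sum_s a_s b_s = nAB$, not $AB$ as you write; the contradiction is then even more immediate (for $n\ge 2$ each $a_s<A$ and $b_s<B$, so $a_s b_s<AB$), so only the write-up of that sub-case needs a one-line adjustment.
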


    \begin{proof}
        By \ref{binary main theorem}, the point configuration of $Q^*$, denoted $\{P,Q\}$, lies in the interior of a face (of dimension $1$ or $2$), denoted $F$, of $\Delta^3$. We can check the derivatives exist and are in the form of dropping the zeros terms from \ref{deriv I}.  

        If $\dim F = 1$, say $F = \overline{\delta_{r}\delta_{r'}}$ for some states $r, r'\in\mathfrak{X}\times\mathfrak{Y}$, the binomial correlation of the binomial $r-r'$ is monotone on $F$, so the derivatives are zero iff $P=Q$.

        If $\dim F = 2$, the binomial correlation on $F$ is defined via a degree $2$ binomial of the form $rr''-r'$ for some states $r, r' , r''$. By B\'ezout's theorem, the derivative is zero only if $P = Q$. 
    \end{proof}

    Next we want to see what hyperplanes satisfy condition (iii).  Write the equation for $H$ as $ax + b y + cz + dw = 0$ for \[ n := \begin{bmatrix}
            a & b \\
            c & d
        \end{bmatrix}.\] 

    \begin{lemma}
        Let $r>0$. Then $H\cap\Sigma_r$ is linear iff $H\in\mathbb{P}^{3*}$ lies in the locus defined by $r ad = bc$.
    \end{lemma}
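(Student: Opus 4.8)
The plan is to identify condition (iii) --- that the hyperplane section $H\cap\Sigma_r$ be \emph{linear}, i.e. a union of lines rather than a smooth plane conic --- with the classical condition that $H$ be a \emph{tangent} hyperplane to the quadric surface $\Sigma_r$. First I would record that, since $r>0$, the surface $\Sigma_r=\{xw-ryz=0\}$ is a \emph{smooth} quadric in $\mathbb P^3$: in the coordinates $(x,y,z,w)$ its Gram matrix is
\[
G_r\;=\;\tfrac12\begin{bmatrix}0&0&0&1\\ 0&0&-r&0\\ 0&-r&0&0\\ 1&0&0&0\end{bmatrix},
\]
whose determinant equals $r^2/16\neq 0$. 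Hence every hyperplane section of $\Sigma_r$ is a plane conic, and (this is the point where smoothness is needed, since a singular quadric can meet a non-tangent plane in a double line) such a conic degenerates into lines exactly when $H$ is tangent to $\Sigma_r$, i.e. when $H$ lies on the \emph{dual quadric} $\Sigma_r^{\ast}\subset\mathbb P^{3\ast}$, cut out by the adjugate form $G_r^{-1}$.

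Second, I would carry out the short computation of $\Sigma_r^{\ast}$. Because $G_r$ is anti-diagonal, so is $G_r^{-1}$, and pairing the dual coordinates $(a,b,c,d)$ with $(x,y,z,w)$ gives, up to a nonzero scalar, the equation $ad-bc/r=0$, that is $r\,ad=bc$. To make the argument self-contained without invoking duality theory, I would instead dehomogenize: on the chart $d\neq 0$, solve the equation $ax+by+cz+dw=0$ of $H$ for $w$ and substitute into $xw-ryz$; one obtains a ternary quadratic form in $(x,y,z)$ whose $3\times3$ symmetric matrix has determinant a nonzero multiple of $(bc-r\,ad)$, so the section is singular (linear) iff $r\,ad=bc$. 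The charts $a\neq 0$, $b\neq 0$, $c\neq 0$ are symmetric, and the remaining stratum $d=0$ --- where $r\,ad=bc$ forces $bc=0$ --- is checked directly to give two lines. For the converse inclusion, that every $H$ with $r\,ad=bc$ genuinely meets $\Sigma_r$ in a pair of lines, the determinant identity above is already an ``if and only if''; alternatively one notes that the Gauss map $\Sigma_r\to\Sigma_r^{\ast}$ is onto over $\mathbb C$ (the ground field fixed in the conventions). A sanity check: the point $[1:t:s:rts]\in\Sigma_r$ has tangent plane with normal $[\,rts:-rs:-rt:1\,]$, and indeed $r\cdot(rts)\cdot 1=r^2ts=(-rs)(-rt)$.

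I do not expect a real obstacle. The only two points needing care are: (a) being precise that ``linear'' means ``the conic degenerates'' and that, \emph{for a smooth quadric}, this is precisely tangency --- which is why the computation $\det G_r=r^2/16\neq0$ must be made explicit and why the hypothesis $r>0$ is used; and (b) the bookkeeping of the anti-diagonal pairing between the coordinates $(x,y,z,w)$ and the dual coordinates $(a,b,c,d)$, so that one lands on $r\,ad=bc$ and not on a spurious variant such as $r\,bc=ad$ (the interchange $b\leftrightarrow c$ is harmless since $bc=cb$, but the roles of $ad$ and $bc$ must not be swapped).
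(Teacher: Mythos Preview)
Your argument is correct, but it takes a genuinely different route from the paper's own proof. The paper exploits the explicit ruling of $\Sigma_r$: after noting that $\Sigma_r$ is linearly isomorphic to the standard Segre surface and hence doubly ruled, it writes down the ideal $\langle \mu x-\lambda y,\; r\mu z-\lambda w\rangle$ of a line in one ruling and observes that $ax+by+cz+dw$ lies in this ideal iff $[ra:b]=[c:d]$, i.e.\ $r\,ad=bc$. Your approach instead recognises the condition ``$H\cap\Sigma_r$ is linear'' as the classical tangency condition for a smooth quadric and reads off the answer from the dual quadric $\Sigma_r^{\ast}$ (or, equivalently, from the vanishing of the $3\times3$ discriminant after dehomogenisation). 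Both arguments are short; the paper's is more concrete and ties in directly with the ruling picture used elsewhere in the section, whereas yours is more conceptual, makes the role of the hypothesis $r>0$ (smoothness) transparent, and would generalise without change to any smooth quadric. Your care about the anti-diagonal pairing is well placed, and your sanity check with the Gauss map at $[1:t:s:rts]$ is a nice confirmation that the equation is $r\,ad=bc$ and not its reciprocal.
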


    \begin{proof}
        First note $\Sigma_r$ is isomorphic to the standard Segre variety $\Sigma_1$ via a linear coordinate change so there are two lines that pass through each point on it contained in $\Sigma_r$. With the knowledge about $\Sigma$, we know each line $L$ has ideal \begin{equation}\label{line eq}
          I :=  <\mu x - \lambda y, r\mu z = \lambda w>
        \end{equation} for some $\lambda, \mu$. Since $L$ is contained in the intersection, $ax + b y + cz + dw$ must be divisible by the linear generators of $I$ above, which means $[r a:b] = [c:d]\in \mathbb{P}^1$, or $rad = bc$.
    \end{proof}
    
    The conditions are translated into algebraic equations on $\mathbb{P}^{3*}$: 
    \begin{equation}\label{H eq}
        \begin{cases}
            n\cdot V = a - b- c +d =0 \\
            rad = bc \\
            n \cdot P = 0
        \end{cases}   .         
    \end{equation} Note as discussed, there are generically two such hyperplanes.

    Next, using the \textit{corner (of cube, ad hoc name) coordinates} for $\Delta^1\times\Delta^1$, we have \begin{equation}
       \sigma : (s,t) := ([s : 1 - s], [ t : 1 -t]) \mapsto 
        \begin{bmatrix}
            s \\
            1 - s
        \end{bmatrix} \begin{bmatrix}
            t &
            1-t
        \end{bmatrix}
        = \begin{bmatrix}
            st & s(1-t) \\
            (1-s)t & (1-s)(1-t)
        \end{bmatrix} \in \mathbb{P}^3. 
    \end{equation} 

    \begin{lemma}
        For any hyperplane $H \subset \mathbb{P}^3$ containing $V$, the preimage of the intersection $H\cap\Sigma$ is a line on  $\Delta^1\times\Delta^1$ under the corner coordinates.
    \end{lemma}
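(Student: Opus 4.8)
The plan is to pull the linear equation of $H$ back along the Segre parametrisation $\sigma$ and to observe that the sole obstruction to the preimage being a line is exactly the coefficient $n\cdot V$, which vanishes precisely because $H\supset V$. Concretely, write $H=\{ax+by+cz+dw=0\}$ with $n=\begin{bmatrix} a & b\\ c & d\end{bmatrix}$. Since $\sigma$ is an isomorphism onto $\Sigma$ (a bijection $\Delta^1\times\Delta^1\to\Sigma$ in corner coordinates, and $\mathbb{P}^1\times\mathbb{P}^1\xrightarrow{\sim}\Sigma$ over $\mathbb{C}$), the preimage $\sigma^{-1}(H\cap\Sigma)$ is literally the zero set in $\Delta^1\times\Delta^1$ of the pulled-back polynomial
\[
F(s,t):=a\,st+b\,s(1-t)+c\,(1-s)t+d\,(1-s)(1-t).
\]

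Next I would expand $F$ and collect terms, obtaining $F(s,t)=(a-b-c+d)\,st+(b-d)\,s+(c-d)\,t+d$. The key step is the identification of the coefficient of the quadratic monomial $st$ with $\operatorname{Tr}(nV^\top)=n\cdot V=a-b-c+d$. Since $H$ contains the point $V=\begin{bmatrix*}[r] 1 & -1\\ -1 & 1\end{bmatrix*}$, we have $n\cdot V=0$, so the $st$-term drops out and $F(s,t)=(b-d)\,s+(c-d)\,t+d$ is affine-linear in the corner coordinates. Its zero set inside $\Delta^1\times\Delta^1$ is therefore the intersection of the square $[0,1]^2$ with a line, which is exactly the claim.

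A little care is then needed to rule out degeneracy: if $b-d=c-d=0$ then (since $H\cap\Sigma$ is nonempty in the relevant range, forcing $F\not\equiv d$ with $d\neq0$) also $d=0$, and together with $a-b-c+d=0$ this gives $n=0$, contradicting that $H$ is a hyperplane; hence $(b-d,c-d)\neq(0,0)$ and $F=0$ genuinely cuts out a line. I would also record the homogeneous picture for later use: homogenising, $F$ is a form of bidegree $(1,1)$ on $\mathbb{P}^1\times\mathbb{P}^1$ whose $s_0t_0$-coefficient is $n\cdot V$, so the condition $V\subset H$ is precisely the condition that the bilinear curve $H\cap\Sigma$ degenerates, in the affine (simplex) chart, to a line — this is the computational fact underpinning the construction of $A(P)$ and the discriminant in the remainder of the section.

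I do not anticipate a real obstacle: the argument is a single substitution plus a coefficient identification. The only two points that demand attention are (i) invoking that $\sigma$ is an isomorphism onto $\Sigma$, so that ``preimage of $H\cap\Sigma$'' is the vanishing locus of $F$ and nothing larger, and (ii) interpreting ``line'' in the sense of the corner coordinates on $\Delta^1\times\Delta^1$ (an affine line, or rather line segment, in $[0,1]^2$), rather than in terms of the abstract bidegree of the curve in $\mathbb{P}^1\times\mathbb{P}^1$.
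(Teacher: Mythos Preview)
Your proof is correct and follows essentially the same approach as the paper: the paper defers to the more general Lemma~\ref{linear intersect} in the appendix, whose proof is exactly your substitution-and-coefficient-identification argument (plug the corner-coordinate parametrisation into $\sum a^{ij}p_{ij}=0$ and observe that the coefficient of each quadratic monomial $s_it_j$ is $a^{ij}-a^{iN}-a^{Mj}+a^{MN}$, which vanishes because $H\supset\mathcal{V}$). Your degeneracy discussion and the homogeneous/bidegree remark are additional care the paper does not spell out, but the core computation is the same.
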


    We prove a more general version in \ref{linear intersect}.

    % \begin{proof}
    %     Plugging in the corner coordinates (and reusing $a,b,c,d$), we have \begin{equation}
    %         0 = ast + bs(1-t) + c(1-s)t + d(1-s)(1-t) = (a-b-c+d)st + (b-d)s + (c-d)t + d. 
    %     \end{equation} $H$ satisfies the first equation of \ref{H eq}. Hence the preimage satisfies \begin{equation}
    %         (b-d)s + (c-d)t + d= 0. 
    %     \end{equation}
    % \end{proof}

    Now fix $\Tilde{P} = (s,t)\in \Delta^1\times\Delta^1$ such that $\sigma(\Tilde{P}) = P$. For each $r\in(0,\infty)$, we have (at most) two lines as the preimage of the intersection $H_r\cap\Sigma$ passing through $\Tilde{P}$ with slopes \begin{equation}
        \delta_{\pm}(r) : = -\frac{b-d}{c-d}, 
    \end{equation} with some labelling with $\pm$, abbreviated $\delta$ when there is no confusion. 

    We may ignore the cases where $P$ lies on the boundary of $\Delta^3$. 
    \begin{lemma}\label{a neq 0}
        The system \ref{H eq} has a solution such that $abcd = 0$ iff $P$ lies at a vertex of  $\Sigma\cap\Delta^3$.
    \end{lemma}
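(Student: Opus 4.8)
The plan is to run the whole argument in the dual space $\mathbb{P}^{3*}$ and to exploit the elementary dictionary between a vanishing coordinate of $n$ and incidence of $H$ with a vertex of $\Delta^3$. With the pairing $n\cdot M = \mathrm{Tr}(nM^{\top})=ax+by+cz+dw$, evaluation at the four rank-one coordinate points $e_1,\dots,e_4$ of $\Sigma\cap\Delta^3$ gives $n\cdot e_1=a$, $n\cdot e_2=b$, $n\cdot e_3=c$, $n\cdot e_4=d$; hence $a=0\iff H\ni e_1$, and similarly for $b,c,d$. Thus $abcd=0$ is exactly the statement that $H$ passes through some vertex of the simplex. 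By a \emph{solution} of \ref{H eq} I mean a member of the two-hyperplane family $H_{r,\pm}$ cut out by the pencil $\{\,n\cdot V=0,\ n\cdot P=0\,\}$ against the quadric $rad=bc$ (the pair the text records as ``generically two such hyperplanes''), so that the asserted equivalence is to be read as: every such hyperplane carries $abcd=0$ precisely when $P$ is a vertex.

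First I would extract the arithmetic content of $rad=bc$ under $abcd=0$. Since $r\in(0,\infty)$ is finite and nonzero, a one-line case check shows a single vanishing entry forces a second: $a=0$ or $d=0$ collapses $rad=bc$ to $bc=0$, while $b=0$ or $c=0$ collapses it to $ad=0$. Hence any solution with $abcd=0$ satisfies both $ad=0$ and $bc=0$, i.e.\ it kills one entry of the diagonal pair $\{a,d\}$ and one of the antidiagonal pair $\{b,c\}$; geometrically $H$ meets one of $\{e_1,e_4\}$ and one of $\{e_2,e_3\}$. Feeding each of the four admissible vanishing patterns into $n\cdot V=a-b-c+d=0$ determines $n$ up to scale and produces exactly the four coordinate hyperplanes $\{x+y=0\}$, $\{z+w=0\}$, $\{x+z=0\}$, $\{y+w=0\}$, each of which contains $V$. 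Intersecting with the nonnegative part of $\Sigma$ and reading off corner coordinates $\sigma(s,t)$, these are precisely the four edges $\{s=0\}$, $\{s=1\}$, $\{t=0\}$, $\{t=1\}$ of $\Sigma\cap\Delta^3$; the remaining constraint $n\cdot P=0$ then says $P$ lies on the corresponding edge.

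The main obstacle is the step that separates a vertex from a mere relative-interior edge point, and it is handled by using that the system yields two branches rather than one. Each of the four coordinate hyperplanes above solves $rad=bc$ identically in $r$ (both sides vanish), so it is an $r$-independent branch and appears exactly when $P$ lies on its edge. A point in the relative interior of an edge lies on a single edge, so only one branch degenerates; the other is a genuine $r$-dependent member with $c\neq d$ and, by direct substitution, all entries nonzero (for instance $n=(1,-1,1,-1)$ on $\{s=1\}$ at $r=1$). A vertex, by contrast, is the unique common point of two of the four edges, so there both branches degenerate to coordinate hyperplanes; equivalently, at $P=e_k$ the single equation $n\cdot P=0$ already forces the dual entry to vanish and then $rad=bc$ forces a second, so every solution carries $abcd=0$. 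Therefore the family $H_{r,\pm}$ consists entirely of $abcd=0$ hyperplanes exactly at the four vertices, which is the claim; equivalently, off the vertices one may always select a branch with $abcd\neq0$, justifying the use of the slope formula $\delta_\pm(r)$ in the sequel. The only genuine bookkeeping is matching each vanishing pattern to the correct edge through $\sigma$ and verifying nonvanishing of the live branch, and Lemma~\ref{no boundary} lets us disregard honest boundary minimisers throughout, so nothing is lost by restricting attention to this interior family.
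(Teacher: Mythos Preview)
Your approach is essentially the paper's: reduce $abcd=0$ via $rad=bc$ to a pair of vanishing entries, feed those into $n\cdot V=0$ to pin $n$ to one of four coordinate hyperplanes, then read off the corresponding edge of $\Sigma\cap\Delta^3$ through $n\cdot P=0$. The paper does exactly this in two lines (``If $a=0$, then the two solutions are $b=0,c=d$ and $c=0,b=d$, implying $s=1$ and $t=1$'').

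Where you go further is in making explicit something the paper's terse proof leaves opaque: the distinction between a vertex and an interior edge point. Under the literal existential reading ``has \emph{a} solution with $abcd=0$'', the equivalence would hold against the whole boundary $\partial(\Sigma\cap\Delta^3)$, not just the vertices---your own analysis shows that on an open edge one branch is the $r$-independent coordinate hyperplane while the other has all entries nonzero. You resolve this by adopting the reading ``\emph{every} solution has $abcd=0$'', which does characterise the vertices and is what the subsequent normalisation $a=1$ actually needs. That clarification is a genuine addition, and your two-branch argument for it is sound.

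One small slip: your illustrative example $n=(1,-1,1,-1)$ on the edge $\{s=1\}$ at $r=1$ satisfies $n\cdot P=2t-1$, so it only works at $t=\tfrac12$. The general live branch there has $b=d=\tfrac{-t}{1-t}a$ and $c=a$ at $r=1$; all entries are indeed nonzero for $t\in(0,1)$, so your claim stands, but the displayed example should be adjusted or qualified.
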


    \begin{proof}
        If $a=0$, then the two solutions are $b = 0, c=d$ and $c = 0, b = d$, implying $s=1$ and $t=1$. By symmetry, $(\Rightarrow)$ is proved.

        Conversely, say $s = t =1$. Then plugging into \ref{H eq}, we have $a = 0$. 
    \end{proof}

    \begin{theorem}\label{discriminant thm}
        Let $P$ have corner coordinates $(s,t)$ such that $0<s\leq t\leq\frac{1}{2}$, then $A(P)$ is the union of lines through $P$ with slope \[
            \delta \in [-\infty, -\frac{1-t}{s})\cup (-\frac{t}{1-s}, \frac{t}{s}) \cup (\frac{1-t}{1-s}, \infty] 
        \]
    \end{theorem}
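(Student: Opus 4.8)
The plan is to translate the geometric conditions (i'), (ii), (iii) into the corner coordinates and then compute the family of admissible slopes explicitly as a function of the parameter $r \in (0,\infty)$, reading off the union of slopes as the image of that function. Since Lemma \ref{a neq 0} lets us assume $abcd \neq 0$ (as $P$ is not a vertex), and the preimage of $H_r \cap \Sigma$ through $\tilde P = (s,t)$ is a genuine line by the lemma preceding Theorem \ref{discriminant thm}, the only data to pin down is the slope $\delta_\pm(r) = -\frac{b-d}{c-d}$ of that line, where $(a,b,c,d)$ solves the system \eqref{H eq}: $a - b - c + d = 0$, $rad = bc$, and $an_{11}+\dots$ , i.e. $n \cdot P = 0$ with $P = \sigma(s,t)$.

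First I would solve the linear-plus-bilinear system. The two linear equations $n\cdot V = 0$ and $n\cdot P = 0$ cut out a pencil (a line) in $\mathbb{P}^{3*}$; parametrize it by one parameter, say express $b,c,d$ in terms of $a$ and one free parameter. Substituting into the quadratic constraint $rad = bc$ gives a quadratic in that free parameter whose two roots give the two hyperplanes $H_r^\pm$, hence the two slopes $\delta_\pm(r)$. I expect the resulting expression for $\delta_\pm(r)$ to be a ratio of low-degree polynomials in $r$ with coefficients in $s,t$; in fact since everything is built from a quadratic, $\delta_\pm(r)$ should satisfy a quadratic over $\mathbb{R}(r)$, and as $r$ ranges over $(0,\infty)$ the two branches should sweep out a union of intervals with endpoints at the degenerate values of $r$ (namely $r \to 0$, $r\to\infty$, and the $r$ for which the quadratic's discriminant vanishes, i.e. $H_r$ is tangent rather than crossing — or where $H_r$ passes through a vertex of $\Sigma\cap\Delta^3$).

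Next I would identify those endpoint slopes. The four candidate endpoints in the claimed answer are $-\infty$, $-\frac{1-t}{s}$, $-\frac{t}{1-s}$, $\frac{t}{s}$, $\frac{1-t}{1-s}$, $\infty$; these are exactly the slopes of the lines joining $\tilde P=(s,t)$ to the four corners $(0,0),(0,1),(1,0),(1,1)$ of the square (the slope to $(0,0)$ is $t/s$, to $(1,1)$ is $(1-t)/(1-s)$, to $(1,0)$ is $-\frac{t}{1-s}$, to $(0,1)$ is $-\frac{1-t}{s}$), together with the vertical direction $\pm\infty$. This is geometrically natural: the line through $\tilde P$ hits $H_r\cap\Sigma$ being linear forces, in the limiting degenerate $r$, the line to pass through a vertex of $\Sigma\cap\Delta^3$ (by Lemma \ref{a neq 0}), and the vertical limit corresponds to $a \to$ the value making the line of the pencil degenerate. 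So the strategy is: (a) show $\delta_\pm(r)$ is continuous and monotone on each subinterval of $r$ between consecutive degenerate values, (b) evaluate the limits at the degenerate values of $r$ and at $0,\infty$, matching them to the five corner/vertical slopes, and (c) conclude the image is the stated union of three open intervals. The ordering hypothesis $0 < s \le t \le \tfrac12$ is what fixes which corner slope is which and guarantees the three intervals are arranged as written (e.g. it forces $-\frac{1-t}{s} < -\frac{t}{1-s}$ and $\frac{t}{s} \le \frac{1-t}{1-s}$, etc.), so I would verify these inequalities at the start.

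The main obstacle I anticipate is the bookkeeping in step (b): correctly matching each branch $\delta_+$ versus $\delta_-$ to the correct interval and making sure no slopes are double-counted or spuriously included, i.e. checking that the two branches together cover exactly the three claimed intervals and nothing in the complementary closed intervals $[-\frac{1-t}{s}, -\frac{t}{1-s}]$ and $[\frac{t}{s}, \frac{1-t}{1-s}]$. This requires knowing the sign of the discriminant of the quadratic in $r$ as $r$ varies and confirming the quadratic has real roots precisely for $r$ outside some closed sub-interval (the "forbidden" $r$-window corresponds to $H_r$ not meeting $\Sigma$ in the nonnegative part, or meeting it only outside $\Delta^3$). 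A secondary subtlety is endpoint inclusion/exclusion: the theorem states half-open brackets at $\pm\infty$ and open at the finite corner slopes, reflecting that $r=0,\infty$ are attained as genuine boundary-of-parameter limits (giving the vertical slope, included) whereas the finite corner slopes occur only in the limit $P$-through-vertex which is excluded since we assumed $abcd\neq0$; I would make this precise by tracking whether each limiting $r$ is in the open interval $(0,\infty)$ or on its boundary.
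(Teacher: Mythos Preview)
Your plan is essentially the paper's own approach: solve the system \eqref{H eq} for the two branches $\delta_\pm(r)$, establish monotonicity of each branch in $r$, and read off the range by computing the limits as $r\to 0^+$ and $r\to\infty$. The paper does exactly this (via a Gr\"obner basis rather than by parametrising the pencil, but that is cosmetic), splitting into the cases $s\neq t$ and $s=t$.

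Two small corrections to your anticipated obstacles. First, there is no ``forbidden $r$-window'': one checks that the discriminant of the quadratic, $B^2-4st(1-s)(1-t)$ with $B=s(1-s)+t(1-t)+r(s-t)^2$, is nonnegative for all $r>0$, so both branches are defined on the whole of $(0,\infty)$ and the only degenerate parameter values are the endpoints $r\to 0,\infty$. Second, your explanation of why $\pm\infty$ is included is inverted: the limits at $r\to 0$ and $r\to\infty$ give precisely the four \emph{finite} corner-slopes (hence the open endpoints there), while the vertical slope $\pm\infty$ is hit by the $\delta_+$ branch at some finite $r\in(0,\infty)$ as it wraps around (this is why the paper passes to the reciprocal $1/\delta_+$ to finish that branch). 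Also note your ordering check $\tfrac{t}{s}\le\tfrac{1-t}{1-s}$ is backwards under $s\le t$; the three intervals in the statement are not disjoint in general, which is harmless since the claim is only about their union.
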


        The other generic cases can be obtained by symmetry. 

    \begin{proof}
        (Sketch. See appendix for details.)

        We solve the system \ref{H eq}, e.g. by computing a Gr\"obner basis. We see $\delta_\pm$ are monotone as functions in $r$ and have limits at $0$ and $\infty$. 
    \end{proof}

    \begin{remark}
        There are likely more elegant ways to do it. We present one idea in the following. 
        
        Fix $P= \sigma(s,t)$, let $\Tilde{A}(P)^*\subset \mathbb{P}^{3*}\times\mathbb{P}^1 =: M$ denote the variety determined by \ref{H eq} for all $r\in\bar{\mathbb{R}}$. The second component of the ambient space $M$, $\mathbb{P}^1 = \mathcal{P}roj\;\mathbb{R}[r_1,r_2]$ is the (extended) space of the parameter $r$.  Now consider the the extended slope function defined by $\delta([a:b:c:d], [r_1:r_2]) = [c-d:b-d]$ that is a regular map 
        \begin{equation}
            \delta: M \to \mathbb{P}^1.
        \end{equation} What is the part of the image $\delta(\Tilde{A}(P))$ corresponding to positive $r$?
    \end{remark}

    \begin{corollary}\label{volume result}
        With the assumptions above, the extremal lines intersect the boundary of $\Delta^1\times\Delta^1$ at the four vertices and points \[
            (0,\frac{t}{1-s}), \; (\frac{s}{t},1), \; (\frac{s}{1-t},0), \text{ and } (0, \frac{t-s}{1-s}). 
        \] Hence $A(P)$ has area 
        \[
            \mathcal{A}(P) : =  \frac{1}{2} + \frac{s}{2}(\frac{t}{1-t} + \frac{s}{1-s} +\frac{1-t}{t} - 1). 
        \] Thus with uniform distribution on shuffle distributions ($(\Delta^1\times\Delta^1)\times (\Delta^1\times\Delta^1)\times \Delta^1 $), there are $\frac{2}{3}$ of shuffle distributions that attain unique interior minima.
    \end{corollary}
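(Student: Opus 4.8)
The plan is to verify, in order, the list of boundary intersection points, the area formula $\mathcal{A}(P)$, and the resulting fraction $\tfrac23$. Throughout I work in the corner coordinates $(s,t)$ on $\Delta^1\times\Delta^1\cong[0,1]^2$, in which a line through $P=(s,t)$ with slope $\delta$ is $y-t=\delta(x-s)$. By Theorem~\ref{discriminant thm} the extremal lines through $P$ are the four lines of slopes $-\tfrac{1-t}{s}$, $-\tfrac{t}{1-s}$, $\tfrac{1-t}{1-s}$, $\tfrac{t}{s}$, so the first step is simply to intersect each with the four edges $x=0,\ x=1,\ y=0,\ y=1$; the computation is linear and yields, for the four slopes respectively, the pairs $\{(0,1),(\tfrac{s}{1-t},0)\}$, $\{(0,\tfrac{t}{1-s}),(1,0)\}$, $\{(1,1),(0,\tfrac{t-s}{1-s})\}$, $\{(0,0),(\tfrac{s}{t},1)\}$. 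Together these are the four square vertices and the four listed points, and the hypotheses $0<s\le t\le\tfrac12$ (equivalently $s\le t$ and $s+t\le1$) are precisely what makes each non-vertex point land on the correct edge inside $[0,1]$.

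For the area: when $0<s\le t\le\tfrac12$ the four slopes are ordered $-\tfrac{1-t}{s}<-\tfrac{t}{1-s}<0<\tfrac{1-t}{1-s}<\tfrac{t}{s}$, so the four extremal lines cut $[0,1]^2$ into eight sectors with common apex $P$; reading off from Theorem~\ref{discriminant thm} which of them lie in $A(P)$, the upshot is that $A(P)$ is the square with four triangles removed, one incident to each corner, each triangle having apex $P$, base on an edge adjacent to that corner, and its two other sides on extremal lines. Using $\tfrac12(\text{base})(\text{height})$ with height the distance from $P$ to the relevant edge (so $s$, $1-s$, $t$, or $1-t$), the four triangle areas are $\tfrac{s}{2}\cdot\tfrac{1-s-t}{1-s}$ at $(0,1)$, $\tfrac{t}{2}\cdot\tfrac{1-s-t}{1-t}$ at $(1,0)$, $\tfrac{1-t}{2}\cdot\tfrac{t-s}{t}$ at $(1,1)$, and $\tfrac{s}{2}\cdot\tfrac{t-s}{1-s}$ at $(0,0)$. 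Subtracting their sum from $1$ (the two triangles over the edge $x=0$ combine to $\tfrac{s(1-2s)}{2(1-s)}$), a short simplification gives $\tfrac12+\tfrac s2\bigl(\tfrac{t}{1-t}+\tfrac{s}{1-s}+\tfrac{1-t}{t}-1\bigr)=\mathcal{A}(P)$; as a check this specialises to $\tfrac23$ at $(\tfrac14,\tfrac12)$ and to $\tfrac{1-2s+2s^2}{1-s}$ on the diagonal $s=t$.

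For the fraction $\tfrac23$: a shuffle distribution of the all-binary model is a triple $\bigl(P_S,(P_{X\mid S=1},P_{Y\mid S=1}),(P_{X\mid S=2},P_{Y\mid S=2})\bigr)$, and under $\sigma$ the last two pairs are exactly the corner coordinates of the two points $P,Q\in\Sigma$ of the configuration of conditionals, while $P_S$ is irrelevant by Assumption~4. So under the uniform law the corner coordinates of $P$ and of $Q$ are independent and uniform on $[0,1]^2$, the locus $\mathcal{A}\subset\Sigma\times\Sigma$ has fibre $A(P)$ over $P$, and (the minimiser being unique off a null set by Theorem~\ref{binary main theorem}(3)) the probability of a unique interior minimiser is $\int_{[0,1]^2}\mathcal{A}(s,t)\,ds\,dt$. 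Since $\mathcal{A}$ is invariant under the order-$8$ dihedral group on $[0,1]^2$ generated by $s\mapsto1-s$, $t\mapsto1-t$, $s\leftrightarrow t$ (relabelling the states of $X$ and of $Y$ and swapping $X\leftrightarrow Y$), with the triangle $\{0<s\le t\le\tfrac12\}$ as a fundamental domain, this integral equals $8\int_{\{0<s\le t\le1/2\}}\mathcal{A}$. Integrating first in $s$ over $[0,t]$ (the only nonelementary primitive being $\int_0^t\tfrac{s^2}{1-s}\,ds=-t-\tfrac{t^2}{2}-\ln(1-t)$) and then in $t$ over $[0,\tfrac12]$ (using $\int_0^{1/2}\tfrac{t^3}{1-t}\,dt=\ln2-\tfrac23$ and $\int_0^{1/2}\ln(1-t)\,dt=\tfrac{\ln2}{2}-\tfrac12$), the logarithms cancel and one obtains $\int_{\{0<s\le t\le1/2\}}\mathcal{A}=\tfrac1{12}$, hence $\tfrac23$.

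The genuinely delicate step is the second one: pinning down exactly which of the eight sectors make up $A(P)$, i.e.\ that $A(P)$ is the square minus the four corner triangles rather than, say, the complement of a single double-wedge. This needs the slope description in Theorem~\ref{discriminant thm} to be unpacked with care, since its listed slope-intervals are not pairwise disjoint. Once that is fixed, the area algebra and the reduction to the fundamental domain (getting the multiplicity $8$ right) are routine, and the concluding integral is elementary; the only mildly pleasant surprise is that the $\ln 2$ contributions must cancel for the answer to come out rational.
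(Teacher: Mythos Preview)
Your proof is correct and follows exactly the route the paper intends: the paper's own proof says only ``by direct calculations'' for the first two claims and ``by symmetry, we integrate $\mathcal{A}(P)$ over $\{0\le s\le t\le\tfrac12\}$ and multiply by $8$'' for the last, so you have supplied precisely the details the paper omits. Your boundary intersections, the four triangle areas, the algebraic simplification to $\mathcal{A}(P)$, and the final integral (with the $\ln 2$ cancellation) all check out.

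One remark on the point you flag as ``delicate'': you are right that the slope intervals in Theorem~\ref{discriminant thm} as stated overlap, and your instinct to go back to the proof is correct. In fact the endpoints $\tfrac{t}{s}$ and $\tfrac{1-t}{1-s}$ of the second and third intervals are transposed in the theorem statement; the appendix proof has $\delta_-$ ranging over $(-\tfrac{t}{1-s},\tfrac{1-t}{1-s})$ and $\delta_+$ (passing through $\infty$) over the complement of $[-\tfrac{1-t}{s},\tfrac{t}{s}]$, which gives exactly the three \emph{disjoint} slope intervals whose complement is the two closed wedges producing your four corner triangles. So your reading---square minus four triangles rather than minus one double-wedge---is the intended one, and it is the only one consistent with both the figure and the area formula.
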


    \begin{proof}
        By direct calculations, we have the first two statements. 

        By symmetry, we integrate $\mathcal{A}(P)$ as a function in $(s,t)$ over the region defined by $0\leq s \leq t \leq \frac{1}{2}$ and multiply by $8$.
    \end{proof}

   A typical region $A(P)$ is shown in Fig. \ref{discriminant pic}.
   \begin{figure}[h]\label{discriminant pic} 
       \centering
       \includegraphics[width=1\linewidth]{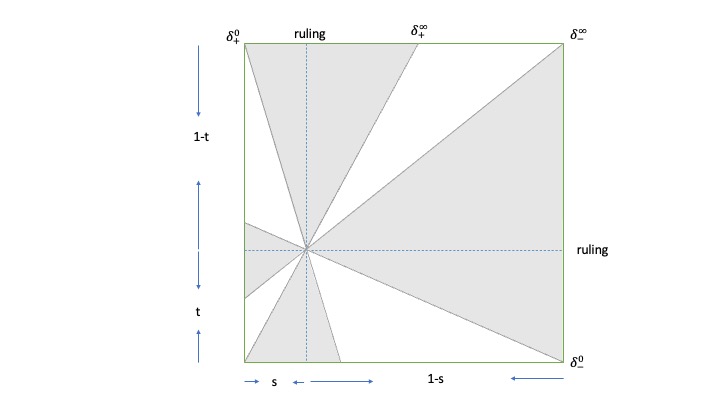}
       \caption{The region of distributions that gives an interior minimum (shaded) given a fixed distribution of coordinate $P = (s,t)$. The limiting lines are labelled with their slopes. The two rulings through $P$ are in dashed lines, separating parities of signal correlation.}
       \label{fig:enter-label}
   \end{figure}

   % To have a more detailed sense where the minimum occurs for a distribution, we note the following lemma. 
   \begin{lemma}
       With the assumptions above, if $Q\in\Sigma$ has corner coordinates $(x,y)$, then the signal correlation of the configuration $\{P,Q\}$ has the same sign as \[
            (x-s)(y-t).
       \]
   \end{lemma}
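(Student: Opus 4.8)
The plan is to reduce the statement to a single $2\times 2$ determinant computation together with Definition \ref{bin corr def}. Under the identification of $\Dsxy$ with configurations of conditionals, the configuration $\{P,Q\}$ with $P=\sigma(s,t)$ and $Q=\sigma(x,y)$, together with any choice of $P_S$ of weight $\lambda:=P_S(s_1)\in(0,1)$, has shuffle distribution $Q^0$ whose $X,Y$-marginal is the mixture
\begin{equation}
    Q^0_{X,Y} \;=\; \lambda\,\sigma(s,t) + (1-\lambda)\,\sigma(x,y),
\end{equation}
since $Q^0(x',y')=\sum_s P(s)P(x'|s)P(y'|s)$ and the conditionals $P(\cdot\,|s_1)$, $P(\cdot\,|s_2)$ are precisely the $X$- and $Y$-marginals of the points $P,Q\in\Sigma$, namely $[s:1-s],[t:1-t]$ and $[x:1-x],[y:1-y]$ in homogeneous coordinates.

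The next step is to compute $\det Q^0_{X,Y}$. Writing each of the rank-one matrices $\sigma(s,t)$ and $\sigma(x,y)$ as a column vector times a row vector and using the identity $\det(a_1b_1^{\top}+a_2b_2^{\top})=\det[\,a_1\ a_2\,]\cdot\det[\,b_1\ b_2\,]$ valid for $2\times2$ matrices, one finds
\begin{equation}
    \det Q^0_{X,Y} \;=\; \lambda(1-\lambda)\,\det\!\begin{bmatrix} s & x\\ 1-s & 1-x\end{bmatrix}\det\!\begin{bmatrix} t & y\\ 1-t & 1-y\end{bmatrix} \;=\; \lambda(1-\lambda)\,(s-x)(t-y).
\end{equation}

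Finally, by Definition \ref{bin corr def} the signal correlation of the configuration is $\alpha(Q^0)=\alpha(Q^0)(b)$ for the binomial $b=(x_1,y_1)(x_2,y_2)-(x_1,y_2)(x_2,y_1)$ fixed in Section \ref{two binary}, i.e.
\begin{equation}
    \alpha(Q^0)=\log\frac{Q^0(x_1,y_1)\,Q^0(x_2,y_2)}{Q^0(x_1,y_2)\,Q^0(x_2,y_1)},
\end{equation}
whose sign agrees with that of $Q^0(x_1,y_1)Q^0(x_2,y_2)-Q^0(x_1,y_2)Q^0(x_2,y_1)=\det Q^0_{X,Y}$, the states being labelled so that $(x_1,y_1)$ and $(x_2,y_2)$ index the diagonal entries of the matrix form. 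Combining the two displays yields $\operatorname{sign}\alpha(Q^0)=\operatorname{sign}\big((s-x)(t-y)\big)=\operatorname{sign}\big((x-s)(y-t)\big)$, which is moreover independent of $\lambda$ in accordance with the earlier remark, and coincides with Lemma \ref{n = 2 sig corr} upon substituting $P(x_1|s_1)=s$, $P(x_1|s_2)=x$, $P(y_1|s_1)=t$, $P(y_1|s_2)=y$.

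I do not expect any genuine obstacle here. The only care needed is sign bookkeeping: one must track which ordered pair of states plays the role of the ``$+1$'' diagonal in $b$, so that $\det Q^0_{X,Y}>0$ corresponds to positive rather than negative signal correlation; and one should invoke the standing assumption that $P,Q$ lie in the interior of $\Sigma\cap\Delta^3$, which guarantees all four entries $Q^0(x_i,y_j)$ are strictly positive so that $\alpha(Q^0)$ is defined. In the boundary case $(x-s)(y-t)=0$ the two conditionals share a ruling of $\Sigma$, so $Q^0\in\Sigma$ and $\alpha(Q^0)=0$, and the claimed sign equality still holds.
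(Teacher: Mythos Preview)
Your proof is correct but follows a different route from the paper's. You compute $\det Q^0_{X,Y}$ directly via the rank-one decomposition and the identity $\det(a_1b_1^\top+a_2b_2^\top)=\det[a_1\ a_2]\det[b_1\ b_2]$, obtaining the explicit formula $\lambda(1-\lambda)(s-x)(t-y)$; this is essentially the computation behind Lemma \ref{n = 2 sig corr}, rewritten in corner coordinates, as you yourself note. The paper instead argues geometrically: the zero locus of $\alpha(Q^0)$ as a function of $Q\in\Sigma$ (with $P$ fixed) consists of those $Q$ for which the mixture $Q^0_{X,Y}$ lies on $\Sigma$, and by B\'ezout's theorem this happens only when the line $\overline{PQ}$ is contained in $\Sigma$, i.e.\ when $Q$ lies on one of the two rulings $x=s$ or $y=t$ through $P$; the sign is then constant on each of the four complementary regions and is determined by evaluating at, say, the corners of the square.

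Your approach is self-contained and gives the exact factorisation, not just the sign, and in particular makes the independence from $\lambda$ transparent. The paper's approach is less computational and fits the surrounding narrative of using B\'ezout to read off qualitative information from intersection patterns; it also scales more gracefully to higher-dimensional $\Sigma$, where an explicit determinant formula would be messier while the ruling argument remains clean.
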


   \begin{proof}
       Through $P$, the only two rulings are $x=s$ and $y=t$. Thus by B\'ezout's theorem, signal correlations must have the same signs within the four regions separated by these two rulings. One can easily plug in, e.g. the corners to obtain the parities. 
   \end{proof}

	The percentages of distributions attaining unique (including boundary and interior) minima for several sizes of state spaces for $S,X,Y$ are estimated with the Monte Carlo method, with respect to uniform distributions on standard simplices in \cite{rauh2019properties} instead of the ``Euclidean uniform'' measure used in \ref{volume result}. 

   For completeness, we  derive the volume form of the simplex projected onto $\Sigma$ below. 

    \begin{lemma}
        Given $P\in\Sigma$, then the length of the line segment of the line $P+gV$ with $g\in\mathbb{R}$ within the simplex $\Delta^3$ is \[
            l(s,t):=\min\{s,t,1-t,1-s\}.
        \]  Therefore if $f:\Delta^3\to \mathbb{R}$ is a real function constant on each line $P+gV$, then \[
            \int_{\Delta^3} f dP = \int_{\Delta^1\times\Delta^1} f\circ\sigma(s,t) l(s,t)dsdt,
        \] where the LHS is considered under the standard (uniform) measure, and the RHS uses the corner coordinates. 
    \end{lemma}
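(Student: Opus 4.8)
The plan is to realise $\Delta^3$ as a ``cylinder'' over $\Sigma\cap\Delta^3$ whose fibres are the segments $(P+\mathbb{R}V)\cap\Delta^3$, compute the fibre length, and integrate fibrewise. The one slightly non-obvious ingredient — which does almost all the work — is the identity
\[
\det(Q-gV)=\det Q-g \qquad\text{for any }Q\text{ with } q_{11}+q_{12}+q_{21}+q_{22}=1,
\]
obtained by expanding the determinant: the $g^{2}$ terms cancel and the coefficient of $g$ is $-(q_{11}+q_{12}+q_{21}+q_{22})=-1$. First I would use this to set up the fibration. It shows that $\rho(Q):=Q-(\det Q)V$ satisfies $\det\rho(Q)=0$, and a direct check that $\rho(Q)$ is still nonnegative (e.g.\ $q_{11}-\det Q=q_{11}(1-q_{22})+q_{12}q_{21}\ge 0$, and the analogous identities for the other three entries) gives a retraction $\rho:\Delta^3\to\Sigma\cap\Delta^3$ along the lines $P+\mathbb{R}V$. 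The same identity shows these lines meet $\Sigma$ in at most one point, so the fibres of $\rho$ are disjoint; since $\sigma$ restricts to a homeomorphism $\Delta^1\times\Delta^1\to\Sigma\cap\Delta^3$, the map $(s,t,g)\mapsto\sigma(s,t)+gV$ is a bijective parametrisation of $\Delta^3$ for $g$ in the relevant interval.

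Next I would compute the fibre over $\sigma(s,t)$. The entries of $\sigma(s,t)+gV$ are $st+g,\ s(1-t)-g,\ (1-s)t-g,\ (1-s)(1-t)+g$, so nonnegativity forces
\[
g\in\big[-\min(st,(1-s)(1-t)),\ \min(s(1-t),(1-s)t)\big].
\]
The length of this interval is invariant under $s\mapsto 1-s$, $t\mapsto 1-t$, and $s\leftrightarrow t$ (these just permute the four entries of $\sigma(s,t)$, i.e.\ swap rows, swap columns, transpose), and so is $\min\{s,t,1-s,1-t\}$; hence it suffices to treat the case $s\le t\le\tfrac12$. There $st\le(1-s)(1-t)$ and $s(1-t)\le(1-s)t$, so the interval is $[-st,\ s(1-t)]$, of length $s(1-t)+st=s=\min\{s,t,1-s,1-t\}=l(s,t)$.

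Finally, the change of variables. In the affine chart $(q_{11},q_{12},q_{21})$ on $\Delta^3$ (with $q_{22}=1-q_{11}-q_{12}-q_{21}$) the parametrisation is $(s,t,g)\mapsto(st+g,\ s-st-g,\ t-st-g)$, with Jacobian
\[
\det\begin{bmatrix} t & s & 1\\ 1-t & -s & -1\\ -t & 1-s & -1\end{bmatrix}=1
\]
(add the first row to the second, then expand along it). By Fubini,
\[
\int_{\Delta^3} f\,dP=\int_{\Delta^1\times\Delta^1}\left(\int_{-\min(st,(1-s)(1-t))}^{\ \min(s(1-t),(1-s)t)} f\big(\sigma(s,t)+gV\big)\,dg\right)ds\,dt,
\]
and when $f$ is constant on each line $P+gV$ the inner integral equals $f\circ\sigma(s,t)$ times the fibre length $l(s,t)$, which is the assertion. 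The main obstacle is not analytic but bookkeeping: one must confirm $\rho$ lands in the nonnegative part, justify the entry-permutation symmetry that collapses the case analysis, and reconcile the ``uniform'' normalisations — the induced Hausdorff measures on $\Delta^3$ and on $\Delta^1\times\Delta^1$ differ from the coordinate Lebesgue measures by the constants $2$ and $(\sqrt2)^2=2$ respectively, which cancel, so the displayed identity holds independently of that convention.
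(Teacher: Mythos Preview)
Your argument is correct and follows the same skeleton as the paper's proof for the fibre length: write the four nonnegativity constraints on $g$, reduce by the dihedral symmetry $s\leftrightarrow 1-s$, $t\leftrightarrow 1-t$, $s\leftrightarrow t$ to the case $s\le t\le\tfrac12$, and read off $l(s,t)=s$. The paper does exactly this in two lines and stops there; it does not justify the integration formula at all.

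What you add beyond the paper is (i) the determinant identity $\det(Q-gV)=\det Q-g$ to produce an explicit retraction $\rho:\Delta^3\to\Sigma\cap\Delta^3$ and confirm that each fibre meets $\Sigma$ in exactly one point, (ii) the Jacobian computation showing the change of variables $(s,t,g)\mapsto(q_{11},q_{12},q_{21})$ has determinant $1$, and (iii) the remark that the Hausdorff-versus-coordinate normalisation constants on the two sides are both $2$ and cancel. None of this is in the paper's proof, and all of it is correct. One small terminological point worth flagging for yourself: the ``length'' in the statement, as the paper uses it, is the parameter range $g_+-g_-$ rather than the Euclidean length $|V|(g_+-g_-)=2(g_+-g_-)$ of the segment; your computation and the integration formula are consistent with that reading, and your normalisation remark effectively absorbs the discrepancy.
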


    \begin{proof}
        For $$P = \sigma(s,t) = \begin{bmatrix}
            st & s(1-t) \\
            (1-s)t & (1-s)(1-t)
        \end{bmatrix},$$ the length is \[
            \min\{(1-s)t, s(1-t)\} + \min\{st, (1-s)(1-t)\}.
        \] For example, taking $s<t<1-t<1-s$, this becomes $st+s(1-t) = s$. The other cases are obtained by symmetry.
    \end{proof}

    Note the standard embedded $n$-simplex has volume $\frac{1}{(n+1)!}$. One would need to normalise to obtain percentages. Theoretically we are ready to compute the percentage of distributions with interior minima. The lemmas above show that one simply needs to integrate rational functions over triangulated domains. However with the steps involved. We leave it as a future project. % can use computer algebra

    % a better structure that respects product/joint
    \subsubsection{Prospects}
        We can see the cardinality of state spaces and number of observables play important rules in order for B\'ezout theorem to work. 
    
        In more general cases where $|\mathfrak{X}|, |\mathfrak{Y}|, |\mathfrak{S}| \geq 2$, each $V_{(s,x,y);(s,x',y')}$ gives a toric hypersurface (i.e. defined by a binomial). We may choose a basis among the $V$'s as the rest are redundant. The intersection of these hypersurfaces is the Segre variety parametrising all distributions $X\perp Y$. The probability simplex $\Delta^{\mathfrak{X}\times\mathfrak{Y}}$ is split into \textit{toric chambers} of parities of binomial correlations w.r.t. $V_{(s,x,y);(s,x',y')}$'s. A necessary condition for an interior distribution $Q$ to be minimising is that the configuration of points must all lie in the same chamber. The necessary and sufficient condition is that all points must lie on the intersection of some correlation level sets of the form \begin{equation}
             \cap_{x,y;x',y'} \{Q | q_{xy}q_{x'y'} - r_{x,y;x',y'}q_{x'y}q_{x'y} = 0\}.
         \end{equation} 
    
         In \cite{rauh2019properties}, cases when $S$ is binary has more detailed characterisations regarding the conditionals such as $(X,S)|Y$. It is surely interesting to see how to describe them in the geometric picture and if algebraic tools and shed more lights. 
    
         We may also consider multivariate cases for minimising $I(S:R_1, \cdots, R_l)$ with fixed marginals $P_{S,R_i}$. This corresponds to, in \cite{hocsten2002grobner}, the graph with nodes $\{S, R_1, \cdots, R_l\}$ and edges $\{(S,R_i)|i\}$. One can still see the correlation level sets are still toric hypersurfaces and in particular the intersection of zero correlation level sets is the locus of mutually independent distributions for $(R_1, \cdots, R_l)$. 

         % have general result for linear intersection, making generalisation more possible
         In the appendix \ref{linear intersect}, we see for binary $S$, indeed the hyperplane intersection with the independence model is linear in the corner coordinates, making generalisation more possible. 
 
         In addition, simplices do not respect products well. The consideration of the family of Segre varieties might be a solution as a better description of joint probabilities. 
        
         We leave them as future projects.

 \subsection{Eyeball heuristics for Gaussian mixture models}\label{section GMM}
    % Compare the result to the previous with Gaussian mixture models: what settings are different/same and how they imply different/same stories.

    % I think this section is good enough written in a soft style. There might not be closed form or detailed analytical descriptions.

    A common topic in data science, the Gaussian mixture model (GMM) is also used often to illustrate how correlations can change information: with the same picture in mind, we have the stimulus $S = 1, \cdots, n$ and a population of $l$ neurons $R = (R_1,\cdots, R_l)$ giving Gaussian data points in $\mathbb{R}^l$ given $S = i$, i.e.  $(R|S= i) \sim \mathcal{N}(\mu_i, \Sigma_i)$. We ask how  $\mu_i$'s and $\Sigma_i$'s affect the ``discernibility'' of $S$, for example measured the mutual information $I(S:R)$. However, the mutual information has no closed form, hence often heuristics are used instead. We conduct this section only as a soft discussion, adding assumptions at will to make things comfortable, following the heuristics to see the analogy between binary models and Gaussian mixture models with more thorough exploration of parameters and also for completeness. See for example \cite{averbeck2006effects}.

    We restrict to the simple case where $n,l=2$ and write $R = (X,Y)$. To tell apart two stimuli $S=1$ and $2$, using a crude eyeball approximation as heuristics, one considers the ``overlap'' of the mass of the two Gaussians, for $S= 1$ and $2$ in the two ellipsoidal confidence regions respectively. (Sometimes even only the overlapping area of the ellipsoids are considered. This approximation can be extremely inaccurate with somewhat less mild parameter.) The more overlap there is, the harder to classify data points. We fix $\mu_i$'s and the diagonals of the covariance matrices $$\Sigma_i(b_i) = \begin{bmatrix}
        a_i & b_i \\
        b_i & c_i
    \end{bmatrix}$$ for each $i$, corresponding to the marginals $P_{R_i|S}$, and vary the covariance $b_i \in(-a_ic_i, a_ic_i) =: D_i$. With a quick look at the ellipse equations, we see $D_i$ is in 1-1 to the rotations of the axes, and $\det \Sigma_i = a_ic_i - b_i^2$ decreases with increasing $|b_i|$ so the distribution becomes more concentrated with the ratio between major and minor axes more extreme. 

    We assume for a while ``moderate parameters''. Let $P(S=1) \approx P(S= 2)$ and the diagonals $\Sigma_i(0) $ be not too far from identity and $\mu_2 - \mu_1$ have direction pointing around $45^\circ$ with magnitude around the order of $a_i, c_i$. With $b_i$'s both increasing, if the major axes of $\Sigma_1$ and $\Sigma_2$ ``align'' with $\mu_2 - \mu_1$, then the overlap should be approximately maximal, or intuitively mutual information is minimised. If $b_i$'s can increased  more, overlap decreases and information increases again. However, the restriction $D = D_1\times D_2$ might not even allow $b_i$ to increase to the point that the major axes can align. In this case, information decreases till the extreme of $b_i$'s. On the other hand, if $b_i$'s decrease to negative values, overlap decreases. This is consistent with the binary model with $\alpha(Q^0) >0$, and information is minimised when $\beta_s(Q) = \alpha(Q)$. 

    \begin{figure}[h]
        \centering
        \includegraphics[width=1\linewidth]{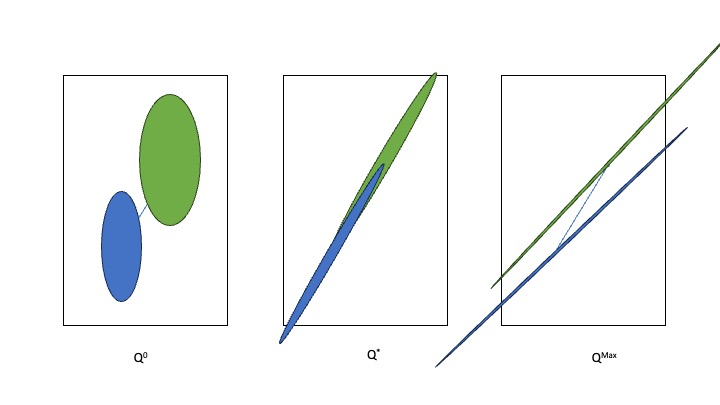}
        \caption{Illustration for overlapping Gaussians. Not fully up to scale. Re-drawn by hands for better layout as the actual elongations should be a lot more extreme. \newline
        \hspace{\linewidth}        
        \textbf{Left}: No noise correlation. This is analogous to the shuffle distribution $Q^0$. \newline
        \hspace{\linewidth}        
        \textbf{Middle}: Noise correlations ``align'' with the signal correlation, resulting in (only intuitively) maximal overlap. The is analogous to $Q^*$ in the discrete case. \newline
        \hspace{\linewidth}        
        \textbf{Right}: When noise correlations increase even more (past $Q^*$ and near $Q^{Max}$), the confidence regions separate. 
        }
        \label{fig:enter-label}
    \end{figure}

    The reader can surely experiment with more parameters to repeat the findings in the binary model. For example, when $X$ is independent of $S$, the overlap is maximal around $b_i = 0$, i.e. around the analogue of the shuffle distribution. We write down conceptual parallels between the binary bivariate models and GMMs below.
    
    \begin{table}[h]
        \centering
        \begin{tabular}{cc}
            Binary & GMM \\
             Signal correlation $\alpha$ & $\mu_2 - \mu_1 $ \\
             Noise correlation $\beta$ & $b_i$\\
             $P_{X|S}, P_{Y|S}$ & $a_i, c_i$\\
        \end{tabular}
        \caption{Parallel between GMM and binary response model in the notations used in the paper}
        \label{GMM table}
    \end{table}

    % b_i's are indep in this case unlike gamma or pearson
    % Although with only very crude approximations, we still see the same stories for GMM and binary models. Under suitable assumptions, 
    
    % so we restrict the submanifold of fixed diagonal. One other model is to fix the differential entropy (also for discrete case), equivalent to fixed $\det \Sigma$ (or constant log trace on the tangent plane), which is another submanifold of interest. There is also the submanifold of constant trace. But what do we want to know on these manifolds? E.g. easy parametrisation, BM (=> heat kernels), orthogonal projections to, for numerics. 
    
    \begin{remark}
        Arguably, fixed marginal might not be the most natural setting when it comes to Gaussians. One can also consider fixing the differential entropy that is equivalent to fixing $\det \Sigma_i$. We do not pursuit this direction here. 
    \end{remark}

    % Why binary works (tells the same story). discrete should have some limits to continuous. E.g. Gaussians are especially smooth that binary draws good parallel
   
    Moreover, instead of mutual information, one can simply consider the probability of the classification error for mixture models, which might be even more straightforward for Gaussian ``overlapping'' than mutual information. 
   
    As a side note, we illustrate the minimiser of classification error can be drastically different from the minimiser of mutual information, in the following example. 
            
    \begin{example}   % 反了。Max error ~ min info. Tropical poly => concave. 
        Given a mixture model consisting of random variables $S = 1,2$ and $R$ with the  distribution $P_{R,S}$ known, one may consider a straightforward classifier of data points from $R$ 
            \begin{equation}
                \hat{S}(r) = \arg\max_{s} P(S = s| R = r) = \arg\max_{s} P(S = s, R = r). 
            \end{equation} The probability of classification error is then
            \begin{equation}
                \Prob(\hat{S} \neq S) = \E [\Prob(S = 1, \Prob(S = 2|R) > \Prob(S = 1|R)) + \Prob(S = 2, \Prob(S = 1|R) > \Prob(S = 2|R))].
            \end{equation}
        
            For binary $R = 0,1$, \[
                \Prob(\hat{S} \neq S) = \sum _{s}\min(\Prob(S = s, R = 0), \Prob(S = s, R = 1) ).
             \] % It is easy to see that when the conditionals $P_{R|S}$ are deterministic but not $P_R$, the classification error is zero. 

            Since we already have properties of the minimiser of mutual information on $\Delta_P$ for binary $S, X, Y$, we compare it with the minimiser of classification error when $R = (X,Y)$ on $\Delta_P \sim [t_1^{min}, t_1^{Max}]\times [t_2^{min}, t_2^{Max}]$. 

            Assume $P(S = 1) = P(S= 2) = \frac{1}{2}$ for convenience. Observe for fixed $t_2$, hence fixed $Q(x,y|S=2)$, each $Q(x,y|S=1)$ is linear (affine) as a function of $t_1$ with slope $\pm P(S=s) = \frac{1}{2}$. The classification error is thus 
            \begin{equation}
                \sum_{x,y} \min_s  Q(x,y,s).
            \end{equation} Each term in the sum is continuous and piecewise linear (or \textit{tropical} for readers familiar with tropical geometry) with slopes ordered from left to right $\frac{1}{2}, 0$ or $0, -\frac{1}{2}$ on some subintervals (could be of zero length), each occurring exactly twice. Therefore the sum consists of segments of slopes $1, \frac{1}{2}, 0, -\frac{1}{2}, -1$, and minimum must occur at one of the end points. By symmetry, this is true for fixed $t_1$, and global minima on $\Delta_P$ must be at one of the four corners. 
        % Can't have slope \pm 1 because of the summing to 1 condition. 
        % Maximum must occur at some "node" (divisor) of the trop curve ass. to the trop poly. or boundary
        % Top right corner max must be of slope 0 ==> not unique(check)
        % Top left max can be unique (check)
        % Sensitive to P(s) ==> unlike the mutual info model
        % E.g. the indep. marginal case, diagonal is max again. 

            We can construct examples such that the minimum occurs not at the top right or lower left corner as the case for mutual information: take $P(X|S) = P(Y|S) \cong \frac{1}{2}$, then  $(\pm\frac{1}{2}, \mp\frac{1}{2})$ are isolated minimisers but not $(\pm\frac{1}{2}, \pm\frac{1}{2})$.
    \end{example}

    For overlapping of Gaussians and classification or clustering problems, see \cite{sun2011measuring} \cite{ma2000asymptotic} and their references.

\section{Open questions and discussion}
% Statistic from nature. PID without PID 

\noindent \textbf{Amari's correlation in real data}

Quite a portion of $\Delta_P$ attain an interior minimum as we have seen. As discussed, some studies only analyse local behaviour of information as opposed to the whole landscape. Such approach can sometimes lead to inaccurate generalisations. For example, simple numerical experiments show that information can easily be much higher than at the shuffle distribution when noise correlations are higher than signal correlation. This however lacks the polynomial condition that we benefit from and does not have analytic description of the descriminant when this can happen. 

We also ask: in nature, what scenarios are more common, thinking in terms of neurons and stimulus? For example, for a tuple of neurons, is it more often the case that information is minimised within $\Delta_P$ or on the boundary? If in the interior, do neurons prefer having high positive noise correlations past the minimum, negative noise correlation, or close to the minimum? 

Amari's correlation has theoretical advantage described in the section \ref{Amari corr} and the original paper \cite{amari2006correlation}. It provides a normalised measure to compare correlations from pairs of different ``firing rates''.  For our simple binary model, we have seen that minimal information always occur on the ``diagonal''. Some test runs with real data suggested at least one issue in practice: We tested on neural spiking data of repeated simultaneous recordings, taking small time interval so that each neuron is considered Bernoulli within. Note that Amari's correlation is only continuous on nondegenrate (strictly positive) distributions. Even with sample size of $100 \sim 200$,  distributions from many pairs can still be degenerate, and further ideas are needed for reasonable application of the measure. Nevertheless, the data pairs with nondegenerate pairs do show the correlations are usually $<10$. We leave these questions as future works.

\vspace{0.5cm}
\noindent \textbf{PID without PID}

What we can possibly still agree despite historical debates about syngergy and redundancy is that they refer to higher or lower information respectively compared to some references.  

Besides the vision of building the physical model such as separation of synergy and redundancy, decompositions of information or entropy are convenient measurements in practice and have already shown usefulness for tasks in experimental neuroscience such as classification of neurons. However for this purpose, arguably they do not provide some details about the actual interactions we observe from the landscape on $\Delta_P$. Since providing the whole landscape is likely not practical, so some alternatives with simpler data structures would be necessary. 

As discussed, direct statics on the observed distribution itself could be considered. However, when the size of state spaces or when one considers beyond pairwise distributions, the computation demand might make schemes of this vein less practical.

The other alternative is to extract features of the landscape. For example, we consider Betti numbers of the level sets $I^{-1}(\tau) = \{Q| I(Q) = \tau\} \subset \Delta_P$ as a diagram in $\tau \geq 0$. This does not only record the qualitative shape of the information landscape, which as we have seen provides properties of $P$, but also the magnitudes of mutual information varying on the domain. 

There are a few major differences of this from the usual Morse theory. In the standard setting, one considers inverse image of intervals and avoid ``bad'' critical values''. However by doing so, we would not be able to see the important features of the landscape.

On the other hand, a clear model for quantified synergy or redundancy is unavailable, but for experimental neuroscience, in many cases what is needed is nothing but the relative magnitudes of information. For example, see \cite{baudot2019topological} for using higher mutual informations directly for classification tasks. In the same paper, it is also proved higher mutual information parametrises distributions up to finite points of ambiguity. 

Ultimately, does the landscape consideration provide better statistics? This needs to be testified with real applications.

\vspace{0.5cm}
\noindent \textbf{Homological nature of information theory}

We have played with some models to have a sense how ``second order interactions'' affect information. With the set-function like properties with Shannon entropy, separating parts coming from interactions of different orders is very reminiscent of homology. Indeed, in the framework of \cite{baudot2015homological}, entropy is the only cohomology class.  

\appendix

\section{Components of information}\label{section PID}
Initiated in \cite{williams2010nonnegative}, given three random variables $S,X,Y$ one wishes to find a decomposition 
\begin{equation}
	I(S: X,Y) = SI(S: X,Y) + UI(S: X \setminus Y) + UI(S: Y \setminus X) + CI(S: X,Y)
\end{equation} satisfying 
\begin{align}
	I(S,X) = SI(S: X,Y) + UI(S:X\setminus Y) & \text{ and}\\
	I(S,Y) = SI(S: X,Y) + UI(S:Y\setminus Z). &
\end{align} $CI$ is considered the synergy, or \textit{complementary information}, $SI$ the redundancy, or \textit{shared information}, whilst $UI$ represents the \textit{unique information} that one random variable has about the target $S$ excluding the other.

One can also ask for 
\begin{equation}\label{PID nonneg}
	SI, UI, CI \geq 0
\end{equation} in order to be consistent with the physical meaning that these are really ``bits of information''. Moreover, it is often taken into consideration that ($\ast$) $SI$ and $UI$ depend only on the marginals $P_{S,X}$ and $P_{S,Y}$.

These however do not uniquely determine a formulation and leaves one degree of freedom, and sometimes the assumptions are only partially taken depending on the interpretation or needs.  

\subsection{The BROJA PID}

Taking all conditions above as axioms, in  \cite{bertschinger2014quantifying}, the authors designed a PID by picking
\begin{equation}
	CI(S:X,Y) := I_P - I_{Q^*}.
\end{equation} Note although $Q^*$ is not necessarily unique, $CI$ is still well-defined. 
% maximal entropy?
\begin{align}
	UI(S: X\setminus Y) = I_{Q^*}(S: X\setminus Y), \\
	UI(S: Y\setminus X) = I_{Q^*}(S: Y\setminus X), \\
	SI = I(S:X) + I(S:Y) - I_{Q^*}(S: X,Y).
\end{align}

We mention facts about this decomposition.
\begin{proposition} \label{BROJA fact}(C.f. \cite{bertschinger2014quantifying})
	\begin{enumerate}
		\item All terms $SI, UI, CI$ are nonnegative.
		\item $UI(S:X\setminus Y) = 0$ if and only if ``$X$ does not know more than $Y$'' in the sense that $\exists$ a row-stochastic matrix $\lambda\in [0,1]^{\mathfrak{X}\times\mathfrak{Y}}$ such that \[
		[P(s,x)]_{s,x} = [P(s,y)]_{s,y}\lambda. 
		\]
		\item If $X\perp_{Q^0} Y$, then $SI = 0$.
		\item The decomposition is additive for independent r.v.s.
		\item Each term is continuous on $\Dsxy$.
	\end{enumerate}
\end{proposition}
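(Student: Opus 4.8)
The plan is to reduce all five assertions to elementary facts about the value function $\tilde I(P):=\min_{Q\in\Delta_P}I_Q(S:X,Y)$, exploiting that on $\Delta_P$ the quantities $I_Q(S:X)=I_P(S:X)$ and $I_Q(S:Y)=I_P(S:Y)$ are constant. Consequently $I_Q(S:X|Y)=I_Q(S:X,Y)-I(S:Y)$, $I_Q(S:Y|X)=I_Q(S:X,Y)-I(S:X)$ and the co-information $CoI_Q(S;X;Y)=I(S:X)+I(S:Y)-I_Q(S:X,Y)$ all differ from $I_Q(S:X,Y)$ by an additive constant on $\Delta_P$, so a single $Q^*$ realises $\min_{Q\in\Delta_P}I_Q(S:X,Y)$, $\min_{Q\in\Delta_P}I_Q(S:X|Y)$, $\min_{Q\in\Delta_P}I_Q(S:Y|X)$ and $\max_{Q\in\Delta_P}CoI_Q$ simultaneously; in this construction $UI(S:X\setminus Y)=\min_{Q\in\Delta_P}I_Q(S:X|Y)$, $SI=\max_{Q\in\Delta_P}CoI_Q(S;X;Y)$, and $CI=I_P-\tilde I(P)$.

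\emph{Nonnegativity (1) and vanishing redundancy (3).} One has $CI\ge 0$ because $P\in\Delta_P$, and $UI(S:X\setminus Y),UI(S:Y\setminus X)\ge 0$ as minima of nonnegative conditional mutual informations. For $SI$ I would evaluate the co-information at the shuffle: since $X\perp Y|S$ at $Q^0$, $SI\ge CoI_{Q^0}=I_{Q^0}(X:Y)-I_{Q^0}(X:Y|S)=I_{Q^0}(X:Y)\ge 0$. For (3), the hypothesis $X\perp_{Q^0}Y$ says exactly $P_X\otimes P_Y=Q^0_{XY}=\sum_s P(s)\,P_{X|s}\otimes P_{Y|s}$. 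For any $Q\in\Delta_P$ one has $Q_X=P_X$, $Q_Y=P_Y$, $Q_{X|s}=P_{X|s}$, $Q_{Y|s}=P_{Y|s}$; writing $Q_{XY}=\sum_s P(s)Q_{XY|s}$ and substituting the identity for $P_X\otimes P_Y$, joint convexity of relative entropy gives
\[
I_Q(X:Y)=D\!\Big(\sum_s P(s)\,Q_{XY|s}\,\Big\|\,\sum_s P(s)\,P_{X|s}\otimes P_{Y|s}\Big)\le\sum_s P(s)\,D\bigl(Q_{XY|s}\,\big\|\,P_{X|s}\otimes P_{Y|s}\bigr)=I_Q(X:Y|S),
\]
so $CoI_Q\le 0$ on all of $\Delta_P$; combined with (1) this forces $SI=0$.

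\emph{Zero unique information (2), additivity (4), continuity (5).} For (2): $UI(S:X\setminus Y)=0$ iff the minimum $I_{Q^*}(S:X|Y)$ is attained as $0$, i.e. iff some $Q\in\Delta_P$ makes $S-Y-X$ a Markov chain. Given such $Q$, put $\lambda_{yx}:=Q(x|y)$ (row-stochastic); then $P(s,x)=Q(s,x)=\sum_y Q(s,y)Q(x|y)=\sum_y P(s,y)\lambda_{yx}$, which is the stated matrix identity (rows with $Q(y)=0$ can be filled arbitrarily). Conversely, from a row-stochastic $\lambda$ with $[P(s,x)]=[P(s,y)]\lambda$, the distribution $Q(s,y,x):=P(s,y)\lambda_{yx}$ satisfies $Q_{SY}=P_{SY}$, $Q_{SX}=P_{SX}$ and $S\perp X|Y$, so $I_Q(S:X|Y)=0$ and the minimum is $0$. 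For (4), on a product distribution the objective splits as $I_{Q_1\otimes Q_2}(S:X,Y)=I_{Q_1}(S_1:X_1,Y_1)+I_{Q_2}(S_2:X_2,Y_2)$; one must then show the constrained minimiser over the larger product-marginal polytope is itself a product, by applying a superadditivity estimate tailored to the product-marginal structure to the blockwise marginals $Q_{(i)}\in\Delta_{P_i}$ of an arbitrary $Q$, after which $\tilde I$, and hence each of $CI$, $SI$, $UI$, becomes additive. For (5), the feasibility correspondence $P\mapsto\Delta_P$ is compact-valued, upper hemicontinuous (closed graph, being cut out by the fixed linear marginalisation map restricted to $\Delta^{\mathfrak S\times\mathfrak X\times\mathfrak Y}$) and lower hemicontinuous (the shuffle $Q^0(P)$ is a feasible point depending continuously on $P$, $\ker\pi_P$ is independent of $P$, and $\Delta_P$ is convex); since $I_Q(S:X,Y)$ and $I_Q(S:X|Y)$ are continuous in $Q$, Berge's maximum theorem makes $\tilde I$, $\min_Q I_Q(S:X|Y)$ and $\min_Q I_Q(S:Y|X)$ continuous in $P$, while $I_P$, $I_P(S:X)$, $I_P(S:Y)$ are obviously continuous, so all four PID terms are.

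\emph{Expected obstacle.} Parts (1)--(3) are short once the reductions above are in place, and (5) is a routine application of Berge's theorem (with a minor care needed for lower hemicontinuity of $P\mapsto\Delta_P$ at degenerate $P$, where the polytope can drop dimension). The real work is (4): one must exclude that ``entangling'' the two blocks beyond their enforced marginals strictly lowers $I(S:X,Y)$, which needs a genuine superadditivity lemma for the \emph{constrained} minimum and not merely the trivial additivity of the objective on product distributions.
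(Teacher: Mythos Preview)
The paper does not actually prove this proposition at all: it is stated with the annotation ``C.f.\ \cite{bertschinger2014quantifying}'' and left without proof, the five items being quoted as known facts from the original BROJA paper. So there is no ``paper's own proof'' to compare against; your proposal is in effect a reconstruction of the arguments from Bertschinger et~al.

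That said, your reconstruction is largely sound and in line with how these facts are established in the BROJA literature. The reduction of $SI$, $UI$, $CI$ to extrema over $\Delta_P$ of quantities differing from $I_Q(S:X,Y)$ by additive constants is exactly the right framing. Your argument for (3) via joint convexity of relative entropy is clean and correct. For (2) the Markov-chain characterisation is the standard one. For (5) the Berge argument is the natural route, though your lower-hemicontinuity sketch (``$\ker\pi_P$ is independent of $P$'') glosses over the boundary case where $\Delta_P$ drops dimension; you flag this yourself, and it does require a bit more care. Your honest assessment of (4) is accurate: additivity of the constrained minimum is the nontrivial part, and in the original reference it is handled by a dedicated lemma showing that the optimiser over the joint polytope can be taken to be a product---your sketch correctly identifies what must be shown without supplying the inequality.
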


\subsection{Information break down inspired by series expansion}
% still need to see the original calculation. But write about that why Q^0 is often there
Before the exact PID problem was stated, a decomposition of mutual information $I = I_{lin} + I_{ss} + I_{ci} + I_{cd}$ was first observed in \cite{panzeri1999correlations} and more finely formulated in \cite{pola2003exact}. It was first discovered by considering a Taylor expansion of $I$ under a parametrisation of the probability space essentially identical to what we introduced. The authours then grouped the series up to the second order into the following terms.    

\begin{definition}
	Given random variables $S$ and $R= (R_1, \cdots, R_k)$ with finite states denoted as $s$ and $r = (r_1, \cdots, r_k$), 
	\begin{align}
		I_{lin} & = \sum_i I(S:R_i) \\
		I_{sig-sim} & = -(I_{lin} - I_{Q^0}(S: R) ) = -D(P||P^0)\\
		I_{cor-ind} & = I(S: R) - I_{Q^0}(S: R) - I_{cor-dep} \\
		I_{cor-dep} & = D(P_{S,R}|| Q^0_{S,R}) - D(P_R||Q^0_R) \\
		& = D(P_{S|R}||Q^0_{S|R}) := \sum_{r}P(r)D(P_{S|R = r}||Q^0_{S|R = r}).
	\end{align} 
	The distribution $P_0$ in the definition of $I_{sig-sim}$ is defined as \begin{equation}
		P^0(s,r_1, \cdots) := P(s)\prod_i P(r_i).
	\end{equation}
	We also write $I_{ss}$, $I_{ci}$, and $I_{cd}$ for obvious abbreviations. 
\end{definition}
Here $D(\cdot | \cdots)$ denotes the Kullback-Leibler divergence, which we will use through out. The conditional divergence in the last line is defined consistently with the standard definition. However it is important to recall that it is not a divergence between distributions; whereas the divergence in the summand, meaning fixing $R=r$ in each term, is a divergence. In particular, $I_{cd}$ is nonnegative and $I_{ss}$ is nonpositive. 

The original formulae in \cite{panzeri1999correlations}\cite{pola2003exact} are defined with the following quantities 
\begin{align}
	\gamma_Q(r|s) & =  \begin{cases}
		\frac{Q(r|s)}{Q^0(r|s)} - 1 & \text{ if } Q^0(r|s) \neq 0 \\
		-1 & \text{  otherwise }
	\end{cases}\\  
	\mu_Q(r) & =  \begin{cases}
		\frac{Q^0(r)}{ P^0(r)} - 1 & \text{ if } P^0(r) \neq 0 \\
		-1 & \text{  otherwise }
	\end{cases}    
\end{align} named the \textit{noise} and \textit{signal correlation coefficients} respectively, where \begin{equation}
	P^0(r_1, \cdots, r_k) := \prod_i P(r_i).
\end{equation} See for example the appendix of \cite{schneidman2003synergy} for the calculation, which are intended for the bivariate case but in fact holds for general multivariate cases.

\begin{remark}
	Parametrising probabilities by realising as real random variables and use their conditional correlations often ends up taking the shuffle distribution as a base point. E.g.  \cite{mahuas2023small} also underlyingly considered essentially the same parametrisation and derived equivalent formulae to observe approximated effects of interactions between $\gamma$ and $\mu$ on the mutual information.
	
	In fact, the quantity $\gamma$ was used in \cite{panzeri1999correlations} and \cite{pola2003exact} to parametrise the probability when considering binary r.v.s. For instance, 
	\begin{equation}
		Q(r|s) = Q^0(r|s)(1 + \gamma(r|s)). 
	\end{equation} Though  $\gamma$ is defined in terms of distribution and does distinguish independence, unlike the Pearson correlations, for different states $r$, $\gamma(r|s)$'s can be dependent and often not as convenient as coordinates of the probability space. We shall see more about this. 
	
	How the Taylor series was computed exactly is however elusive to the authour. Differentiating \ref{deriv I}, the second derivative in our parametrisation should be \begin{equation}
		\frac{\partial^2 I (Q^t)}{\partial t^2 } = \sum_{s,x,y}\frac{1}{Q(s,x,y)} - \sum_{x,y} \frac{1}{Q(x,y)}.
	\end{equation}
\end{remark}     

These quantities were interpreted and named by observing interactions between the coefficients $\gamma$ and $\mu$ in the original work. We roughly translate the arguments in terms of the formulae presented above:

The \textit{linear} term, $I_{lin}$, comes from that it is the first order term in the expansion. The other three terms sum to the second order component. $I_{sig-sim}$ is considered the redundancy contributed by \textit{signal similarity}, here a synonym of signal correlation, since it is negative and measures the divergence to mutual independence. These two terms are determined solely by the marginal $P$ and thus are considered as nonsynergistic components.

The \textit{stimulus-dependent-correlation} component $I_{cor-dep}$ is zero if and only if all $\gamma(r|s)$ are independent of the stimulus $s$ (see lemmas \ref{Icd zero} and \ref{Icd vanish}), thus considered as the amount of information contributed from the dependence of noise correlation on stimuli. The leftover term $I_{cor-ind}$ is therefore considered as the contribution of correlation to synergy \textit{without} dependence on stimuli.

\subsubsection{Series expansion decomposition as a PID}
In fact, the series-expansion decomposition ``induces'' a PID in a similar spirit as the BROJA PID although without the nonnegativity axiom \ref{PID nonneg} and takes conditional independence as the baseline. We explain as follows.

First, we decide how to cut $I(S:X,Y)$ into a ``synergistic'' part, denoted $Syn$, and a ``nonsynergistic'' part, $nSyn$, the latter including redundant and unique information determined by the PID axioms. 

Choosing $nSyn = I(Q^*)$, all terms are nonnegative, and we get the BROJA PID. If, instead, with the consideration ``conditional independent means no synergy'' in mind, we choose $nSyn = I(Q^0)$ and proceed with the rest of PID axioms, we get another decomposition. In this case the redundancy is $-I_{ss}$. 

The synergy with the conditional independence is further decomposed into two terms $I_{ci} + I_{cd}$ in the series expansion decomposition. We shall see more about these two terms and how they capture some feature of the landscape of $I$.

For easy reference, we include an organised ``translation'' below.
\begin{align}\label{translation}
	CI_0 & := I_{Q^0} - I_{Q^*} \\
	I_{Q^0} & = I_{Q^*} + CI_0 \\ 
	& = I_{lin} + I_{ss} \\
	& = SI + UI_{\setminus X} + UI_{\setminus Y} + CI_0 \\
	SI + CI_0 & = - I_{ss} 
\end{align}

\begin{align}
	CI & = I_Q - I_{Q^*} = I_Q - I_{Q^0} + CI_0 \\
	& = I_{ci} + I_{cd} + CI_0, 
\end{align} 
Moreover, we shall see in \ref{Ici linear}, (the graph of) $I_{cd} + I_{Q^0}$ is tangent to $I$ at $Q^0$ on $\Delta_P$. 

\begin{figure}
	\centering
	\includegraphics[width=1.0\linewidth]{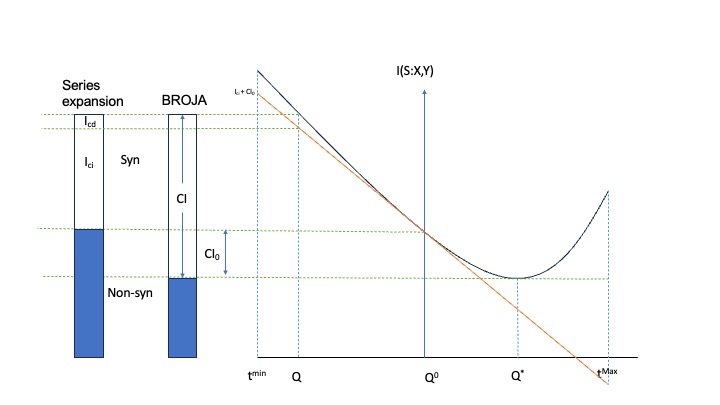}
	\caption{Illustration comparing the two decompositions. Not actual function. Hand-drawn in MS PPT. }
	\label{fig:enter-label}
\end{figure}

\subsubsection{Properties of $I_{cd}$}
$I_{cd}$ showed up with different derivations and was considered and testified several times in different occasions among several groups. We mention its vanishing properties here. They are essentially identical to lemmas in \cite{amari2006correlation}, but we state and proof in more details fit for our setting. 

\begin{lemma}\label{Icd zero} Without loss of generality, assume $P(s) > 0$ for all $s$. The following are equivalent: \\
	(1) $I_{cd} = 0$ \\
	(2) $P(s|r) = Q^0(s,r)$ if $P(r) > 0$ \\
	(3) For each $r$, $P(r) = \Tilde{\gamma}(r)Q^0(r)$ and $P(s,r) = \Tilde{\gamma}(r)Q^0(s,r)$ for some $\Tilde{\gamma}(r)$ \\
	(4) For each $r$, $P(s,r) = \Tilde{\gamma}(r)Q^0(s,r)$ for some $\Tilde{\gamma}(r)$ \\
	(5) For each $r$, $P(r|s) = \Tilde{\gamma}(r)Q^0(r|s)$ for some $\Tilde{\gamma}(r)$ \\ 
	(6) $\gamma(r|s)$ is independent of $s$.
\end{lemma}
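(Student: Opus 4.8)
The plan is to prove all six conditions equivalent through the chain $(1)\Leftrightarrow(2)\Leftrightarrow(3)\Leftrightarrow(4)\Leftrightarrow(5)\Leftrightarrow(6)$, so that each link reduces to a one- or two-line manipulation; I would also flag at the outset that (2) is meant to read $P(s|r)=Q^0(s|r)$ whenever $P(r)>0$. For $(1)\Leftrightarrow(2)$ I would start from the last displayed form of $I_{cd}$, namely $I_{cd}=\sum_r P(r)\,D\big(P_{S|R=r}\,\|\,Q^0_{S|R=r}\big)$. Every summand is a nonnegative multiple of a Kullback--Leibler divergence, so $I_{cd}=0$ iff $D\big(P_{S|R=r}\|Q^0_{S|R=r}\big)=0$ for each $r$ with $P(r)>0$; by the Gibbs inequality together with its equality case this is exactly $P(s|r)=Q^0(s|r)$ for all such $r$ and all $s$, i.e. (2).

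The block $(2)\Leftrightarrow(3)\Leftrightarrow(4)$ is pure Bayes' rule. First I would record that $P(r)>0$ forces $Q^0(r)>0$: if $P(s,r)=P(s)P(r|s)>0$ for some $s$ then $P(r_i|s)\ge P(r|s)>0$ for every $i$, whence $Q^0(r)=\sum_{s'}P(s')\prod_i P(r_i|s')\ge P(s)\prod_i P(r_i|s)>0$. Given (2), set $\tilde\gamma(r):=P(r)/Q^0(r)$ where $P(r)>0$ and $\tilde\gamma(r):=0$ otherwise; then $P(s,r)=P(r)Q^0(s|r)=\tilde\gamma(r)Q^0(r)Q^0(s|r)=\tilde\gamma(r)Q^0(s,r)$ and $P(r)=\tilde\gamma(r)Q^0(r)$, which is (3), and a fortiori (4). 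Conversely, (4) implies (3) by summing over $s$ and using $\sum_s Q^0(s,r)=Q^0(r)$, and (3) implies (2) by dividing its two identities at each $r$ with $P(r)>0$ (where $\tilde\gamma(r)\neq0$ and $Q^0(r)>0$ automatically).

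For $(4)\Leftrightarrow(5)\Leftrightarrow(6)$ I would first observe that summing $Q^0(s,r)=P(s)\prod_i P(r_i|s)$ over $r$ gives $Q^0_S=P_S$, so that $Q^0(s,r)=P(s)\,Q^0(r|s)$ and $P(s,r)=P(s)\,P(r|s)$; since $P(s)>0$ for all $s$ by hypothesis, $P(s,r)=\tilde\gamma(r)Q^0(s,r)$ is equivalent, for the same $\tilde\gamma$, to $P(r|s)=\tilde\gamma(r)Q^0(r|s)$, i.e. $(4)\Leftrightarrow(5)$. Finally, with $Q=P$ the definition gives $\gamma(r|s)=P(r|s)/Q^0(r|s)-1$ when $Q^0(r|s)\neq0$ and $\gamma(r|s)=-1$ (independent of $s$) when $Q^0(r|s)=\prod_i P(r_i|s)=0$, which is consistent because then $P(r|s)=0$ as well; hence $\gamma(r|s)$ is independent of $s$ iff the ratio $P(r|s)/Q^0(r|s)$ is, iff $P(r|s)=\tilde\gamma(r)Q^0(r|s)$ for a function $\tilde\gamma$ of $r$ alone, which is (5).

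The substantive mathematics here is light -- it is the equality case of Gibbs' inequality plus Bayes' rule -- so the one place where care is actually needed is the degenerate (zero-probability) states: that $\tilde\gamma(r)$ in (3)--(5) is only constrained on the support of $P_R$, that $P(r)>0\Rightarrow Q^0(r)>0$ and $Q^0(r|s)=0\Rightarrow P(r|s)=0$ so neither the ratios nor the convention $\gamma=-1$ creates a spurious obstruction, and that at every $r$ with $P(r)=0$ all six conditions are vacuous. I expect that bookkeeping of these edge cases, rather than any of the algebra, to be the only point requiring attention.
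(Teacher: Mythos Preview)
Your proof is correct and follows the same approach as the paper, which merely asserts that ``the proofs are immediate for each step, noting that $\gamma(r|s) = -1$ when $P(r) = 0$'' and that $\tilde\gamma = \gamma + 1$. You have supplied exactly the unfolding of definitions that the paper leaves implicit, and your correction of the typo in (2) (it should read $Q^0(s|r)$, not $Q^0(s,r)$) is apt. Your attention to the zero-probability bookkeeping---particularly the observation that $P(r)>0\Rightarrow Q^0(r)>0$ and $Q^0(r|s)=0\Rightarrow P(r|s)=0$---is more careful than the paper itself.
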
 % remark this is in Amari

\begin{proof}
	The proofs are immediate for each step, noting that $\gamma(r|s) = -1$ when $P(r) = 0$. Also note that $\Tilde{\gamma} = \gamma + 1$.
	
	% Also note that (1)$\Rightarrow$(2) is valid since $D(P_{S|R = r}||Q^0_{S|R = r})$ is a true divergence.
\end{proof}

\begin{remark}
	(1) $\Leftrightarrow$ (6) is stated in \cite{pola2003exact}, where $(6) \implies (1)$ is more apparent using the original definition in terms of $\gamma(r|s)$. We include the fact in this form for completeness. 
\end{remark}

\begin{lemma}\label{Ici linear}
	On $\Delta_P$, $I_{ci}$ is a linear function in the form
	$$I_{ci} = \sum_{s, r\in \mathbf{B}}  C(r)g_{s,r}$$ for an independent set of $\{g_{s,r}|r \in \mathbf{B}\subset\mathfrak{R}\}$ that parametrises $\Delta_P$, where $C(r)$ depends on $r$ and $Q^0$ and is therefore constant on $\Delta_P$. 
	
	In particular, $I_{cd}$ is convex on $\Delta_P$. (See \ref{translation}.)
\end{lemma}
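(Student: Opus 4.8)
The plan is to show that $I_{ci}$, restricted to the affine domain $\Delta_P$, is an affine-linear function of the natural coordinates, from which convexity of $I_{cd}$ follows immediately. Recall from \ref{translation} that $I_{ci} = I - I_{Q^0} - I_{cd}$, but it is cleaner to work directly from the definition $I_{cor-ind} = I(S:R) - I_{Q^0}(S:R) - I_{cor-dep}$ together with $I_{cor-dep} = D(P_{S,R}\|Q^0_{S,R}) - D(P_R\|Q^0_R)$. The key observation is that on $\Delta_P$ the marginals $Q_{S,R_i}$ are fixed, hence $Q^0(Q)$ is fixed (shuffle distributions are in $1$--$1$ correspondence with correlation domains, by the lemma after Proposition \ref{parametrisation}), so $I_{Q^0}$ is a \emph{constant} on $\Delta_P$. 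Therefore it suffices to show that $I(S:R) - I_{cor-dep}$ is affine in the coordinates $g_{s,r}$ of Proposition \ref{parametrisation}.

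First I would expand $I(S:R)$ and the two divergences making up $I_{cor-dep}$ in terms of the entropy-like sums $-\sum Q(s,r)\log Q(s,r)$, $-\sum Q(r)\log Q(r)$, and the cross terms $-\sum Q(s,r)\log Q^0(s,r)$, $-\sum Q(r)\log Q^0(r)$. The genuinely nonlinear pieces are the Shannon terms $H(S,R)$ and $H(R)$ evaluated at $Q$; the cross terms are linear in $Q$ because $\log Q^0$ is a \emph{constant} function on $\Delta_P$ (again since $Q^0$ is fixed there). The crucial cancellation is that $I(S:R)$ contains $+H(S,R)^{Q}_{\text{(with a sign)}}$-type contributions that are exactly matched by the corresponding Shannon terms inside $I_{cor-dep}$: writing everything out, $I(S:R) = H(S)+H(R)-H(S,R)$ with $H(S)$ constant on $\Delta_P$, and $I_{cor-dep}= \sum Q(s,r)\log Q(s,r) - \sum Q(r)\log Q(r) - \sum Q(s,r)\log Q^0(s,r) + \sum Q(r)\log Q^0(r)$, so that $I(S:R) - I_{cor-dep}$ has its $H(R)$ and $-H(S,R)$ nonlinear parts cancelled against the $-\sum Q(r)\log Q(r)$ and $+\sum Q(s,r)\log Q(s,r)$ parts, leaving only terms linear in $Q$ plus the constant $H(S)$. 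Since $Q\mapsto Q(s,r)$ is affine in the coordinates $g_{s,r}$, the result is affine, and one reads off $C(r)$ from the coefficients, which involve only $\log Q^0$ and hence only $Q^0$.

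Once $I_{ci}$ is known to be affine on $\Delta_P$, convexity of $I_{cd} = I - I_{Q^0} - I_{ci}$ on $\Delta_P$ is immediate: $I = I(S:X,Y)$ is convex on $\Delta_P$ (this is exactly the convexity noted in the set-up, coming from $Q(s)$ being fixed), $I_{Q^0}$ is constant, and subtracting an affine function preserves convexity. I would then close by noting the identification of the $g_{s,r}$ with the basis $\mathcal{B}_{x_0,y_0}$ from Proposition \ref{parametrisation}(4), so the ``independent set'' claim is literally that statement.

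The main obstacle I anticipate is bookkeeping the cancellation cleanly: one must be careful that the divergence $D(P_R\|Q^0_R)$ uses $Q^0_R$, which on $\Delta_P$ equals the fixed marginal $P_R$ of the shuffle distribution but is \emph{not} equal to $Q_R$ for a general $Q\in\Delta_P$ — indeed that discrepancy is the whole point — so the linear cross-term $-\sum Q(r)\log Q^0(r)$ does \emph{not} simplify further, and the $-H(R)^Q$ term genuinely must be cancelled by the $\sum Q(s,r)\log Q(s,r)$-summed-over-$s$ term via $\sum_s Q(s,r) = Q(r)$; checking that this summation identity makes the two Shannon-type nonlinear contributions cancel exactly (rather than merely up to a linear term) is the step that deserves care. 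A secondary subtlety is the degenerate-support case, where some $Q^0(r)=0$ or $Q^0(s,r)=0$; since we assumed $\Delta_P$ has full support in the relevant statements (or one restricts to $\Delta_{P+}$), the $\log Q^0$ terms are finite and the argument goes through, but I would flag this assumption explicitly.
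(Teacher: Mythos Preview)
Your argument is correct. The cancellation you describe works exactly as stated: writing $I(S:R)=H(S)+H(R)-H(S,R)$ and expanding $I_{cd}=D(Q_{S,R}\|Q^0_{S,R})-D(Q_R\|Q^0_R)$, the two Shannon-type nonlinear pieces $-\sum_r Q(r)\log Q(r)$ and $\sum_{s,r}Q(s,r)\log Q(s,r)$ cancel pairwise, leaving $H(S)+\sum_{s,r}Q(s,r)\log Q^0(s,r)-\sum_r Q(r)\log Q^0(r)$, which is affine in $Q$ since $Q^0$ is fixed on $\Delta_P$.

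The paper's proof takes a different route: it invokes the explicit closed form
\[
I_{ci}=\sum_{s,r}\gamma(r|s)\,Q^0(s,r)\,\log\frac{\prod_i P(r_i)}{Q^0(r)}
\]
from \cite{pola2003exact} and reads off linearity (and the coefficients $C(r)$) directly. Your cancellation argument is more self-contained---it does not rely on an externally quoted formula---and in fact recovers the paper's expression once one uses the marginal constraints $\sum_r\gamma(r|s)Q^0(s,r)=0$ and $\sum_{r_{j\neq i}}\gamma(r|s)Q^0(s,r)=0$ to discard the $\log P(s)$ and $\log P(r_i|s)$ contributions. Conversely, the paper's route hands you $C(r)$ immediately, which is what the subsequent Example and the tangency remark (``the graph of $I_{ci}+CI_0$ is tangent to $CI$ at $Q^0$'') actually use.

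One small gap to close: your argument yields that $I_{ci}$ is \emph{affine} in the coordinates $g_{s,r}$, whereas the lemma asserts the homogeneous linear form $\sum C(r)g_{s,r}$. You should note that $I_{ci}(Q^0)=0$ (immediate from the definition, since at $Q^0$ both $I-I_{Q^0}$ and $I_{cd}$ vanish), so the constant term is zero. Your flag about full support is appropriate and matches the standing assumptions elsewhere in the paper.
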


\begin{proof}
	One can either see from the original formulation in \cite{pola2003exact} or derive from the definition above to see that at $Q\in\Delta_P$ such that \begin{equation}
		Q(s,r) = Q^0(s,r) (1 + \gamma(r|s)),
	\end{equation} 
	\begin{equation}
		I_{ci} = \sum_{s,r}\gamma(r|s)Q^0(s,r)\log\frac{\prod_i P(R_i = r_i)}{Q^0(r)} = \sum_{s,r\in \mathbf{B}} g_{s,r}C(r),
	\end{equation} writing $r = (r_1, \cdots, r_k)$. The coefficient $C(r)$ is a linear sum of the logarithm terms (with coefficients $\pm 1$, see \cite{hocsten2002grobner}) and is thus constant on $\Delta_P$.
\end{proof}

Therefore we see the extra decomposition of the synergistic component separates the linear effect of noise correlations on information provided the signal correlation. 

\begin{example}
	Taking the bivariate case where $R = (X,Y)$ for a simpler example. $Q\in\Delta_P$ is parametrised as (see \ref{bivar param}) \[
	Q = Q^0 + \sum_{x \neq x_0, \: y\neq y_0}P(s)g_{(s,x,y)}V_{(s,x,y)}
	\] for fixed $x_0$ and $y_0$. The coefficients have the form 
	\begin{align}
		C(x,y)g_{(s,x,y)} & = g_{(s,x,y)}\log\frac{Q^0(x_0,y_0|s)}{Q^0(x_0,y_0)}\frac{Q^0(x,y|s)}{Q^0(x,y)}\frac{Q^0(x,y_0)}{Q^0(x,y_0|s)}\frac{Q^0(x_0,y)}{Q^0(x_0,y|s)} \\
		& = g_{(s,x,y)}\log\frac{Q^0(x,y_0)Q^0(x_0,y)}{Q^0(x_0,y_0)Q^0(x,y)}.  
	\end{align}
	Since $CI = I_{ci} + I_{cd} + CI_0$. 
	
	Alternatively, we can recover these coefficients $C(r)$ by taking the negative of the derivatives of $I(S:X,Y)$ on $\Delta_P$ at $Q^0$.  In other words, the graph of $I_{ci} + CI_0$ is tangent to the graph of $CI$ at $Q^0$. 
\end{example}

\begin{proposition}\label{rank condition}
	Consider the bivariate case $R = (X,Y)$, then if $\exists \; Q \in \Delta_P$  such that $Q\neq Q^0$ and $I_{cd}(Q)$ vanishes if and only if \[
	rank [P(x|s)]_{x,s} < |\mathfrak{X}| \text{ and } rank [P(y|s)]_{y,s} < |\mathfrak{Y}|.    \]
	
	Moreover, the zero locus of $I_{cd}$ is a linear space intersecting $\Delta_P$ of dimension \[(|\mathfrak{X}|- rank [P(x|s)])(|\mathfrak{Y}| - rank [p(y|s)]).\]    
\end{proposition}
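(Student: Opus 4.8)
The strategy is to turn the vanishing of $I_{cd}$ into a rank condition via Lemma \ref{Ici linear} and the characterisation in Lemma \ref{Icd zero}. By \ref{Icd zero}(4), $I_{cd}(Q)=0$ precisely when $Q(s,x,y)=\tilde\gamma(x,y)\,Q^0(s,x,y)$ for some function $\tilde\gamma$ of $(x,y)$ only; equivalently, writing $Q(s,x,y)=Q^0(s,x,y)(1+\gamma(x,y|s))$ as in the parametrisation of $\Delta_P$, the condition is that $\gamma(x,y|s)$ does not depend on $s$. So the first step is to express the affine subspace $\{\gamma(\cdot|s)\text{ independent of }s\}$ inside the parametrising space of $\Delta_P$ and intersect it with the defining linear constraints of $\Delta_P$ itself (the marginal constraints $Q_{S,X}=P_{S,X}$, $Q_{S,Y}=P_{S,Y}$), using the basis $\mathcal{B}_{x_0,y_0}$ from Proposition \ref{parametrisation}(4).

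\textbf{Key steps.} First I would set up coordinates: a point of $\Delta_P$ is $Q=Q^0+\sum_{s}P(s)\sum_{x\neq x_0,\,y\neq y_0} g_{(s,x,y)}V_{(s,x,y)}$, and the ``conditional-independence-preserving'' deformations are exactly those where $g_{(s,x,y)}=g_{(x,y)}$ is $s$-independent. Second, I would impose that such a $Q$ still lies in $\Delta_P$: because $V_{(s,x,y)}$ already has vanishing formal $(S,X)$- and $(S,Y)$-marginals, any combination stays in the affine hull of $\Delta_P$, so the only real constraint is that the multiplier $\tilde\gamma(x,y)=1+\gamma(x,y)$ be consistent with $Q$ being a genuine perturbation — i.e. one needs the system $Q(s,x,y)=\tilde\gamma(x,y)Q^0(s,x,y)$ to be solvable for nonconstant $\tilde\gamma$. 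Third, and this is the crux, I would observe that $Q^0(s,x,y)=P(s)P(x|s)P(y|s)$, and that the $(S,X)$- and $(S,Y)$-marginal constraints force $\sum_y \tilde\gamma(x,y)P(y|s)=1$ for all $s,x$ and $\sum_x \tilde\gamma(x,y)P(x|s)=1$ for all $s,y$. Writing $\tilde\gamma(x,y)=1+h(x,y)$, these become $\sum_y h(x,y)P(y|s)=0$ and $\sum_x h(x,y)P(x|s)=0$ for all $s$, i.e. $h$, viewed as a $|\mathfrak{X}|\times|\mathfrak{Y}|$ matrix $H$, satisfies $H\,[P(y|s)]_{y,s}=0$ and $[P(x|s)]_{x,s}^{\top}H=0$. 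Hence the space of admissible $h$ is $(\ker [P(x|s)]^{\top} \text{ acting on the left}) \cap (\ker [P(y|s)] \text{ acting on the right})$, which is nonzero iff both matrices $[P(x|s)]_{x,s}$ and $[P(y|s)]_{y,s}$ are rank-deficient, and in that case has dimension $(|\mathfrak{X}|-\mathrm{rank}[P(x|s)])(|\mathfrak{Y}|-\mathrm{rank}[P(y|s)])$ since it is the tensor product of the two kernels. Fourth, I would note that a nonzero such $h$ indeed produces some $Q\neq Q^0$ in $\Delta_P$ with $I_{cd}(Q)=0$ (scaling $h$ small keeps $Q$ nonnegative), and conversely any such $Q$ yields a nonzero $h$; and since $I_{cd}$ is convex (Lemma \ref{Ici linear}) and nonnegative with $I_{cd}(Q^0)=0$, its zero locus is the convex set $\{I_{cd}=0\}$, which by the above is the intersection of $\Delta_P$ with the linear subspace $\{Q^0+\text{span of the }h\text{'s}\}$ — giving both the ``linear space'' claim and the dimension formula.

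\textbf{Main obstacle.} The delicate point is verifying cleanly that the two marginal-consistency conditions on $\tilde\gamma$ are \emph{exactly} the left- and right-kernel conditions on $H$, and that no further constraint is hidden — in particular checking that the $x_0,y_0$ ``gauge'' rows/columns of the basis do not impose an extra relation, and that the map $h\mapsto Q$ is injective on this kernel so the dimension of the zero locus equals $\dim(\ker\cap\ker)$ rather than something smaller. I also want to be careful that the tensor-product identification $\ker_{\mathrm{left}}\otimes\ker_{\mathrm{right}}$ is the right description of $\{H: [P(x|s)]^\top H=0,\ H[P(y|s)]=0\}$: this is standard linear algebra (a matrix killed on the left by the column span of $[P(x|s)]$ has rows in $(\mathrm{colspan}[P(x|s)])^{\perp}$, and similarly for columns, so $H\in (\mathrm{colspan}[P(x|s)])^{\perp}\otimes(\mathrm{colspan}[P(y|s)])^{\perp}$), but it should be spelled out so the dimension count $(|\mathfrak{X}|-r_X)(|\mathfrak{Y}|-r_Y)$ is transparent. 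Everything else is bookkeeping with the parametrisation already established in Proposition \ref{parametrisation} and the equivalences in Lemma \ref{Icd zero}.
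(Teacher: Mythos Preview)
Your proposal is correct and follows essentially the same route as the paper: translate $I_{cd}(Q)=0$ via Lemma \ref{Icd zero} into $\gamma(x,y|s)$ being $s$-independent, then use the marginal constraints of $\Delta_P$ to obtain the matrix equations $[P(x|s)]^{\top}H=0$ and $H[P(y|s)]=0$, from which the rank condition and the product dimension formula follow. The only cosmetic difference is that the paper phrases the dimension count via SVD while you phrase it as a tensor product of kernels; these are the same linear-algebra fact, and your worry about hidden constraints from the $x_0,y_0$ gauge is handled in the paper exactly as you anticipate, by checking those rows/columns explicitly.
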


\begin{proof} % more technical and longer. Can move to appendix
	By \ref{Icd zero}, $I_{cd}$ is zero iff $\gamma(r|s) =: \gamma(r)$ is independent of $s$.
	
	Consider a basis for $\Delta_P$, (c.f. \cite{bertschinger2014quantifying}\cite{hocsten2002grobner} ) \begin{equation}
		\{V_{s;(x_0,y_0); (x,y)} = \delta_{(s,x_0, y_0)} + \delta_{(s,x,y)} - \delta_{(s,x_0,y)} - \delta_{(s,x,y_0)} |s \in \mathfrak{S}, \; x_0\neq x \in \mathfrak{X}, \; y_0 \neq y \in \mathfrak{Y}\}
	\end{equation} for fixed $(x_0, y_0)$. Here each $\delta_{(s,r)}$ is the characteristic function of $(s,r)$ on $\mathfrak{S}\times \mathfrak{X} \times \mathfrak{Y}.$ Each $Q\in \Delta_P$ is of the form \begin{equation}\label{bivar param}
		Q = Q^0 + \sum_{x \neq x_0, \: y\neq y_0}P(s)g_{(s,x,y)}V_{(s,x,y)} 
	\end{equation} for coefficients $g_{(s,x,y)}$. In other words, evaluating $Q$ at each state $(s,x,y)$, we have \begin{equation}
		Q(s,x,y) = Q^0(s,x,y) + P(s)g_{(s,x,y)} \text{ for } x \neq x_0 \text{ and } y \neq y_0,
	\end{equation}
	\begin{equation}
		Q(s,x_0,y) = Q^0(s,x_0,y) - P(s)\sum_{y\neq y_0}g_{(s,x_0,y)} \text{ for } y \neq y_0,
	\end{equation}
	\begin{equation}
		Q(s,x,y_0) = Q^0(s,x,y_0) - P(s)\sum_{x \neq x_0}g_{(s,x,y_0)} \text{ for } x \neq x_0,
	\end{equation} and 
	\begin{equation}
		Q(s,x_0,y_0) = Q^0(s,x_0,y_0) + P(s)\sum_{x \neq x_0, \: y\neq y_0}g_{(s,x,y)} \text{ for } x \neq x_0 \text{ and } y \neq y_0.
	\end{equation}   
	
	Recall we have, assuming the condition that $\gamma(r|s) =: \gamma(r)$ is independent of $s$, \begin{equation}
		g_{(s,x,y)} = \gamma(x,y)Q^0(x,y|s)   \text{ for } x \neq x_0 \text{ and } y \neq y_0.
	\end{equation} Consequently, for $x \neq x_0$,
	\begin{equation} \label{indie g basis}
		\gamma(x,y_0)Q^0(x,y_0|s)= -\sum_{y \neq y_0} Q^0(x,y|s)\gamma(x,y)
	\end{equation} 
	\begin{equation}
		\implies \gamma(x,y_0)P(y_0) + \sum_{y \neq y_0} P(y|s)\gamma(x,y) = \sum_{y\in \mathfrak{Y}} P(y|s)\gamma(x,y) = 0.
	\end{equation} Similarly, for $y\neq y_0$,
	\begin{equation}
		\sum_{x\in \mathfrak{X}} P(x|s)\gamma(x,y) = 0.
	\end{equation} Also \begin{align}
		0 & =  \gamma(x_0,y_0)Q(x_0,y_0|s) - \sum_{x\neq x_0, \: y\neq y_0}Q^0(x,y|s)\gamma(x,y) \\
		& = \sum_{x} Q^0(x,y_0|s)\gamma(x,y_0) = \sum_{y} Q^0(x_0,y|s)\gamma(x_0,y)
	\end{align} 
	\begin{equation}\label{s indep gamma eq}
		\implies \sum_y P(y|s)\gamma(x_0,y) = \sum_{x} P(x|s)\gamma(x,y_0) = 0.
	\end{equation} In matrix form, we write \begin{equation}
		[P(x|s)]_{x,s}^\top[\gamma(x,y)]_{x,y} = [\gamma(x,y)]_{x,y}[P(y|s)]_{y,s} = 0.
	\end{equation} A nontrivial solution for $[\gamma]$ exists if and only if the rank condition in the statement holds, e.g. by consider singular value decomposition (SVD).
	
	It is easy to see the space \[\{Q| \gamma(r|s) \text{ independent of } s\} = \{Q|I_{cd}(Q) = 0\}\subset \Delta_P\] is linear (one can also obtain such by convexity and nonnegativity of $I_{cd}$), whose dimension is equal to the dimension of possible solutions of \ref{s indep gamma eq} by looking at \ref{indie g basis}, which, e.g. by considering SVD again, is equal to \[(|\mathfrak{X}|- rank [P(x|s)])(|\mathfrak{Y}| - rank [p(y|s)]).\]
\end{proof}

\begin{lemma}\label{Icd vanish}
	Consider the bivariate case $R = (X,Y)$ and furthermore $X$ and $Y$ are both binary. The following are equivalent.
	
	\noindent(1) $\exists Q \in \Delta_P$ where $\gamma(r|s)  =: \gamma(r) $ is independent of $s$, and $\gamma(x,y) \neq 0$ for some $(x,y)$\\
	(2) $X\perp S$ and $Y \perp S$ \\
	(3) $Q^0(s,x,y) = P(s)P(x)P(y)$ \\
	(4) $I_{Q^0}(S: X,Y) = 0 $ \\
	(5) There is a line through $Q^0$ in $\Delta_P$ such that $I_Q(S: X,Y)$ vanishes on this line. \\
	(6) $I_Q = 0$ for some $Q\neq Q^0$ \\
	(7) $CI = I_{cd}$ and $UI = SI = I_{ci} = I_{lin} = I_{ss} = 0$. % In other words, the two decompositions agree in this case, and there is purely synergy. 
\end{lemma}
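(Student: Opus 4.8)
The plan is to prove the seven statements equivalent by arranging them around the central condition $(2)$, showing $(2)\Leftrightarrow(3)\Leftrightarrow(4)$ by elementary manipulation of the shuffle distribution, $(2)\Rightarrow(5)\Rightarrow(6)\Rightarrow(2)$, $(1)\Leftrightarrow(2)$ via Proposition \ref{rank condition}, and $(2)\Leftrightarrow(7)$ via the dictionary \ref{translation}. First I would dispatch the purely marginal equivalences. Since $Q^0$ is a shuffle distribution, $Q^0(s,x,y)=P(s)P(x|s)P(y|s)$ and $Q^0_{S,X}=P_{S,X}$, $Q^0_{S,Y}=P_{S,Y}$. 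Summing $(3)$ over $y$ gives $P(s)P(x)=P(s,x)=P(s)P(x|s)$, hence $X\perp S$, and symmetrically $Y\perp S$, i.e.\ $(2)$; the converse is immediate from the defining formula, so $(2)\Leftrightarrow(3)$. For $(4)$: $I_{Q^0}(S:X,Y)=0$ is equivalent to $S\perp_{Q^0}(X,Y)$, i.e.\ $Q^0(s,x,y)=P(s)Q^0(x,y)$; combined with the conditional independence $Q^0(x,y|s)=P(x|s)P(y|s)$ at $Q^0$ this forces $P(x|s)=Q^0(x)=P(x)$ and likewise for $Y$, giving $(4)\Rightarrow(2)$, while $(2)\Rightarrow(3)\Rightarrow(4)$ is clear. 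The same argument applied to a distribution $Q$ with $I_Q=0$ proves $(6)\Rightarrow(2)$: $I_Q=0$ gives $Q(s,x)=P(s)Q(x)$, and since $Q_{S,X}=P_{S,X}$ we get $P(x|s)=Q(x)$ independent of $s$, so $X\perp S$, and symmetrically $Y\perp S$.

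For $(2)\Rightarrow(5)$ I would invoke Theorem \ref{binary main theorem}(3): when $X\perp S$ and $Y\perp S$ the minimiser of $I$ on $\Delta_P$ is not unique and $I$ vanishes on the diagonal $\overline{t^{min}t^{Max}}$. It remains to check $Q^0$ lies on this diagonal, which is where the explicit binary parametrisation of Section \ref{two binary} is used: when $P(x|s),P(y|s)$ are independent of $s$ the boxes $\Delta_{P,s}$ coincide, so $t^{min}_s$ and $t^{Max}_s$ are the same for every $s$, and the $t$-coordinate of $Q^0$, which is $0$ (the point where each conditional determinant vanishes), lies strictly between them by the full-support hypothesis; hence $(0,\dots,0)$ is a convex combination of $t^{min}$ and $t^{Max}$. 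Then $(5)\Rightarrow(6)$ is trivial, closing the loop. For $(1)\Leftrightarrow(2)$ I would read $(1)$ through Lemma \ref{Icd zero}: "\,$\gamma(r|s)=:\gamma(r)$ independent of $s$ with $\gamma(x,y)\neq 0$ for some $(x,y)$\," is exactly "\,$\exists Q\neq Q^0$ in $\Delta_P$ with $I_{cd}(Q)=0$\,", and by Proposition \ref{rank condition} this holds iff $\operatorname{rank}[P(x|s)]_{x,s}<|\mathfrak{X}|=2$ and $\operatorname{rank}[P(y|s)]_{y,s}<|\mathfrak{Y}|=2$; for a binary variable these matrices have probability-vector columns, so rank $\le 1$ forces all columns equal, i.e.\ $P(x|s)$ independent of $s$, which is $(2)$.

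Finally $(2)\Leftrightarrow(7)$. The direction $(7)\Rightarrow(2)$ is immediate: $I_{lin}=I(S:X)+I(S:Y)=0$ forces both terms to vanish. For $(2)\Rightarrow(7)$ I would run through \ref{translation} term by term: $I_{lin}=0$ directly; $I_{Q^0}=0$ by $(4)$, whence $I_{ss}=I_{Q^0}-I_{lin}=0$ and $CI_0=I_{Q^0}-I_{Q^*}=0$ (since $0\le I_{Q^*}\le I_{Q^0}$); then $SI=-I_{ss}-CI_0=0$, and by nonnegativity of the BROJA terms (Proposition \ref{BROJA fact}) together with $I(S:X)=SI+UI_{\setminus Y}$ we get $UI_{\setminus Y}=0$ and symmetrically $UI_{\setminus X}=0$; lastly $I_{ci}=0$, because by the coefficient formula attached to Lemma \ref{Ici linear} each coefficient $C(x,y)=\log\frac{Q^0(x,y_0)Q^0(x_0,y)}{Q^0(x_0,y_0)Q^0(x,y)}$ vanishes once $Q^0(x,y)=P(x)P(y)$, so $CI=I_{ci}+I_{cd}+CI_0=I_{cd}$.

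The main obstacle I anticipate is the bookkeeping in $(2)\Rightarrow(7)$: keeping the signs consistent between the BROJA and series-expansion decompositions and making sure each of the six listed quantities is genuinely pinned down, rather than any subtlety of argument. The only other point needing care is the easy-to-overlook verification that $Q^0$ really sits on the diagonal $\overline{t^{min}t^{Max}}$ in the $(2)\Rightarrow(5)$ step; everything else is a one-line computation or a direct citation of the results already established.
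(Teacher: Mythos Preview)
Your proposal is correct and far more complete than the paper's own proof, which only spells out two links in the chain --- $(1)\Rightarrow(2)$ via the rank condition, and $(4)\Rightarrow(5)$ --- leaving everything else implicit. Your organisation around condition $(2)$ and the full treatment of $(7)$ are exactly what a careful reader would want filled in.

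The one substantive difference is in how you obtain $(2)\Rightarrow(5)$. You route through Theorem~\ref{binary main theorem}(3), whereas the paper argues directly: once $(4)$ gives $(X,Y)\perp_{Q^0}S$, i.e.\ $Q^0(x,y|s)$ is independent of $s$, then moving along the line $\{g_s\equiv g\text{ constant in }s\}$ through $Q^0$ keeps all the conditionals $Q(x,y|s)$ equal, so $(X,Y)\perp_Q S$ persists and $I_Q=0$ along that line. This one-liner is self-contained and avoids the slight forward-reference tangle in your route: the paper's proof of Theorem~\ref{binary main theorem}(3) already points ahead to Proposition~\ref{rank condition} and the material surrounding the present lemma, and in fact only writes out the direction ``$Q^*$ not unique $\Rightarrow(2)$'' explicitly, so citing it for the reverse direction is leaning on a statement whose proof is partly deferred to here. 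Your argument still goes through (since you separately have $I_{Q^0}=0$ and verify $Q^0$ sits on the diagonal), but the paper's direct computation is cleaner and sidesteps the issue entirely.
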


\begin{proof}
	(1) $\implies$ (2) : the rank condition holds iff both $[P(x|s)]$ and $[P(y|s)]$ have rank $1$, i.e. $X\perp S$ and $Y\perp S$.

	\noindent (4) $\implies$ (5):  since $(X,Y) \perp_{Q^0} S$,  if $g_s$ is constant in $s$, then the condition remains.          
\end{proof}

This is the case shown in \ref{diag 0} in the introduction.

Often one tests such information measures with different magnitudes of variation of correlations under some quantification across $S= n$ in the mixture model to see if they capture these variations. For example in \cite{pola2003exact}, $\gamma$'s are used. However, from the lemma we see this poses conditions on $P$ especially when the state spaces are small. Using binomial correlations that is native in information theory can at least avoid this issue and conversely may serve as a measurement when one is interested to compare the strength of correlations across stimuli.

\begin{remark}
	In \cite{latham2005synergy}, it was argued that $I_{cd}$ (denoted $\Delta I$ in the paper) is ``an upper bound for the loss of information due to assuming the decoder''. (However, the versions stated in  \cite{merhav1994information}\cite{oizumi2010mismatched} might be easier to read.) We summarise the result for completeness.
	
	Considered in the sense of channel capacity, in \cite{merhav1994information}, the ``information rate for mismatched decoder'' is 
	\begin{equation}
		I^\ast(S:R) = \max_{P_S}\min_{P_{R|S}\in \mathcal{C}} I(S:R)    
	\end{equation} with the constraint set $\mathcal{C}$ corresponding to a certain given ``decoding metric''. The ``loss of information'' in \cite{latham2005synergy} is the difference of such from the information under the ``true'' probability, $I-I^\ast$.
	
	The authors argues that \begin{equation}
		I - I^\ast = \min_\beta D_{KL}(P_{S,R}||P^\beta_{S,R}),
	\end{equation} where \begin{equation}
		P^\beta(s,r) = \frac{P(r)P(s)Q^0(r|s)}{Z^\beta_r}
	\end{equation} is a one-parameter family, with $Z_r^\beta$ being a normalising constant, and $P^\beta_R = P_R$. In particular, taking $\beta = 1$, $I_{cd} = D_{KL}(P_{S,R}||P^1_{S,R}) \geq I= I^\ast$ gives an upper bound.  
	
	In our framework, the ``loss'' of information due to only considering the marginals is bounded by $CI = I(Q) - I(Q^*)$. % We can further maximise over P_S. 
\end{remark}

\section{Extra lemmas}

We include some basic formulae on derivatives of entropy for organised reference and also to emphasise the point of defining binomial correlations in \ref{bin corr def}. The proofs are by direct calculations. 

\subsection{Derivatives of entropy}\label{deriv entropy}
\begin{lemma}
    Given some discrete random variable $X = x_0, \cdots, x_n$ along with its probability simplex $\Delta^n$, the directional derivative in the direction $v = (v_0, \cdots,v_n)$ at $P\in \Delta^n$ such that $\sum v_i = 0$, i.e. $v$ is tangent to $\Delta^n$, is \[
        D_v H(Q)(X) = -\log \prod_{i = 0}^{n} Q(x_i)^{v_i}.
    \]
\end{lemma}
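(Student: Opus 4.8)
The plan is to compute the derivative directly from the definition, treating Shannon entropy as a smooth function on the relative interior of the simplex. Write $H(Q)(X) = -\sum_{i=0}^{n} Q(x_i)\log Q(x_i)$, and for the fixed tangent vector $v=(v_0,\dots,v_n)$ with $\sum_i v_i = 0$ consider the affine path $Q_t = Q + tv$; when $Q$ lies in the relative interior (all $Q(x_i)>0$) this path stays in $\Delta^n$ for small $|t|$, so $D_v H(Q)(X) = \left.\frac{d}{dt}\right|_{t=0} H(Q_t)(X)$ is well defined.

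First I would differentiate term by term. Since $\frac{d}{dt}Q_t(x_i) = v_i$, the product rule gives $\frac{d}{dt}\bigl[Q_t(x_i)\log Q_t(x_i)\bigr] = v_i\log Q_t(x_i) + Q_t(x_i)\cdot\frac{v_i}{Q_t(x_i)} = v_i\log Q_t(x_i) + v_i$. Summing over $i$ yields
\[
\frac{d}{dt} H(Q_t)(X) = -\sum_{i=0}^{n} v_i\log Q_t(x_i) \;-\; \sum_{i=0}^{n} v_i .
\]
The key (and essentially only) observation is that the second sum vanishes identically because $v$ is tangent to the simplex, i.e. $\sum_i v_i = 0$. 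Evaluating at $t=0$ then gives $D_v H(Q)(X) = -\sum_{i=0}^{n} v_i\log Q(x_i) = -\log\prod_{i=0}^{n} Q(x_i)^{v_i}$, which is the claimed formula.

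There is no real obstacle here: the content of the lemma is precisely the cancellation of the $\sum_i v_i$ term, and everything else is the chain rule. The only point deserving a remark is the domain of validity — the formula presupposes $Q(x_i)>0$ for all $i$ so that $\log Q(x_i)$ and the difference quotient are finite; on the boundary one reads the right-hand side as a limit (possibly $\pm\infty$), consistently with the treatment of the derivatives of $I$ elsewhere in the paper. The same computation, carried out coordinatewise on each factor of a product of simplices, yields the analogous statements for joint and conditional entropies used in Lemma~\ref{deriv I}.
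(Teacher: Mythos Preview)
Your proof is correct and is exactly the ``direct calculation'' the paper has in mind: differentiate $-\sum_i Q_t(x_i)\log Q_t(x_i)$ along $Q_t=Q+tv$, and use $\sum_i v_i=0$ to kill the extra $\sum_i v_i$ term. There is nothing to add.
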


\begin{lemma}
    Following the setup in the previous lemma, if furthermore, $X = (A,B)$ is the joint of two random variables with states $(a_k,b_l)$, and rewrite $v = [w_{kl}]_{kl}$ accordingly, then \[
        D_v H(Q)(A) = -\log \prod_{k}  Q(a_k)^{\sum_l  w_{kl}},
    \] with $H(Q)(A)$ considered (lifted) as a functional on $\Delta^n$.
\end{lemma}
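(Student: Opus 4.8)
The statement is an immediate consequence of the previous lemma together with the chain rule, so the plan is to reduce to that lemma rather than recompute from scratch. First I would set up notation: write $q_{kl} = Q(a_k,b_l)$ for the coordinates on $\Delta^n \cong \Delta^{\mathfrak{A}\times\mathfrak{B}}$, let $\mathrm{marg}\colon \mathbb{R}^{\mathfrak{A}\times\mathfrak{B}}\to\mathbb{R}^{\mathfrak{A}}$ be the linear map $[q_{kl}]_{kl}\mapsto \big[\sum_l q_{kl}\big]_k$, and recall that ``$H(Q)(A)$ lifted to $\Delta^n$'' means precisely the composite $H_{\mathfrak{A}}\circ\mathrm{marg}$, where $H_{\mathfrak{A}}$ is the entropy functional on $\Delta^{|\mathfrak{A}|-1}$. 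Under this identification $\mathrm{marg}(Q) = Q_A$, the marginal distribution of $A$.

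Next I would push the direction $v = [w_{kl}]_{kl}$ forward: since $\mathrm{marg}$ is linear, its differential at any point is $\mathrm{marg}$ itself, so the chain rule gives
\[
    D_v\big(H_{\mathfrak{A}}\circ\mathrm{marg}\big)(Q) = D_{\bar v}H_{\mathfrak{A}}(Q_A), \qquad \bar v := \mathrm{marg}(v) = \Big[\textstyle\sum_l w_{kl}\Big]_k .
\]
Before invoking the previous lemma I must check its hypothesis, namely that $\bar v$ is tangent to the simplex $\Delta^{|\mathfrak{A}|-1}$, i.e. that its components sum to zero; this holds because $\sum_k \bar v_k = \sum_{k,l} w_{kl} = 0$, which is exactly the assumed tangency of $v$ to $\Delta^n$. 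Applying the previous lemma to the random variable $A$, the point $Q_A$, and the tangent direction $\bar v$ then yields
\[
    D_{\bar v}H_{\mathfrak{A}}(Q_A) = -\log\prod_k Q(a_k)^{\bar v_k} = -\log\prod_k Q(a_k)^{\sum_l w_{kl}},
\]
which is the claimed formula.

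If one prefers an entirely self-contained argument, the alternative is a direct computation: differentiate $-\sum_k \big(\sum_l q_{kl}\big)\log\big(\sum_l q_{kl}\big)$ coordinatewise to get $\partial_{q_{kl}} = -\big(\log Q(a_k)+1\big)$, so $D_v H(A) = -\sum_{k,l} w_{kl}\log Q(a_k) - \sum_{k,l} w_{kl}$, and the second sum vanishes by tangency, leaving $-\sum_k\big(\sum_l w_{kl}\big)\log Q(a_k)$. Either way there is no real obstacle here; the only point requiring a moment of care is the bookkeeping that ``lift to $\Delta^n$'' is the composite with the marginalization map and that tangency is preserved under it, which is what makes the reduction to the previous lemma legitimate.
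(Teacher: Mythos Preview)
Your proposal is correct and follows exactly the paper's approach: the paper's one-line proof simply observes that the pushed-forward vector $\big(\sum_l w_{kl}\big)_k$ sums to zero and then invokes the previous lemma, which is precisely your chain-rule reduction via the linear marginalization map.
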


\begin{proof}
    The vector $w := (\sum_l  w_{kl})_{k}$ sum to zero and satisfies the previous lemma. 
\end{proof}

\begin{lemma}
    Continuing with the setup, \[
        D_v I(Q)(A:B) = \log\frac{\prod_{kl} Q(a_k,b_l)^{w_{kl}}}{\prod_k Q(a_k)^{\sum_l w_{kl}} \prod_{l} Q(b_l)^{\sum_k w_{kl}}}.
    \]
\end{lemma}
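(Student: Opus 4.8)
The plan is to reduce the statement to the two preceding lemmas via the standard identity
\begin{equation}
I(Q)(A:B) = H(Q)(A) + H(Q)(B) - H(Q)(A,B),
\end{equation}
read as an equality of functionals on the simplex $\Delta^n$ whose coordinates are the joint probabilities $Q(a_k,b_l)$ (the marginals $Q(a_k)$ and $Q(b_l)$ being linear functions of these coordinates, each of the three terms is a well-defined analytic function on the interior of $\Delta^n$). Since the directional derivative $D_v(\cdot)$ at a fixed interior point is $\mathbb{R}$-linear in the functional, I would simply write
\begin{equation}
D_v I(Q)(A:B) = D_v H(Q)(A) + D_v H(Q)(B) - D_v H(Q)(A,B).
\end{equation}

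Next I would evaluate each of the three summands. For $D_v H(Q)(A)$ I invoke the second lemma directly, giving $D_v H(Q)(A) = -\log\prod_k Q(a_k)^{\sum_l w_{kl}}$; applying the same lemma with the roles of $A$ and $B$ exchanged (equivalently, relabelling $w_{kl}\mapsto w_{lk}$) yields $D_v H(Q)(B) = -\log\prod_l Q(b_l)^{\sum_k w_{kl}}$. For $D_v H(Q)(A,B)$ I apply the first lemma to the single discrete variable $X=(A,B)$ with tangent direction $v=[w_{kl}]_{kl}$, obtaining $D_v H(Q)(A,B) = -\log\prod_{kl} Q(a_k,b_l)^{w_{kl}}$. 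One small bookkeeping point worth noting explicitly is that the hypothesis $\sum_{kl} w_{kl}=0$ (i.e. $v$ tangent to $\Delta^n$) guarantees that the contracted vectors $(\sum_l w_{kl})_k$ and $(\sum_k w_{kl})_l$ also sum to zero, so the first lemma legitimately applies to the marginal entropies as well — this is exactly the content already used inside the proof of the second lemma.

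Finally I would substitute the three expressions and combine logarithms:
\begin{equation}
D_v I(Q)(A:B) = \log\prod_{kl} Q(a_k,b_l)^{w_{kl}} - \log\prod_k Q(a_k)^{\sum_l w_{kl}} - \log\prod_l Q(b_l)^{\sum_k w_{kl}},
\end{equation}
which is precisely the claimed formula after merging into a single fraction. I do not anticipate any genuine obstacle here: the only things to be careful about are (i) restricting to the interior of the simplex so that all logarithms are defined and the functionals are differentiable, and (ii) the index-contraction remark above ensuring the earlier lemmas are applicable verbatim. The computation is otherwise a one-line assembly.
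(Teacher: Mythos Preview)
Your proposal is correct and follows exactly the approach the paper uses: the paper's proof is the single line ``Use $I(A:B) = H(A) + H(B) - H(A,B)$,'' and you have simply spelled out the substitution of the two preceding lemmas into this identity. There is nothing to add.
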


\begin{proof}
    Use $I(A:B) = H(A) + H(B) - H(A,B)$.
\end{proof}

\subsection{Linear intersection under the corner coordinates}

Given two random variables $X,Y$ with state spaces of size $M,N$ respectively, the probability simplex $\Delta^{MN}$ with coordinates $p_{ij}$, and the Segre embedding $\sigma:\Delta^M\times\Delta^N \to \Sigma \subset\Delta^{MN}$, let $\mathcal{V}$ be the linear subspace spanned by all
$$
    V_{(x,y);(s,x',y')} = (\delta_{(x,y)} + \delta_{(,x',y')}) - (\delta_{(x',y)} + \delta_{(x,y')}).
$$ Then for $Q\in \Delta^{MN}$, the subspace $(Q+\mathcal{V})\cap\Delta^{MN}$ consists of all distributions $Q$ such that $Q'_X = Q_X$ and $Q'_Y = Q_Y$. Here we consider a probability simplex as the nonnegative real part of a projective space. 

For $\Delta^M\times\Delta^N$, we use the corner coordinates: \begin{equation}
    \sigma: ((s_1, \cdots, s_{M-1}), (t_1, \cdots, t_{N-1})) \mapsto [s_it_j]_{i\leq M,j\leq M},
\end{equation} where $s_M = 1- \sum_{i<M}s_i $ and $t_M = 1 - \sum_{j<N}t_j$,  for $s_i, t_j, \sum_{i<M}s_i, \sum_{j<N}t_j \leq 1$ (i.e. the nonnegative cone under the hyperplane $\sum x_i = 1$.) 

\begin{lemma}\label{linear intersect}
    If $H$ is a hyperplane containing $\mathcal{V}$, then $\sigma(H\cap\Sigma)$ is linear under the corner coordinates. 
\end{lemma}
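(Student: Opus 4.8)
The plan is to reduce the claim to a direct computation in the corner coordinates, exploiting the fact that a hyperplane $H$ containing $\mathcal{V}$ is defined by an equation that only depends on the marginals. First I would parametrise $\Sigma$ via the Segre map $\sigma:((s_1,\dots,s_{M-1}),(t_1,\dots,t_{N-1}))\mapsto [s_it_j]$, with $s_M = 1-\sum_{i<M}s_i$ and $t_N = 1-\sum_{j<N}t_j$. A hyperplane $H = \{[c_{ij}]\cdot Q = 0\}$ with $n := [c_{ij}]$ contains $\mathcal{V}$ if and only if $n\cdot V_{(x,y);(x',y')} = 0$ for all the spanning directions, i.e. $c_{ij} + c_{i'j'} = c_{i'j} + c_{ij'}$ for all $i\neq i'$, $j\neq j'$. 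This is exactly the condition that $c_{ij}$ splits additively as $c_{ij} = u_i + v_j$ for some vectors $u\in\mathbb{R}^M$, $v\in\mathbb{R}^N$ (fix $c_{MN}$, set $u_i = c_{iN}-c_{MN}$, $v_j = c_{Mj}$, and check the relation forces $c_{ij}=u_i+v_j$).

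Next I would substitute the Segre parametrisation into the equation of $H$. On $\Sigma$ we have $q_{ij} = s_it_j$, so
\begin{equation}
    n\cdot\sigma(s,t) = \sum_{i,j}(u_i+v_j)s_it_j = \sum_i u_i s_i \sum_j t_j + \sum_j v_j t_j \sum_i s_i = \sum_i u_i s_i + \sum_j v_j t_j,
\end{equation}
using $\sum_i s_i = \sum_j t_j = 1$. Hence $\sigma^{-1}(H\cap\Sigma)$ is cut out inside $\Delta^M\times\Delta^N$ by the single affine-linear equation $\sum_i u_i s_i + \sum_j v_j t_j = 0$ in the corner coordinates $(s_1,\dots,s_{M-1},t_1,\dots,t_{N-1})$ (after substituting $s_M, t_N$ this remains affine-linear). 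That is precisely the assertion that $\sigma(H\cap\Sigma)$, viewed in these coordinates, is linear. For the special case $M=N=2$ needed earlier in the paper, $u = (u_1,u_2)$, $v=(v_1,v_2)$ and the equation becomes $u_1 s + v_1 t + \text{const} = 0$, a line, matching the slope computations in the body.

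The only genuine point requiring care — and the step I would expect to be the main obstacle, though it is still elementary — is the algebraic lemma that $n\cdot\mathcal{V} = 0$ forces the additive decomposition $c_{ij} = u_i + v_j$; one must check that the relations $c_{ij}+c_{i'j'} = c_{i'j}+c_{ij'}$ over all valid index pairs are not merely necessary but sufficient, which follows by a telescoping/induction argument reducing every $c_{ij}$ to $c_{iN}, c_{Mj}, c_{MN}$. After that, the vanishing of the ``cross'' term $\sum_{i,j}u_iv_j s_i t_j$ is automatic from the normalisation $\sum s_i = \sum t_j = 1$, and nothing further is needed; in particular no Bézout-type input is required here, only the bilinear structure of the Segre map together with the marginal constraints encoded by $\mathcal{V}$.
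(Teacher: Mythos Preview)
Your argument is correct and follows essentially the same route as the paper: write $H=\{\sum c_{ij}p_{ij}=0\}$, use the containment of $\mathcal{V}$ to constrain the $c_{ij}$, substitute the Segre parametrisation, and observe that only affine-linear terms in $(s,t)$ survive. The one cosmetic difference is that you first package the constraint as the additive decomposition $c_{ij}=u_i+v_j$ and then use $\sum_i s_i=\sum_j t_j=1$ to kill the bilinear part, whereas the paper substitutes $s_M,t_N$ directly and checks that the coefficient of each $s_it_j$ (for $i<M$, $j<N$) is exactly $c_{ij}-c_{Mj}-c_{iN}+c_{MN}=0$; these are two phrasings of the same computation.
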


\begin{proof}
    Write $H: \sum_{i,j} a^{ij}p_{ij}$. So $a^{ij} - a^{ij'} - a^{i'j} + a^{i'j'}= 0$ for any $i,j,i',j'$. Plugging in the corner coordinates, we have that the intersection satisfies $\sum a^{ij}s_it_j = 0$. The coefficient of each $s_it_j$ term with $i<M, j<M$ is $a^{ij} - a^{ij'} - a^{i'j} + a^{i'j'} = 0$.
\end{proof}

\subsection{The tare map}

\begin{definition} % 歸零膏  tau for tortoise not tare
	Given two binary r.v.s $X,Y$, their probability simplex is $\Delta^3$. The independence model is the Segre variety $\Sigma\subset\Delta^3$. There is a unique direction $V$ in which marginals $Q_X, Q_Y$ are fixed. The \textbf{tare map} is the projection \[\tau : \Delta^3 \to \Sigma\] in the direction $V$. 
\end{definition}

Recall $\sigma$ maps from $\Delta^1\times\Delta^1$ under the corner coordinates to $\Sigma$, all deemed as embedded in Euclidean spaces. 

\begin{lemma}
	Given $P,Q\in \Sigma$ and $\lambda, \mu$ such that $\lambda+\mu = 1$, then  \[ \tau(\lambda P + \mu Q) = \lambda \sigma^{-1}(P) + \mu \sigma^{-1}(Q).\]
\end{lemma}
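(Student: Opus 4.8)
The plan is to unwind both sides in the corner coordinates and use the explicit form of $\sigma$ and of the direction $V$. Write $P = \sigma(s_1,t_1)$ and $Q = \sigma(s_2,t_2)$, so that in the $2\times 2$ matrix picture
\[
P = \begin{bmatrix} s_1 t_1 & s_1(1-t_1) \\ (1-s_1)t_1 & (1-s_1)(1-t_1) \end{bmatrix},
\qquad
Q = \begin{bmatrix} s_2 t_2 & s_2(1-t_2) \\ (1-s_2)t_2 & (1-s_2)(1-t_2) \end{bmatrix},
\]
and $V = \bigl[\begin{smallmatrix} 1 & -1 \\ -1 & 1 \end{smallmatrix}\bigr]$. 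The point $\lambda P + \mu Q$ (with $\lambda+\mu=1$) is a probability distribution on $\mathfrak{X}\times\mathfrak{Y}$ whose $X$- and $Y$-marginals are $\lambda P_X + \mu Q_X$ and $\lambda P_Y + \mu Q_Y$; since $\tau$ is precisely the projection along $V$ onto $\Sigma$, and $\Sigma$ meets each fibre $\{R + gV\}$ of this projection in the unique independence distribution with the prescribed marginals, $\tau(\lambda P + \mu Q)$ is the independence distribution with $X$-marginal $\lambda P_X + \mu Q_X$ and $Y$-marginal $\lambda P_Y + \mu Q_Y$.

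First I would make the marginal statement precise: for a point $\sigma(s,t)$ the $X$-marginal is $(s,1-s)$ and the $Y$-marginal is $(t,1-t)$, so the marginals of $\lambda P + \mu Q$ are $(\lambda s_1 + \mu s_2,\ 1-\lambda s_1-\mu s_2)$ and $(\lambda t_1 + \mu t_2,\ 1-\lambda t_1 - \mu t_2)$. Hence the unique independence distribution with these marginals is $\sigma(\lambda s_1 + \mu s_2,\ \lambda t_1 + \mu t_2)$. That equals $\tau(\lambda P + \mu Q)$ by the previous paragraph, so
\[
\tau(\lambda P + \mu Q) = \sigma(\lambda s_1 + \mu s_2,\ \lambda t_1 + \mu t_2) = \lambda\,\sigma(s_1,t_1) + \mu\,\sigma(s_2,t_2)\ \text{in corner coordinates},
\]
which is exactly $\lambda\,\sigma^{-1}(P) + \mu\,\sigma^{-1}(Q)$ after applying $\sigma^{-1}$, i.e. reading off the corner coordinates. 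The only genuinely non-formal point is the well-definedness of $\tau$: one must know that the fibre $\{R + gV : g\in\mathbb{R}\}$ of the projection meets $\Sigma$ in exactly one point, equivalently that fixing both marginals pins down a unique rank-one (independence) table — this is the binary $2\times2$ fact that $\det$ is affine in $g$ along such a fibre with nonzero leading coefficient, already used in Example~\ref{ex I(X:Y)}, together with the observation that moving along $V$ does not change the marginals (Proposition~\ref{parametrisation}(3), the $n=1$ case). I would cite those rather than re-derive them.

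The main obstacle, such as it is, is bookkeeping: one must be careful that the corner-coordinate identification $\sigma^{-1}$ is affine (it is, being the restriction of the linear map reading off $(p_{11}+p_{12},\ p_{11}+p_{21})$ up to the affine normalisation), so that "$\lambda\sigma^{-1}(P) + \mu\sigma^{-1}(Q)$" on the right-hand side literally means the corner coordinates $(\lambda s_1+\mu s_2,\ \lambda t_1+\mu t_2)$; once that is granted the identity is immediate. I expect no hidden difficulty beyond confirming that the projection direction $V$ and the marginal-fixing constraints are genuinely complementary, which is the content of $\Sigma$ being doubly ruled and $V$ being transverse to both rulings.
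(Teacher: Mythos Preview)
Your argument is correct but it proceeds along a genuinely different route from the paper. The paper simply sets $R=\lambda P+\mu Q$ and $S=\sigma(\lambda\sigma^{-1}(P)+\mu\sigma^{-1}(Q))$ and checks by brute expansion that
\[
R-S=\lambda\mu\,(s-s')(t-t')\begin{bmatrix*}[r]1&-1\\-1&1\end{bmatrix*},
\]
which is a multiple of $V$ and hence $\tau(R)=S$. You instead characterise $\tau$ as ``take the marginals and form the independence table,'' then use that marginals depend linearly on the joint, so the marginals of $\lambda P+\mu Q$ are the convex combination of the marginals of $P$ and $Q$; in corner coordinates this is exactly $(\lambda s_1+\mu s_2,\lambda t_1+\mu t_2)$, and applying $\sigma$ finishes. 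Your approach explains \emph{why} the identity holds (linearity of marginalisation plus the fact that $\sigma^{-1}$ is the marginal-reading map), and it generalises more transparently to larger $|\mathfrak{X}|,|\mathfrak{Y}|$; the paper's approach has the virtue of producing the explicit constant $\lambda\mu(s-s')(t-t')$, which records quantitatively how far the mixture sits from $\Sigma$. One small clean-up: along the fibre $R+gV$ the determinant is not merely ``affine with nonzero leading coefficient'' but exactly $\det R + g$ (since the entries of $R$ sum to $1$), which is what guarantees the single intersection with $\Sigma$.
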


\begin{proof}
	Write $R = \lambda P + \mu Q$ and $S = \sigma(\lambda \sigma^{-1}(P) + \mu \sigma^{-1}(Q))$ and let $P,Q$ have corner coordinates $(s,t), (s',t')$ respectively. Then one can check \begin{equation}
		 R - S = \lambda\mu (s-s')(t-t')\begin{bmatrix*}[r]
		 	1 & -1 \\
		 	-1 & 1
		 \end{bmatrix*}.
	\end{equation}
\end{proof}

\begin{lemma}
	Given $P_1, \cdots, P_k \in \Delta^3$ and $\lambda_1, \cdots, \lambda_k$ such that $\sum \lambda_i = 1$, \[\tau(\sum \lambda_i P_i ) = \tau(\sum \lambda_1\tau(P_i)).\]
\end{lemma}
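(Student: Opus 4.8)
The plan is to reduce the identity to the single structural fact that $\tau$ is a projection in the direction $V$, so that $\tau$ is constant along every line parallel to $V$; the statement then says only that the two points $\sum_i\lambda_i P_i$ and $\sum_i\lambda_i\tau(P_i)$ lie on one such line. First I would record, straight from the definition, that $\tau(M)=M+t(M)V$ for a unique scalar $t(M)$: writing the four entries of $M$ as $a,b,c,d$, a one-line expansion gives $\det(M+gV)=\det M+g(a+b+c+d)$, which for a probability matrix ($a+b+c+d=1$) vanishes exactly at $g=-\det M$, so in fact $\tau(M)=M-(\det M)V$. Two consequences are all I need: (i) $\tau(M)-M\in\mathbb{R}V$ for every $M$, and (ii) since the line through $M$ in direction $V$ equals the line through $M+gV$ in direction $V$, we have $\tau(M+gV)=\tau(M)$ for every $g\in\mathbb{R}$.

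Then, given $P_1,\dots,P_k\in\Delta^3$ and $\lambda_i$ with $\sum_i\lambda_i=1$ (and $\lambda_i\ge 0$, as for a mixture), put $N:=\sum_i\lambda_i P_i$. Because $\mathbb{R}V$ is a linear subspace and each $\tau(P_i)-P_i$ lies in it by (i),
\[
\sum_i\lambda_i\tau(P_i)-N=\sum_i\lambda_i\bigl(\tau(P_i)-P_i\bigr)=gV
\]
for the scalar $g=-\sum_i\lambda_i\det P_i$. The condition $\sum_i\lambda_i=1$ keeps both $N$ and $\sum_i\lambda_i\tau(P_i)=N+gV$ inside $\Delta^3$ (by convexity, using $\tau(P_i)\in\Sigma\subseteq\Delta^3$), so $\tau$ is defined on each; applying (ii),
\[
\tau\Bigl(\sum_i\lambda_i\tau(P_i)\Bigr)=\tau(N+gV)=\tau(N)=\tau\Bigl(\sum_i\lambda_i P_i\Bigr),
\]
which is exactly the assertion.

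I do not expect a genuine obstacle here; the only thing to be careful about is presentational and domain-theoretic — namely that this lemma is much softer than the corner-coordinate statement of the preceding lemma (of which it is, after stripping the $\sigma^{-1}$ refinement, the $k$-term generalisation), and that one should check $\sum_i\lambda_i\tau(P_i)$ still lies where $\tau$ makes sense. If one wishes to bypass the closed form for $\tau$ altogether, the two displayed equalities use nothing beyond ``$\tau$ is a projection along $V$'' and the fact that $\mathbb{R}V$ is closed under linear combinations; outside $\Delta^3$ one extends $\tau$ to the affine hull by the same formula $\tau(M)=M-(\det M)V$ and the argument is unchanged.
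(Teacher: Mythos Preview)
Your argument is correct and is genuinely different from the paper's. The paper first reduces to the case $P_i\in\Sigma$ (the same parallel-lines observation you use implicitly), but then invokes the preceding corner-coordinate lemma to replace a convex combination of two points on $\Sigma$ by a single point on $\Sigma$, and finishes by induction on $k$. You bypass both the corner-coordinate computation and the induction entirely: once you record that $\tau(M)-M\in\mathbb{R}V$ and that $\tau$ is constant along $V$-lines, the identity is a one-line linear-algebra fact. Your route is strictly more elementary and makes transparent that the lemma uses nothing beyond ``$\tau$ is a projection along $V$''; the paper's route, by contrast, leans on the sharper preceding lemma (which actually identifies $\tau(\lambda P+\mu Q)$ in corner coordinates), so it carries more information along the way but is overkill for what is being claimed here. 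Your closed form $\tau(M)=M-(\det M)V$ is also a nice bonus not stated in the paper.
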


\begin{proof}
	Since the lines $\{P_i + lV| l \in \mathbb{R}\}$ are paralell, we can assume $P_i\in \Sigma$ for all $i$. By the previous lemma, we can replace convex combination of two points with one on $\Sigma$, and hence the proof concludes with induction. 
\end{proof}

Likely with some more work one can show the lemmas are true for general $X,Y$, which would imply Lemma \ref{linear intersect}. It is unclear so far what simpler explanation there is that the corner coordinates provide such convenience. 

\section{Proofs}
\subsection{Proof of \ref{n = 2 sig corr}}\label{n = 2 sig corr proof}
For two matrices 
\begin{equation}
    A = \begin{bmatrix}
        a & b \\ 
        c & d
    \end{bmatrix} \text{ and }
     B = \begin{bmatrix}
        a' & b' \\
        c' & d'
    \end{bmatrix},
\end{equation}
\begin{equation}
    \det(A + B) = \det A + \det B + \det \begin{bmatrix}
        a & b \\
        c' & d'
    \end{bmatrix} + 
    \det\begin{bmatrix}
        a' & b' \\
        c & d
    \end{bmatrix}.
\end{equation}

For $S,X,Y$ all binary, 
% need to mention putting rv inside sometimes is just for easy notation. And determinant is calculated under fixed ordering of states
\begin{equation}
    Q^0(x,y) =  Q^0(x,y,s_1) + Q^0(x,y,s_2).
\end{equation}
Since $X\perp_{Q^0} Y|S$, 
\begin{align}
    \det Q^0_{X,Y} &  = \begin{vmatrix}
       P(x'|s)P(y'|s)  & P(x'|s')P(y|s')  \\
        P(x|s)P(y'|s) & P(x|s')P(y|s')
    \end{vmatrix} + 
    \begin{vmatrix}
        P(x'|s')P(y'|s') & P(x'|s)P(y|s) \\
        P(x|s')P(y'|s') & P(x|s)P(y|s)
    \end{vmatrix} \\
      & = \det P_{Y|S} \det P_{X|S}.
\end{align}
Thus $\det Q^0_{X,Y} > 0$ if and only if the two determinants in the last line have the same sign. By column operation, \begin{equation}
    \det P_{Y|S} = \begin{vmatrix}
        P(y|s) & 1 \\
        P(y|s') & 1
    \end{vmatrix} = P(y|s) - P(y|s'),
    \end{equation} and same for $X$. % Explain in main text why this implies signal correlation: Thus the sign of $\det Q^0(X,Y)$ indicates the preferred state (i.e. the one with higher probability) for $X$ and $Y$ given $S$.

 \subsection{Calculations for \ref{discriminant thm}}
    Recall the assumption $0<s\leq t\leq\frac{1}{2}$. By \ref{a neq 0}, we assume $a\neq 0$ and take $a = 1$. Gr\"obner basis calculations gives that 
    \begin{equation}
        d_\pm = \frac{-B\pm\sqrt{B^2 - 4st(1-s)(1-t)}}{2(1-s)(1-t)}, 
    \end{equation} where 
    \[
        B = s(1-s) + t(1-t) +r(s-t)^2.
    \] One can verify that $B^2 - 4st(1-s)(1-t)\geq 0$. Observe when $r\to 0$,
    \begin{equation}\label{d to 0}
        d_+\to -\frac{1-t}{s}, \; d_- \to -\frac{t}{1-s} 
    \end{equation}

    \begin{lemma}
       For $a>0$, $\lim_{x\to\infty}(x - \sqrt{x^2- a}) = 0$.
    \end{lemma}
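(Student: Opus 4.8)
The plan is to prove this elementary limit by the standard conjugate–rationalisation trick, which converts the indeterminate difference into a quotient whose denominator blows up. First I would restrict attention to $x > \sqrt{a}$, so that $x^2 - a > 0$ and $\sqrt{x^2-a}$ is a well-defined positive real number; this is harmless since we are taking a limit as $x\to\infty$.

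Next I would multiply and divide by the conjugate:
\[
    x - \sqrt{x^2 - a} = \frac{\left(x - \sqrt{x^2-a}\right)\left(x + \sqrt{x^2-a}\right)}{x + \sqrt{x^2-a}} = \frac{x^2 - (x^2 - a)}{x + \sqrt{x^2-a}} = \frac{a}{x + \sqrt{x^2-a}}.
\]
Then I would bound the denominator below: for $x > \sqrt{a}$ we have $x + \sqrt{x^2-a} \geq x$, hence
\[
    0 < x - \sqrt{x^2-a} = \frac{a}{x + \sqrt{x^2-a}} \leq \frac{a}{x}.
\]
Letting $x\to\infty$ and applying the squeeze theorem gives $\lim_{x\to\infty}\left(x - \sqrt{x^2-a}\right) = 0$.

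There is essentially no obstacle here; the only point requiring a moment's care is ensuring $\sqrt{x^2-a}$ is real and nonnegative before rationalising, which is why I begin by passing to $x>\sqrt a$. This lemma is then applied to the expression for $d_\pm$ with $x = B/\big(2(1-s)(1-t)\big)$ (up to the positive factor $2(1-s)(1-t)$) to extract the limits of the slopes $d_\pm$ as $r\to\infty$, complementing the $r\to 0$ limits already recorded in \eqref{d to 0}.
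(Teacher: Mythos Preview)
Your proof is correct and uses essentially the same conjugate-rationalisation idea as the paper, just written out with an explicit squeeze-theorem bound rather than the paper's one-line observation that $(x-\sqrt{x^2-a})(x+\sqrt{x^2-a})=a$ while the second factor diverges.
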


    \begin{proof}
        \begin{equation}
            (x - \sqrt{x^2- a})(x + \sqrt{x^2- a}) = a.
        \end{equation} But $(x + \sqrt{x^2- a})$ diverges. 
    \end{proof} 

    The lemma above implies that \begin{equation}
        d_+ \to 0
    \end{equation} as $r\to\infty$. 
    
    \subsubsection{Case $s\neq t$}
    In the case $s\neq t$, one can compute to see that
    \begin{equation}
        \delta = \frac{t + (1-t)d}{s + (1-s)d}.
    \end{equation} Hence it is easy to see as $r\to 0$
    \begin{equation}
        \delta_+ \to -\frac{s}{1-t}=:\delta_+^0, \; \delta_- \to -\frac{t}{1-s} =: \delta_-^0,
    \end{equation} and as $r\to \infty$
    \begin{equation}
        \delta_+ \to \frac{t}{s}=:\delta_+^\infty, \; \delta_- \to \frac{1-t}{1-s} =: \delta_-^\infty. 
    \end{equation}

    Since $s<t$, \begin{equation}
        \frac{\partial\delta}{\partial d} = \frac{s-t}{(s + (1-s)d)^2} < 0
    \end{equation} for all $d$. Also \begin{equation}
        \frac{\partial d_\pm}{\partial r} = \frac{-(s-t)^2(-\sqrt{B^2 - 4st(1-s)(1-t)}\pm B)}{2(1-s)(1-t)\sqrt{B^2 - 4st(1-s)(1-t)}}
    \end{equation} so \begin{equation}
        \frac{\partial d_+}{\partial r}>0, \text{ and } \frac{\partial d_-}{\partial r}<0
    \end{equation} for all $r>0$. Consequently, $\delta_+$ decreases and $\delta_-$ increases monotonously in $r$. Hence $\delta_-$ ranges from $\delta_-^0 = -\frac{t}{1-s}$ to $\delta_-^\infty = \frac{1-t}{1-s}$. With the same analysis, we obtain the range for $\delta_+$. (Although to avoid infinite slopes, we consider the reciprocal $\frac{1}{\delta_+}$.)

    \subsubsection{Case $s=t$}
    Some cares need to be taken in some steps when considering $s=t$.

    Now \begin{equation}
        d = -\frac{s}{1-s}.
    \end{equation} With Gr\"obner basis for example, we have \begin{equation}
        \delta_\pm = -\frac{c-d}{b-d} = \frac{-1\pm\sqrt{(1-2s)^2+4rs(1-s)}}{1\pm\sqrt{(1-2s)^2+4rs(1-s)}},
    \end{equation} and \begin{equation}
        \delta_+^0 = -\frac{s}{1-s}, \; \delta_-^0 = -\frac{1-s}{s}, \; \delta_\pm^\infty = 1.
    \end{equation} Note \begin{equation}
        (1-2s)^2+4rs(1-s) = 1-4(1-r)s(1-s),
    \end{equation} so
    \begin{equation}
           \begin{cases}
        \sqrt{(1-2s)^2+4rs(1-s)} \leq 1, \; r\geq 1\\ 
        \sqrt{(1-2s)^2+4rs(1-s)} \geq 1, \; r\leq 1\\ 
    \end{cases}.     
    \end{equation}
    Since $f(x):\frac{x-1}{x+1}$ is monotone and increasing ($f'(x) = \frac{2}{(x+1)}$) for $x>0$,  \begin{equation}
        \delta_+ = \frac{\sqrt{(1-2s)^2+4rs(1-s)} - 1}{\sqrt{(1-2s)^2+4rs(1-s)} +1}
    \end{equation} increases monotonously in $r$ and ranges from $\delta_+^0 = -\frac{s}{1-s}$ to $\delta_+^\infty = 1$. The result for $\delta_-$ holds by symmetry.

\bibliographystyle{alpha}
\nocite{*}
\bibliography{pidref}

\end{document}